\documentclass[twoside]{article}

%
\usepackage[accepted]{aistats2021}
%


\usepackage[numbers]{natbib}

\usepackage{amssymb}
\usepackage{amsmath, bbm}
\usepackage{amsthm}
\usepackage{xr-hyper} 
\usepackage{hyperref}
\externaldocument{aistats_supp}

\usepackage{graphicx}

\usepackage[dvipsnames]{xcolor}
\usepackage{tikz}
\usetikzlibrary{arrows.meta}
\usetikzlibrary{decorations.pathreplacing}

\usepackage{algorithm}
\usepackage{algpseudocode}
\usepackage[dvipsnames]{xcolor}
\algnewcommand\algorithmicinput{\textbf{INPUT:}}
\algnewcommand\INPUT{\item[\algorithmicinput]}
\algnewcommand\algorithmicoutput{\textbf{OUTPUT:}}
\algnewcommand\OUTPUT{\item[\algorithmicoutput]}

\DeclareMathAlphabet{\mathpzc}{OT1}{pzc}{m}{it}

\usepackage{cleveref}

\newtheorem{theorem}{Theorem}
\newtheorem{lemma}[theorem]{Lemma}
\newtheorem{proposition}[theorem]{Proposition}
\newtheorem{corollary}[theorem]{Corollary}

\newtheorem{assume}{Assumption}
\newtheorem{model}{Model}

\allowdisplaybreaks

\usepackage[textsize=tiny]{todonotes}
\DeclareMathOperator*{\argmin}{arg\,min}

\newcommand{\hatp}{ \widehat {\mathcal P}} 


\begin{document}

%

%

\twocolumn[

\aistatstitle{Localizing Changes in High-Dimensional Regression Models}

\aistatsauthor{ Alessandro Rinaldo \And  Daren Wang \And Qin Wen  }

\aistatsaddress{ Department of Statistics  \& Data Science 
\\Carnegie Mellon University 
\And Department of Statistics 
\\ University of Chicago \And Department of Statistics
\\ University of Chicago  } 
\aistatsauthor{   Rebecca Willett \And Yi Yu }
\aistatsaddress{    Department of Statistics 
\\  University of Chicago   \And  Department of Statistics 
\\  University of Warwick }
]

\begin{abstract}
This paper addresses the problem of localizing change points in high-dimensional linear regression models with piecewise constant regression coefficients. We develop a dynamic programming approach to estimate the locations of the change points whose performance improves upon the current state-of-the-art, even as the dimensionality, the sparsity of the regression coefficients, the temporal spacing between two consecutive change points, and the magnitude of the difference of two consecutive regression coefficient vectors are allowed to vary with the sample size. Furthermore, we devise a computationally-efficient refinement procedure that provably reduces the localization error of preliminary estimates of the change points. We demonstrate minimax lower bounds on the localization error that nearly match the upper bound on the localization error of our methodology and show that the signal-to-noise condition we impose is essentially the weakest possible based on information-theoretic arguments. Extensive numerical results support our theoretical findings, and experiments on real air quality data reveal change points supported by historical information not used by the algorithm.
\end{abstract}

\section{INTRODUCTION}\label{sec-intro}

High-dimensional linear regression  modeling has been extensively applied and studied over the last two decades due to the technological advancements in collecting and storing data from a wide range of application areas, including biology, neuroscience, climatology, finance, cybersecurity, to name but a few. There exist now a host of methodologies available to practitioners to fit high-dimensional sparse linear models, and their properties have been thoroughly investigated and are now well understood. See \cite{buhlmann2011statistics} for recent reviews. 

In this paper, we are concerned with a non-stationary variant of the high-dim linear regression model in which the data are observed as a time series and the regression coefficients are piece-wise stationary, with changes occurring at unknown times. We formally introduce our model settings next.

\begin{model}\label{assume:change point regression model}
	Let the data  $\{(x_t, y_t)\}_{t = 1}^n \subset \mathbb{R}^p \times \mathbb{R}$ satisfy the model
		\[
			y_t = x_t^\top \beta_t^* + \varepsilon_t, \quad t =1, \ldots, n
		\]
		where $\{\beta_t^*\}_{t = 1}^n \subset \mathbb{R}^p$ is the unknown coefficient vector, $\{x_t\}_{t = 1}^n$ are independent and identically distributed mean-zero sub-Gaussian random  vectors with $\mathbb{E}(x_t x_t^{\top}) = \Sigma$,  and $\{\varepsilon_t\}_{t = 1}^n$ are independent mean-zero sub-Gaussian random variables with sub-Gaussian parameter bounded by $\sigma_{\varepsilon}^2$ and independent of $\{x_t\}_{t=1}^n$. 
	In addition, there exists a sequence of change points $1 = \eta_0 < \eta_1 < 
	\ldots < \eta_{K+1} = n$ such that $\beta_t^* \neq \beta_{t - 1}^*$, if and only if $t \in \{\eta_k\}_{k = 1}^K$.
\end{model}

We consider a high-dimensional framework where the features of the above change-point model  are allowed to change with the sample size $n$; see \Cref{assume:high dim coefficient} below for details. Given data sampled from \Cref{assume:change point regression model}, our main task is to develop computationally-efficient algorithms that can consistently estimate both the unknown number $K$ of change points and the time points $\{\eta_k\}_{k = 1}^K$, at which the regression coefficients change.  That is, we seek \emph{consistent} estimators $\{\hat{\eta}_k\}_{k = 1}^{\widehat{K}}$, such that, as the sample size $n$ grows unbounded, it holds with probability tending to 1 that 
	\[
		\widehat{K} = K \quad \mbox{and} \quad \epsilon = \max_{k = 1, \ldots, K} |\widehat \eta_k -\eta_k|  = o(\Delta),
	\]
	where $\Delta = \min_{k=1,\ldots,K_1} \eta_k - \eta_{k-1}$ is the minimal spacing between consecutive change-points.
	We refer to the quantity $\epsilon$ as the {\it localization error rate}.

The model detailed above has already been considered in the recent literature.  \cite{lee2016lasso}, \cite{kaul2018parameter}, \cite{lee2018oracle}, among others, focused on the cases where there exists at most one true change point.  \cite{leonardi2016computationally} and \cite{zhang2015change} considered multiple change points and devised consistent change point estimators, albeit with localization error rates worse than the one we establish in \Cref{eq:DP consistent for regression}.  \cite{wang2019statistically} also allowed for multiple change points in a regression setting and proposed a variant of the wild binary segmentation  (WBS) method \citep{fryzlewicz2014wild}, the performances thereof match the one of the procedure we study next.   More detailed comparisons are further commentary can be found in \Cref{sec-comparisons}.

In this paper, we make several theoretical and methodological contributions, summarized next, that improve the existing literature. 	

\begin{itemize}
\item We provide consistent change point estimators for \Cref{assume:change point regression model}.  We allow for model parameters to change with the sample size $n$, including the dimensionality of the data, the entry-wise sparsity of the coefficient vectors, the number of change points, the smallest distance between two consecutive change points, and the smallest difference between two consecutive different regression coefficients.   To the best of our knowledge, the theoretical results we provide in this paper are the sharpest in the existing literature.  Furthermore, the proposed algorithms, based on the general framework described  in \eqref{eq-wide-p}, can be implemented using dynamic programming approaches and are computationally efficient.
\item We devise a additional second step (\Cref{algorithm:LR}), called local refinement, that is guaranteed to deliver an even better localization error rate, even though directly optimizing \eqref{eq-wide-p} already provides the sharpest rates among the ones existing in the literature.
\item We present information-theoretic lower bounds on both detection and localization, establishing the fundamental limits of localizing change points in \Cref{assume:change point regression model}.  To the best of our knowledge, this is the first time such results are developed for \Cref{assume:change point regression model}.  The lower bounds on the localisation and detection nearly match the upper bounds we obtained under mild conditions.  
\item We present extensive experimental results including simulated data and real data analysis, supporting our theoretical findings, and confirming the practicality of our procedures.
\end{itemize}


Throughout this paper, we adopt the following notation.  For any set $S$, $|S|$ denotes its cardinality.  For any vector $v$, let $\|v\|_2$, $\|v\|_1$, $\|v\|_0$ and $\|v\|_{\infty}$ be its $\ell_2$-, $\ell_1$-, $\ell_0$- and entry-wise maximum norms, respectively; and let $v(j)$ be the $j$th coordinate of $v$.  For any square matrix $A \in \mathbb{R}^{n \times n}$, let $\Lambda_{\min}(A)$ and $ \Lambda_{\max}(A)$ be the smallest and largest eigenvalues of matrix $A$, respectively.    For any pair of integers $s, e \in \{0, 1, \ldots, n\}$ with $s < e$, we let $(s, e] = \{s + 1, \ldots, e\}$ and $[s, e] = \{s , \ldots, e\}$ be the corresponding integer intervals.

\section{METHODS}

\subsection{A Dynamic Programming Approach}

To achieve the goal of obtaining consistent change point estimators, we adopt a dynamic programming approach, whiuch we summarize next.  Let $\mathcal{P}$ be an integer interval partition of $\{1, \ldots, n\}$ into $K_{\cal P}$ intervals, i.e.
	\begin{align*}
		\mathcal{P} = \big\{\{1, \ldots, i_1-1\}, \{i_1, \ldots, i_2-1\}, \ldots, \{i_{K_{\mathcal{P}}-1}, \\
		\ldots, i_{K_{\mathcal{P}}} - 1\}\big\},
	\end{align*}
	for some integers $1 < i_1 < \cdots < i_{K_{\mathcal{P}}} = n + 1$, where $K_{\mathcal{P}} \geq 1$.  For a positive tuning parameter $\gamma > 0$, let 
	\begin{equation}\label{eq-wide-p}
		\widehat{\mathcal{P}} \in \argmin_{\mathcal{P}} \left\{\sum_{I \in \mathcal{P}} \mathcal{L}(I) + \gamma |\mathcal{P}|\right\}, 
	\end{equation}
	where $\mathcal{L}(\cdot)$ is an appropriate loss function to be specified below, $|\mathcal{P}|$ is the cardinality of $\mathcal{P}$ and the minimization is taken over all possible interval partitions of $\{1, \ldots, n\}$.
	
The change point estimator resulting from the solution to \eqref{eq-wide-p} is simply obtained by taking all the left endpoints of the intervals $I \in \widehat{\mathcal{P}}$, except $1$.  The optimization problem \eqref{eq-wide-p} is known as the \emph{minimal partition problem} and can be solved using dynamic programming with an overall computational cost of order $O(n^2 \mathcal T(n))$, where $\mathcal T(n)$ denotes the computational cost of solving $\mathcal{L}(I)$ with $|I| = n$ \citep[see e.g.~Algorithm 1 in][]{FriedrichEtal2008}.

To specialize the dynamic programming algorithm to the  high-dimensional linear regression model of interest  by setting the loss function to be
\eqref{eq-wide-p} with
	\begin{equation}\label{eq-loss-1}
		\mathcal L(I) = \sum_{t \in I}(y_t - x_t^{\top} \widehat \beta^\lambda_I)^2,
	\end{equation}
	where
	\begin{align}
		\widehat \beta^\lambda_I = \argmin_{v \in \mathbb R^p} \bigg\{\sum_{t \in I} (y_t - x_t^{\top}v)^2  \nonumber \\
		+ \lambda \sqrt{\max\{|I|, \, \log(n \vee p)\}} \| v\|_1\bigg\},\label{eq-beta-1}
	\end{align}
	and $\lambda \geq 0$ is a tuning parameter.  The penalty is multiplied by the quantity $\max\{|I|, \, \log(n \vee p)\}$ in order to fulfill certain types of large deviation inequalities that are needed to ensure consistency.  

Algorithms based on dynamic programming are widely used in the change point detection literature.  \cite{FriedrichEtal2008}, \cite{KillickEtal2012}, \cite{Rigaill2010}, \cite{MaidstoneEtal2017}, \cite{wang2018univariate}, among others, studied dynamic programming approaches for change point analysis involving a univariate time series with piecewise-constant means. \cite{leonardi2016computationally} examined high-dimensional linear regression change point detection problems by using a version of dynamic programming approach.

\subsection{Local Refinement}\label{sec-lr}

We will show later in \Cref{eq:DP consistent for regression} that the localization error afforded by the dynamic programming approach in \eqref{eq-wide-p}, \eqref{eq-loss-1}, and \eqref{eq-beta-1} is linear in $K$, the number of change points.  Although the corresponding localization rate is already sharper than any other rates previously established in the literature (see \Cref{sec-comparisons}), it is possible to improve it by removing the dependence on $K$ through an additional step, which we refer to as local refinement, detailed in \Cref{algorithm:LR}.

\begin{algorithm}[htbp]
\begin{algorithmic}
	\INPUT Data $\{(x_t, y_t)\}_{t=1}^{n}$, $\{\widetilde{\eta}_k\}_{k = 1}^{\widetilde{K}}$ ,  $\zeta > 0$.
	\State $(\widetilde{\eta}_0, \widetilde{\eta}_{\widetilde{K} + 1}) \leftarrow (0, n)$
	\For{$k = 1, \ldots, \widetilde{K}$}  
		\State $(s_k, e_k) \leftarrow (2\widetilde{\eta}_{k-1}/3 + \widetilde{\eta}_{k}/3, \widetilde{\eta}_{k}/3 + 2\widetilde{\eta}_{k+1}/3)$
		\State 
		\begin{align}
			& \left(\widehat{\beta}_1, \widehat{\beta}_2, \widehat{\eta}_k\right) \leftarrow \argmin_{\substack{\eta \in \{s_k + 1, \ldots, e_k - 1\} \\ \beta_1, \beta_2 \in \mathbb{R}^{p},\, \beta_1 \neq \beta_2}}   \Bigg\{\sum_{t = s_k + 1}^{\eta}\bigl\|y_{t} - \beta_1^{\top} x_t\bigr\|^2_2 \nonumber \\
			& \hspace{0.5cm}+ \sum_{t = \eta + 1}^{e_k}\bigl\|y_t - \beta_2 x_t\bigr\|_2^2 \nonumber \\
		    & \hspace{0.5cm} + \zeta  \sum_{i = 1}^p \sqrt{(\eta - s_k)(\beta_1)_{i}^2 + (e_k - \eta)(\beta_2)_{i}^2}\Bigg\} \label{eq-g-lasso}
		\end{align}
	\EndFor
	\OUTPUT $\{\widehat{\eta}_k\}_{k = 1}^{\widetilde{K}}$.
\caption{Local Refinement. }
\label{algorithm:LR}
\end{algorithmic}
\end{algorithm} 
	
\Cref{algorithm:LR} takes a sequence of preliminary change point estimators $\{\widetilde{\eta}_k\}_{k = 1}^{\widetilde{K}}$, and refines each of the estimator $\widetilde{\eta}_k$ within the interval $(s_k, e_k)$, which is a shrunken version of $(\widetilde{\eta}_{k-1}, \widetilde{\eta}_{k+1})$ (the constants $2/3$ and $1/3$ specifying the shrinking factor in the definition of $(\widetilde{\eta}_{k-1}, \widetilde{\eta}_{k+1})$ are not special and can be replaced by other values without affecting the rates).  The shrinkage is applied to eliminate false positives, which are more likely to occur in the immediate proximity of a preliminary estimate of a change point.  Since the refinement is done locally within each disjoint interval, the procedure is parallelizable.  A group Lasso penalty is deployed in~\eqref{eq-g-lasso}, and this is key to the success of the refinement.  Intuitively, the group Lasso penalty integrates the information that the regression coefficients are piecewise-constant within each coordinate.  Previously, \cite{wang2019statistically} also proposed a similar  local screening algorithm based on the group Lasso estimators to refine the estimates of the  regression change points.   While  \cite{wang2019statistically} assumed all the covariates to be uniformly bounded,  we show that \Cref{algorithm:LR} can achieve optimal localization error rates  in a more general setting.

\section{MAIN RESULTS}\label{sec-theory}
In this section, we derive  high-probability bounds on the localization errors of our main procedure based on the dynamic programming algorithm as detailed in equations \ref{eq-wide-p}, \ref{eq-loss-1}, and \ref{eq-beta-1}, and of the local refinement procedure of  \Cref{algorithm:LR}.

\subsection{Assumptions}
 
 We begin by stating the assumptions we require in order to derive localization error bounds.
 
\begin{assume} \label{assume:high dim coefficient}  
	Consider the model defined in \Cref{assume:change point regression model}. We assume that, for some fixed positive constants $C_{\beta}$, $c_x$, $C_x$, $\xi$, and   $C_{\mathrm{SNR}} $ the following holds:

\noindent \textbf{a.} (Sparsity).  Let $d_0 = |S|$.  There exists a subset $S \subset \{1, \ldots, p\}$ such that 
			\[
				\beta_t^*(j) = 0, \quad t = 1, \ldots, n, \quad j \in S^c = \{1, \ldots, p\} \setminus S.
			\]

\noindent \textbf{b.} (Boundedness). For some absolute constant $C_{\beta} > 0$, $\max_{t = 1, \ldots, n} \|\beta_t^*\|_{\infty} \leq C_{\beta}$.
	
\noindent \textbf{c.} (Minimal eigenvalue).  We have that $\Lambda_{\min}(\Sigma) = c_x^2 > 0$  and $\max_{j = 1, \ldots, p} (\Sigma)_{jj} = C_x^2 > 0$.

\noindent \textbf{d.} (Signal-to-noise ratio).  Let $\kappa= \min_{k = 1, \ldots, K + 1} \| \beta^*_{\eta_k} - \beta^*_{\eta_{k-1}}\|_2$ and $\Delta= \min_{k = 1, \ldots, K + 1} (\eta_{k} - \eta_{k-1})$ be the minimal jump size and minimal spacing defined as follows, respectively. Then,
			\begin{equation}\label{eq:snr}
				\Delta \kappa^2 \geq C_{\mathrm{SNR}} d_0^2 K \sigma^2_{\varepsilon}\log^{1 + \xi}(n\vee p).
			\end{equation}
\end{assume}

\Cref{assume:high dim coefficient}(\textbf{a}) and (\textbf{c}) are standard conditions required for consistency of Lasso-based estimators.  \Cref{assume:high dim coefficient}(\textbf{d}) specifies a minimal  signal-to-noise ratio condition that allows to detect the presence of a change point.  Interestingly, if $K = d_0 = 1$, \eqref{eq:snr} reduces to $\Delta \kappa^2 \sigma^{-2}_{\epsilon}\gtrsim \log^{1 + \xi}(n \vee p)$, matching the information theoretic lower bound  (up to constants and logarithmic terms) for the univariate mean change point detection problem \citep[see e.g.][]{chan2013, FrickEtal2014, wang2018univariate}.
			
In addition, we have
	\begin{align}\label{eq-eff-size}
		\Delta & \geq \frac{C_{\mathrm{SNR}} d_0^2 K \sigma^2_{\epsilon} \log^{1 + \xi}(n \vee p)}{\kappa^2} \nonumber \\
		& \geq 	\frac{C_{\mathrm{SNR}} d_0^2 K \sigma^2_{\epsilon} \log^{1 + \xi}(n \vee p)}{4C_{\beta}^2 d_0} \nonumber \\
		& \geq \frac{C_{\mathrm{SNR}}}{4C_{\beta}^2} d_0 K \sigma^2_{\epsilon} \log^{1+\xi}(n \vee p),
	\end{align}
	where the second inequality follows from the bound
	\[
		\kappa^2 = \min_{k = 1, \ldots, K + 1} \| \beta^*_{\eta_k} - \beta^*_{\eta_{k-1}}\|_2^2 \leq d_0 (2C_{\beta})^2 = 4C_{\beta}^2 d_0.
	\]
   If $\Delta = \Theta (n)$ and $K=O(1)$, then \eqref{eq-eff-size}  becomes $ n \gtrsim d_0 \log^{1+\xi}(n \vee p)$, 
   which  resembles the effective sample size condition needed in the Lasso estimation literature.
	
Another way to interpret the signal-to-noise ratio \Cref{assume:high dim coefficient}(\textbf{d}) is to introduce a normalized jump size $\kappa_0 = \kappa/\sqrt{d_0}$, which leads to the equivalent condition
	\[
		\Delta \kappa_0^2 \geq C_{\mathrm{SNR}} d_0 K \sigma^2_{\epsilon} \log^{1+\xi}(n \vee p).
	\]
	Analogous constraints on the model parameters are required in other change point detection problems, including high-dimensional mean change point detection \citep{wang2018high}, high-dimensional covariance change point detection \citep{wang2017optimal}, sparse dynamic network change point detection \citep{wang2018optimal}, high-dimensional regression change point detection \citep{wang2019statistically}, to name but a few.  Note that in these aforementioned papers, when variants of wild binary segmentation \citep{fryzlewicz2014wild} were deployed, additional knowledge is needed to get rid of $K$ in the lower bound of the signal-to-noise ratio.  We refer the reader to \cite{wang2018optimal} for more discussions regarding this point.
	
The constant $\xi$ is needed to guarantee consistency when $\Delta$ is of the same irder as $n$ but  can be set to zero if $\Delta = o(n)$.  We may instead replace it with a weaker condition of the form
	\[
		\Delta \kappa^2 \gtrsim C_{\mathrm{SNR}} d_0^2 K \{\log(n \vee p) + a_n\},
	\]
	where $a_n \to \infty$ arbitrarily slow as $n \to \infty$.  We stick with the signal-to-noise ratio condition \eqref{eq:snr} for simplicity.

\subsection{Localization Rates}\label{sec-local-rates}
We are now ready to state one of the main results of the paper.

\begin{theorem}\label{eq:DP consistent for regression}
Assume \Cref{assume:change point regression model} and the conditions in  \Cref{assume:high dim coefficient}. Then,
the change point estimators $\{\widetilde{\eta}_k\}_{k = 1}^{\widehat{K}}$ obtained as a solution to the dynamic programming optimization problem given in \eqref{eq-wide-p}, \eqref{eq-loss-1}, and \eqref{eq-beta-1}  with tuning parameters
	\[
		\lambda = C_{\lambda}\sigma_{\varepsilon} \sqrt{d_0 \log(n \vee p)}
	\]
	and
	\[
		\gamma = C_{\gamma}\sigma_{\epsilon}^2 (K+1) d^2_0 \log(n \vee p),
	\]
	are such that
	\begin{align}\label{eq-lm-thm-1}
		& \mathbb{P}\left\{\widehat K = K, \, \max_{k = 1, \ldots, K}|\widetilde{\eta}_k - \eta_k| \leq  \frac{KC_{\epsilon}d_0^2 \sigma^2_{\epsilon} \log(n \vee p)}{\kappa^2} \right\} \nonumber\\ 
		& \hspace{1cm}\geq 1 - C(n \vee p)^{-c},
	\end{align}
	where $C_{\lambda}, C_{\gamma}, C_{\epsilon}, C, c > 0$ are absolute constants depending only on $C_{\beta}, C_x$, and $c_x$.
\end{theorem}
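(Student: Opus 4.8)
The proof follows the standard two-stage template for penalized least-squares (minimal-partition) change-point estimators. In the first stage one passes to a high-probability ``good event'' $\mathcal{A}$ that bundles all the uniform deviation inequalities we will need; in the second stage one works deterministically on $\mathcal{A}$, exploiting the optimality of $\widehat{\mathcal{P}}$ in \eqref{eq-wide-p} through comparison (refinement and merging) arguments to exclude under- and over-segmentation and to bound the localization error.

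\textbf{Step 1 (the good event).} Let $\mathcal{A}$ be the intersection, over all $O(n^2)$ integer subintervals $I \subseteq \{1,\ldots,n\}$, of the events: (i) $\|\sum_{t\in I} x_t \varepsilon_t\|_\infty \le C_1 \sigma_\varepsilon C_x \sqrt{\max\{|I|,\log(n\vee p)\}\,\log(n\vee p)}$; (ii) the restricted-eigenvalue-type bounds $\tfrac{c_x^2}{2}|I|\,\|v\|_2^2 \le \sum_{t\in I}(x_t^\top v)^2 \le \tfrac{3}{2}C_x^2 |I|\,\|v\|_2^2$ for every $v$ in the cone $\{\|v_{S^c}\|_1 \le 3\|v_S\|_1\}$, valid once $|I| \ge c_2 d_0 \log(n\vee p)$; and (iii) $|\sum_{t\in I}(\varepsilon_t^2 - \mathbb{E}\varepsilon_t^2)| \le C_3\sigma_\varepsilon^2(\sqrt{|I|\log(n\vee p)} + \log(n\vee p))$. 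A union bound over the $O(n^2)$ intervals, combined with Bernstein's inequality for (i) and (iii) and a Hanson--Wright / matrix-Chernoff estimate for (ii), yields $\mathbb{P}(\mathcal{A}) \ge 1 - C(n\vee p)^{-c}$. The factor $\max\{|I|,\log(n\vee p)\}$ built into the penalty of \eqref{eq-beta-1} is exactly what makes (i) uniform down to arbitrarily short intervals, which is the point noted in the text after \eqref{eq-beta-1}.

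\textbf{Step 2 (the interval-wise Lasso).} On $\mathcal{A}$, with $\lambda = C_\lambda\sigma_\varepsilon\sqrt{d_0\log(n\vee p)}$, a routine Lasso oracle argument (basic inequality, cone membership, restricted eigenvalue) delivers two facts, with $\rho := C_4\,\sigma_\varepsilon^2 d_0^2 \log(n\vee p)$ playing the role of the per-interval error budget. First, for any interval $I$ with no true change point, $|\mathcal{L}(I) - \sum_{t\in I}\varepsilon_t^2| \le \rho$. Second, for any interval $I$ whose true change points split it into change-point-free pieces $\{I_j\}$ with coefficients $\{\beta^{(j)}\}$, $\mathcal{L}(I) \ge \sum_{t\in I}\varepsilon_t^2 + \tfrac12 \min_{v\in\mathbb{R}^p}\sum_j\sum_{t\in I_j}(x_t^\top(\beta^{(j)}-v))^2 - \rho$, and the minimum there is, by the restricted-eigenvalue bound, at least $c' c_x^2$ times the ``within-$I$ signal energy''; in particular, for a single change point splitting $I$ into pieces of sizes $n_1,n_2$, this energy is of order $\tfrac{n_1 n_2}{n_1+n_2}\|\beta^{(1)}-\beta^{(2)}\|_2^2$. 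Note $\rho$ is of the order of $\gamma/(K+1)$ for the prescribed $\gamma$.

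\textbf{Step 3 (exploiting optimality).} Write the change points read off $\widehat{\mathcal{P}}$ as $\widetilde\eta_1 < \cdots < \widetilde\eta_{\widehat K}$, put $\widetilde\eta_0 = 0$, $\widetilde\eta_{\widehat K+1} = n$, and set $\epsilon^* = K C_\epsilon d_0^2\sigma_\varepsilon^2\log(n\vee p)/\kappa^2$; the signal-to-noise condition \eqref{eq:snr} forces $\epsilon^* = o(\Delta)$, and $\epsilon^* < \Delta/4$ for $C_{\mathrm{SNR}}$ large, which makes the following steps non-vacuous. On $\mathcal{A}$ I establish, in this order: (a) for each true $\eta_k$, either $\eta_k$ lies within $\epsilon^*$ of an endpoint of the interval of $\widehat{\mathcal{P}}$ containing it, or refining $\widehat{\mathcal{P}}$ by cutting at $\eta_k$ decreases the fit by at least $c'c_x^2\epsilon^*\kappa^2/2 - \rho$; calibrating $C_\epsilon$ so that this exceeds $\gamma$ contradicts optimality, whence in either case $\eta_k$ has an estimate within $\epsilon^*$, so $\widehat K \ge K$; (b) comparing the objective of $\widehat{\mathcal{P}}$ with that of the true partition $\mathcal{P}^*$ --- whose objective is at most $\sum_t\varepsilon_t^2 + (K+1)(\rho + \gamma)$ by Step 2 --- while lower-bounding $\sum_{I\in\widehat{\mathcal{P}}}\mathcal{L}(I) \ge \sum_t\varepsilon_t^2 - O(\widehat K \rho)$ via Step 2 and (i) of $\mathcal{A}$, optimality forces $\gamma(\widehat K - K) \le O((K+1)\rho)$; since $\gamma \gg \rho$ this gives $\widehat K \le K$, hence $\widehat K = K$; (c) once $\widehat K = K$ and the $\epsilon^*$-neighbour assignment of (a) is a bijection (it is, since $\Delta > 2\epsilon^*$), for each $k$ let $I$ be the interval of $\widehat{\mathcal{P}}$ bounded by $\widetilde\eta_k$ that contains $\eta_k$; cutting $I$ at $\eta_k$ improves the fit by at least $c'c_x^2\tfrac{|\widetilde\eta_k - \eta_k|\,(|I|-|\widetilde\eta_k-\eta_k|)}{|I|}\kappa^2 - \rho \ge c'c_x^2|\widetilde\eta_k-\eta_k|\kappa^2/4 - \rho$ (using $|I| - |\widetilde\eta_k - \eta_k| \ge \Delta/2$), and optimality forces this to be $\le \gamma$, i.e.\ $|\widetilde\eta_k - \eta_k| \le 4(\gamma+\rho)/(c'c_x^2\kappa^2) \le \epsilon^*$ for a suitable $C_\epsilon$. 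This is \eqref{eq-lm-thm-1}.

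\textbf{Main obstacle.} Steps 1 and 2 are routine concentration and Lasso algebra. The delicate part is Step 3: the three sub-claims are interdependent (detection uses localization-type bounds and conversely), so one must order them carefully and propagate the constants consistently, ensuring simultaneously that $\gamma$ dominates the per-interval error $\rho$ (so spurious cuts are penalized away), that $\gamma$ also dominates the $(K+1)\rho$ accumulated across the true partition (this is why $\gamma$ carries the factor $K+1$), and that $c'c_x^2\epsilon^*\kappa^2$ in turn dominates $\gamma$ (this is why $\epsilon^*$ carries the factor $K$). The genuinely fiddly point is handling true change points that fall within $O(\epsilon^*)$ of an interval boundary of $\widehat{\mathcal{P}}$: there the naive ``signal-energy'' lower bound is too weak, and one must instead observe that such a change point is already well localized by the neighbouring estimated change point --- the dichotomy made explicit in sub-claim (a) --- and then treat the interval as effectively carrying a single relevant change point. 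Everything else reduces to the bookkeeping of these inequalities.
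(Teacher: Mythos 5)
Your proposal is correct in outline and follows essentially the same route as the paper's proof: the same high-probability events (restricted-eigenvalue and $\|\sum_{t\in I}\varepsilon_t x_t\|_\infty$ deviation bounds), the same per-interval Lasso oracle inequalities with error budget of order $\lambda^2 d_0 \asymp \sigma_\varepsilon^2 d_0^2\log(n\vee p)$, the cut-and-compare exploitation of DP optimality with the $\tfrac{n_1 n_2}{n_1+n_2}\kappa^2$ signal-energy lower bound for localization, and the global objective comparison against the common refinement of $\widehat{\mathcal{P}}$ and the true partition to pin down $\widehat K = K$. The only organizational difference is that the paper first establishes $K \le |\widehat{\mathcal{P}}| \le 3K$ via a four-case analysis of how many true change points an estimated interval can contain (treating separately intervals with two and with three or more true change points, and controlling the bias term $\sum_{t\in I}x_tx_t^\top(\beta_I^*-\beta_t^*)$ that forces the $\sqrt{d_0}$ inflation of $\lambda$ on such heterogeneous intervals --- two points your single-cut dichotomy in (a) glosses over) before running the objective comparison, but the underlying inequalities are the ones you describe.
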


The above result implies that, with probability tending to 1 as $n$ grows,
	\begin{align*}
		\max_{k = 1, \ldots, K}\frac{|\hat{\eta}_k - \eta_k|}{\Delta} \leq \frac{KC_{\epsilon} d^2_0 \sigma^2_{\epsilon}\log(n \vee p)}{\kappa^2 \Delta} \\
		\leq \frac{C_{\epsilon}}{C_{\mathrm{SNR}}\log^{\xi}(n \vee p)} \to 0,
	\end{align*}
	where in the second inequality we have used \Cref{assume:high dim coefficient}(\textbf{c}).  Thus, the localization error converges to zero in probability.
	
	It is worth emphasizing that the bound in \eqref{eq-lm-thm-1} along with \Cref{assume:change point regression model}   provide a {\it family} of rates, depending on how the model parameters ($p$, $d_0$, $\kappa$, $\Delta$, $K$ ands $\sigma_{\epsilon}$) scale with $n$.
	
The tuning parameter $\lambda$ affects the performance of the Lasso estimator.  The second tuning parameter $\gamma$ prevents overfitting while searching the optimal partition as a solution to the problem \eqref{eq-wide-p}.  In particular, $\gamma$ is determined by the squared $\ell_2$-loss of the Lasso estimator and is of order $\lambda^2 d_0$.  We will elaborate more on this point in the supplementary materials.	
	
We now turn to the analysis of the local refinement algorithm, which takes as input a preliminary collection of change point estimators $\{\widetilde{\eta}_k\}_{k = 1}^{K}$ such that $\max_{k = 1, \ldots, K} |\widetilde{\eta}_k - \eta_k| $, such as the ones returned by our estimator based on the dynamic programming approach. The only assumption required for local refinement  is that the localization error of  the preliminary estimators be a small enough fraction of the minimal spacing $\Delta$ (see \eqref{eq-lr-cond-1} below). Then local refinement returns an improved collection of change point estimators $\{\widehat{\eta}_k\}_{k = 1}^{K}$ with a vanishing localization error rate of order $O \left( \frac{d_0 \log ( n \vee p )}{n \kappa^2} \right)$. Interestingly, the initial estimators need not be consistent in order for local refinement to work: all that is required is essentially that the each of the working intervals in \Cref{algorithm:LR} contains one and only one true change point. This fact allows us to refine the search within each working intervals separately, yielding better rates. 

In particular, if we use the outputs of \eqref{eq-wide-p}, \eqref{eq-loss-1}, and \eqref{eq-beta-1} as the inputs of \Cref{algorithm:LR}, then it follows from \eqref{eq-lm-thm-1} and \eqref{eq:snr} that, for any $k \in \{1, \ldots, K\}$, 
	\begin{align*}
		s_k - \eta_{k - 1} > \frac{2}{3}\widetilde{\eta}_{k - 1} + \frac{1}{3}\widetilde{\eta}_{k} - \widetilde{\eta}_{k - 1} - \epsilon \\
		= \frac{1}{3}(\widetilde{\eta}_k - \widetilde{\eta}_{k-1}) - \epsilon > \Delta/3 - 5\epsilon/3 > 0
	\end{align*}
	and
	\begin{align*}
		s_k - \eta_{k} < \frac{2}{3}\widetilde{\eta}_{k - 1} + \frac{1}{3}\widetilde{\eta}_{k} - \widetilde{\eta}_{k} + \epsilon \\
		= - \frac{2}{3}(\widetilde{\eta}_{k-1} - \widetilde{\eta}_{k}) + \epsilon < -2\Delta/3 + 5\epsilon/3 < 0.
	\end{align*}

\begin{corollary}\label{cor-lr-3}
Assume the same conditions of \Cref{eq:DP consistent for regression}.
Let $\{\widetilde{\eta}_k\}_{k = 1}^{K}$ be a set of time points satisfying
	\begin{equation}\label{eq-lr-cond-1}
		\max_{k = 1, \ldots, K} |\widetilde{\eta}_k - \eta_k| \leq \Delta/7.
	\end{equation}
	Let $\{\widehat{\eta}_k\}_{k = 1}^{\widehat{K}}$ be the change point estimators generated from \Cref{algorithm:LR} with $\{\widetilde{\eta}_k\}_{k = 1}^{K}$ and 
	\[
		\zeta = C_{\zeta}\sqrt{\log(n \vee p)}
	\]
	as inputs.  Then,
	\begin{align*}
		& \mathbb{P}\left\{\widehat K = K, \, \max_{k = 1, \ldots, K}|\hat{\eta}_k - \eta_k| \leq \frac{C_{\epsilon} d_0 \log(n \vee p)}{\kappa^2} \right\} \\
		& \hspace{1cm}\geq 1 - Cn^{-c},
	\end{align*}
	where $C_{\zeta}, C_{\epsilon}, C, c > 0$ are absolute constants depending only on $C_{\beta}, \mathcal{M}$ and $c_x$.
\end{corollary}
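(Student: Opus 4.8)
The plan is to decouple the problem into $K$ independent single–change–point estimation problems, one for each working interval of \Cref{algorithm:LR}, and to analyze the group-Lasso estimator \eqref{eq-g-lasso} on each. First I would verify that, under the input condition \eqref{eq-lr-cond-1} (write $\epsilon = \max_k |\widetilde\eta_k - \eta_k| \le \Delta/7$), each interval $(s_k, e_k]$ contains exactly one true change point, namely $\eta_k$, with $\eta_k$ separated from both endpoints by a fixed fraction of $\Delta$, and with $\eta_{k-1} \le s_k$ and $e_k \le \eta_{k+1}$; this is the same bookkeeping displayed just before the corollary statement, now run with the slack $\Delta/7$ in place of the consistent rate. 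Since \Cref{algorithm:LR} returns exactly one estimate $\widehat\eta_k$ per interval, $\widehat K = \widetilde K = K$ holds deterministically on this event, and it remains to bound $|\widehat\eta_k - \eta_k|$ for a generic $k$.

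Fix $k$ and abbreviate $s = s_k$, $e = e_k$, $\delta = \beta^*_{\eta_k} - \beta^*_{\eta_{k-1}}$, so that on $(s,e]$ the data obey \Cref{assume:change point regression model} with a single change at $\eta_k$, $\|\delta\|_2 \ge \kappa$, both $\beta^*_{\eta_{k-1}}$ and $\beta^*_{\eta_k}$ supported on $S$ with $|S| = d_0$, and $\min\{\eta_k - s,\, e - \eta_k\} \gtrsim \Delta$. Because $(\widehat\beta_1, \widehat\beta_2, \widehat\eta_k)$ minimizes the objective in \eqref{eq-g-lasso}, comparing its value with that at the feasible point $(\beta^*_{\eta_{k-1}}, \beta^*_{\eta_k}, \eta_k)$ and expanding the squares yields a basic inequality of the following shape: the signal term $\sum_{t \in (\eta_k \wedge \widehat\eta_k,\, \eta_k \vee \widehat\eta_k]} (x_t^\top \delta)^2$, which by \Cref{assume:high dim coefficient}(\textbf{c}) (restricted eigenvalue $\Lambda_{\min}(\Sigma) = c_x^2$) is at least of order $|\widehat\eta_k - \eta_k|\,\|\delta\|_2^2$ up to a stochastic fluctuation, is dominated by (a) cross terms $\sum_t \varepsilon_t x_t^\top(\cdot)$ evaluated on sub-intervals of $(s,e]$, (b) terms quadratic in the estimation errors $\widehat\beta_1 - \beta^*_{\eta_{k-1}}$ and $\widehat\beta_2 - \beta^*_{\eta_k}$, and (c) the difference of the two group-Lasso penalties.

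The technical core is controlling (a)--(c) uniformly over all integer sub-intervals and over the random location $\widehat\eta_k$. I would establish, via sub-Gaussian concentration and a union bound over the $O(n^2)$ integer intervals $J \subseteq \{1,\dots,n\}$: (i) a uniform lower restricted-eigenvalue / group-compatibility condition for the Gram matrices $\sum_{t\in J} x_t x_t^\top$ whenever $|J| \gtrsim d_0 \log(n\vee p)$; (ii) the bound $\sup_J \|\sum_{t\in J} \varepsilon_t x_t\|_\infty \lesssim \sigma_\varepsilon\sqrt{|J|\log(n\vee p)} + \sigma_\varepsilon\log(n\vee p)$; and (iii) a uniform deviation bound for $\sum_{t\in J}(x_t x_t^\top - \Sigma)$ in a norm controlling sparse quadratic forms. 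Granting these, the choice $\zeta = C_\zeta\sqrt{\log(n\vee p)}$ makes the penalty dominate the relevant noise in the group sense, so a standard cone argument controls $\|\widehat\beta_1 - \beta^*_{\eta_{k-1}}\|_2$ and $\|\widehat\beta_2 - \beta^*_{\eta_k}\|_2$ by terms of order $\zeta\sqrt{d_0}/\sqrt{\Delta}$ and the corresponding $\ell_1$ errors by $\zeta d_0/\sqrt{\Delta}$; substituting back, the basic inequality collapses to a scalar inequality of the schematic form $c\,|\widehat\eta_k-\eta_k|\,\|\delta\|_2^2 \le C\,\|\delta\|_2\sqrt{d_0\,|\widehat\eta_k-\eta_k|\,\log(n\vee p)} + C\,d_0\log(n\vee p)$, which, using $\|\delta\|_2 \ge \kappa$, forces $|\widehat\eta_k - \eta_k| \le C_\epsilon d_0\log(n\vee p)/\kappa^2$. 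A union bound over $k=1,\dots,K$ and over the events (i)--(iii) gives probability at least $1 - Cn^{-c}$.

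The step I expect to be the main obstacle is the group-Lasso analysis under only sub-Gaussian (rather than uniformly bounded) covariates, specifically items (i) and (iii): when $\widehat\eta_k \ne \eta_k$ the two blocks of the design in \eqref{eq-g-lasso} carry different, data-dependent weights $(\widehat\eta_k - s)$ and $(e - \widehat\eta_k)$, so the effective Gram matrix is a random convex combination, and one must show its group-compatibility constant stays bounded below uniformly in the split point. This is exactly where the shrinkage to $(s_k,e_k]$ and the guarantee that $\eta_k$ sits $\Theta(\Delta)$-deep inside it are used, and where the argument must improve on \cite{wang2019statistically}, who sidestepped this by assuming bounded covariates. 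Once (i)--(iii) are in hand, balancing the $|\widehat\eta_k-\eta_k|\,\|\delta\|_2^2$ signal against the $\sqrt{|\widehat\eta_k-\eta_k|}$-type noise is a routine quadratic computation.
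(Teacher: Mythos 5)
Your proposal is correct and follows essentially the same route as the paper's proof: decouple into $K$ local single-change-point problems on the shrunken intervals, run the group-Lasso basic inequality against the oracle triple $(\beta^*_{\eta_{k-1}},\beta^*_{\eta_k},\eta_k)$, invoke restricted-eigenvalue events separately on the three sub-intervals cut by $\eta_k$ and $\widehat\eta_k$, and extract $|\widehat\eta_k-\eta_k|\lesssim d_0\zeta^2/\kappa^2$ from the fact that on the mismatch interval the fitted coefficient is $\kappa/2$-far from the truth (because $\|\widehat\beta_1-\beta^*_{\eta_{k-1}}\|_2\ll\kappa$ by the SNR condition). The only differences are cosmetic: the paper first aggregates everything into the bound $\sum_t\|\widehat\beta_t-\beta^*_t\|_2^2\lesssim d_0\zeta^2$ before isolating the middle interval, and it makes the cross term uniform over the random split point via a net over piecewise-constant sequences (Lemma~\ref{lem-wang-lem-3}) rather than your union bound over $O(n^2)$ integer intervals, both of which yield the same $\zeta=C_\zeta\sqrt{\log(n\vee p)}$ calibration.
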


Compared to the localization error given in \Cref{eq:DP consistent for regression}, the improved localization error guaranteed by the local refinement algorithm does not have a direct dependence on $K$, the number of change points. The intuition for this is as follows.
First, due to the nature of the change point detection problem, there is a natural group structure.  This justifies the use of the group Lasso-type penalty, which reduces the localization error by bringing down $d_0^2$ to $d_0$. Second, using condition \eqref{eq-lr-cond-1}, there is one and only one true change point in every working interval used by the local refinement algorithm. The true change points can then be estimated separately using $K$ independent searches, in such a way that the final localization rate that does not depend on the number of searches, namely $K$.

\subsection{Comparisons}\label{sec-comparisons}

We now discuss how our contributions compared with the results of \cite{wang2019statistically} and of \cite{leonardi2016computationally}, which investigate the same high-dimensional change-point linear regression model.

\cite{wang2019statistically} proposed different algorithms, all of which are variants of wild binary segmentation, with or without additional Lasso estimation procedures.  Those methods inherit both the advantages and the disadvantages of WBS.  Compared with dynamic programming, WBS-based methods require additional tuning parameters such as randomly selected intervals as inputs.  With these additional tuning parameters, Theorem~1 in \cite{wang2019statistically} achieved the same statistical accuracy in terms of the localization error rate as   \Cref{eq:DP consistent for regression} above. 
		In terms of computational cost, the methods in \cite{wang2019statistically} are of order $O(K^2 n\cdot \text{Lasso}(n))$, where $K$, $n$ and $\text{Lasso}(n)$ denote the number of change points, the sample size and the computational cost of Lasso algorithm with sample size $n$, respectively, while the dynamic programming approach of this paper   is of order $O(n^2 \cdot \text{Lasso}(n))$.  Thus, when $K \lesssim \sqrt{n}$, the algorithm in \cite{wang2019statistically} is computationally more efficient, but when $K \gtrsim \sqrt{n}$, the method in this paper has smaller complexity.

\cite{leonardi2016computationally} analysed two algorithms, one based on a dynamic programming approach, and the other on binary segmentation, and claimed that they both yield the same localization, which is, in our notation,
		\begin{equation}\label{eq-bul}
			\sum_{k = 1}^K |\hat{\eta}_k - \eta_k| \lesssim \frac{d_0^2 \sqrt{n \log(np)}}{\kappa^2}.
		\end{equation}

Note that, the error bound  in \cite{leonardi2016computationally}   is originally of the form  $	\sum_{k = 1}^K |\hat{\eta}_k - \eta_k| \lesssim \frac{d_0 \sqrt{n \log(np)}}{\kappa^2}$ under a slightly stronger assumption than ours.  In the more general settings of \Cref{assume:high dim coefficient},  the localization error bound  of    \cite{leonardi2016computationally} is of the form \eqref{eq-bul}, based on personal communication with the authors.  
		
	 	It is not immediate to directly compare the sum of all localization errors, used by \cite{leonardi2016computationally}, with the maximum localization error, which is the target in this paper.  Using a worst-case upper bound, \Cref{eq:DP consistent for regression} yields that 
	 	\[
	 		\sum_{k = 1}^K |\hat{\eta}_k - \eta_k| \lesssim \frac{K^2 d_0^2 \sigma^2_{\varepsilon} \log(n\vee p)}{\kappa^2}.
	 	\]
	 	In light of \Cref{cor-lr-3}, this error bound can be sharpened, using the local refinement \Cref{algorithm:LR} to
	 	\[
	 		\sum_{k = 1}^K |\hat{\eta}_k - \eta_k| \lesssim \frac{K d_0 \sigma^2_{\varepsilon} \log(n\vee p)}{\kappa^2}.
	 	\]
	 	As long as $K^2 \lesssim \sqrt{\frac{n}{\log(np)}}$, or, using the local refinement algorithm,  $K\lesssim \sqrt{\frac{n}{\log(np)}}$, our localization rates are better than the one implied by \eqref{eq-bul}.
	 	It is not immediate to compare directly the assumptions used in \cite{leonardi2016computationally} with the ones we formulate here due to the different ways we use to present them.  For instance, the conditions in Theorem~3.1 of \cite{leonardi2016computationally} imply, in our notation, that condition 
	 		\[
	 			\Delta \gtrsim \sqrt{n \log(p)}
	 		\]
	 		is needed for consistency,
	 		even if the sparsity parameter $d_0 = \Theta(1)$.  However in our case, in view of \eqref{eq:snr}, if we assume $d_0 = \kappa = \Theta(1)$, then we only require $\Delta \gtrsim \log^{1 + \xi}(n\vee p)$ for consistency.

\subsection{Lower Bounds}

In \Cref{sec-local-rates}, we show that as long as
	\[
		\kappa^2 \Delta \gtrsim d_0^2 K \sigma^2_{\varepsilon} \log^{1+\xi}(n \vee p), 
	\]
	we demonstrate provide change point estimators with localization errors upper bounded by
	\[
		d_0 \sigma^2_{\varepsilon} \kappa^{-2} \log(n \vee p).
	\]
	In this section, we show that no algorithm is guaranteed to be consistent in the regime
	\[
		\kappa^2 \Delta \lesssim d_0 \sigma^2_{\varepsilon},
	\]
	and otherwise, a minimax lower bound on the localization errors is
	\[
		d_0 \sigma^2_{\varepsilon} \kappa^{-2}.
	\]
These findings are formally stated next, in  Lemmas~\ref{lem-reg-lb-1} and \ref{lem-reg-lb-2}, respectively.

\begin{lemma}\label{lem-reg-lb-1}
Let $\{(x_t, y_t)\}_{t = 1}^T \subset \mathbb{R}^p \times \mathbb{R}$ satisfy \Cref{assume:change point regression model} and \Cref{assume:high dim coefficient}, with $K = 1$.  In addition, assume that $\{x_t\}_{t=1}^n \stackrel{\mbox{iid}}{\sim} \mathcal{N}(0, I_p)$ and $\{ \varepsilon_t \}_{t=1}^n \stackrel{\mbox{iid}}{\sim} \mathcal{N}(0, \sigma_{\varepsilon}^2)$.  Let $P^T_{\kappa, \Delta, \sigma_{\varepsilon}, d}$ be the corresponding joint distribution.  For any $0 < c < \frac{2}{8e+1}$, consider the class of distributions
	\begin{align*}
		\mathcal{P} = \Big\{P^T_{\kappa, \Delta, \sigma_{\varepsilon}, d}: \, \Delta = \min\left\{\lfloor cd_0 \sigma_{\varepsilon}^2\kappa^{-2}\rfloor, \, \lfloor T/4\rfloor \right\},\\
		 2cd_0\max\{d_0, 2\} \leq \Delta\Big\}.
	\end{align*}
	There exists a $T(c)$, which depends on $c$, such that for all $T \geq T(c)$,
	\[
		\inf_{\widehat{\eta}} \sup_{P \in \mathcal{P}} \mathbb{E}_P(|\widehat{\eta} - \eta(P)|) \geq \Delta,
	\]
	where $\eta(P)$ is the location of the change point of distribution $P$ and the infimum is over all estimators of the change point. 
\end{lemma}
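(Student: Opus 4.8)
The natural tool is Le~Cam's two-point method, so the plan is to reduce the minimax bound to a binary test between two members of $\mathcal{P}$. I would take $P_0,P_1\in\mathcal{P}$ agreeing on every parameter of \Cref{assume:change point regression model} except the change-point location: under $P_0$ the change sits at some $\eta_0$ near the centre of $\{1,\dots,T\}$, under $P_1$ at $\eta_1=\eta_0+L$ for a separation $L$ of order $\Delta$ to be fixed at the end, while in both cases $\beta^*_t$ equals $0$ before the change and equals a common $d_0$-sparse vector $v$ with $\|v\|_2=\kappa$ afterwards; spreading $v$ evenly over its support and using the elementary bound $\kappa^2\le 4C_\beta^2 d_0$ recorded in \eqref{eq-eff-size} gives $\|v\|_\infty\le C_\beta$. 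For $T\ge T(c)$ large enough both $\eta_0$ and $\eta_1$ lie in $\{\Delta+1,\dots,T-\Delta\}$, so $P_0,P_1\in\mathcal{P}$ and their change points are exactly $L$ apart; then, combining the triangle inequality $|\widehat\eta-\eta_0|+|\widehat\eta-\eta_1|\ge L$ with the Neyman--Pearson bound on the sum of the two testing errors, for any estimator $\widehat{\eta}$ one gets
\[
  \sup_{P\in\mathcal{P}}\mathbb{E}_P\bigl(|\widehat{\eta}-\eta(P)|\bigr)\;\ge\;\tfrac12\bigl(\mathbb{E}_{P_0}|\widehat{\eta}-\eta_0|+\mathbb{E}_{P_1}|\widehat{\eta}-\eta_1|\bigr)\;\ge\;\tfrac{L}{8}\,e^{-\mathrm{KL}(P_0^T\|P_1^T)}.
\]

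The one quantitative ingredient is the Kullback--Leibler divergence, which the Gaussian assumptions make exactly computable: conditionally on $\{x_t\}_{t=1}^T$ each $y_t$ is $\mathcal{N}(x_t^\top\beta_t^*,\sigma_\varepsilon^2)$, so the chain rule for KL and $\mathbb{E}[(x_t^\top u)^2]=\|u\|_2^2$ give $\mathrm{KL}(P_0^T\|P_1^T)=\sum_{t=1}^T\|\beta^{(0)}_t-\beta^{(1)}_t\|_2^2/(2\sigma_\varepsilon^2)=L\kappa^2/(2\sigma_\varepsilon^2)$, the two coefficient sequences differing only on the $L$ indices strictly between $\eta_0$ and $\eta_1$. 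Optimizing $(L/8)\,e^{-L\kappa^2/(2\sigma_\varepsilon^2)}$ over $L$ already yields a lower bound of order $\sigma_\varepsilon^2/\kappa^2$. To obtain the claimed $d_0$-dependent scale I would then randomize the jump direction, replacing the single fixed $v$ by a prior over $d_0$-sparse vectors of norm $\kappa$ with random support over $\{1,\dots,p\}$ and random signs, shared between the two hypotheses; the data after $\eta_1$ then has the same (mixture) marginal law under $P_0$ and $P_1$, and the remaining distance between the hypotheses is essentially that of detecting whether a sparse signal is present in the short block strictly between $\eta_0$ and $\eta_1$ — a sparse detection problem whose governing chi-square second moment stays $O(1)$ as long as $L\kappa^2\lesssim d_0\sigma_\varepsilon^2$, i.e.\ in exactly the regime $\Delta\asymp d_0\sigma_\varepsilon^2\kappa^{-2}$ of the lemma, so that one may take $L\asymp\Delta$.

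I expect the main obstacle to be the constant bookkeeping needed to land at exactly $\Delta$ rather than a constant multiple of it: one must fix $L$ at the integer closest to the maximizer of the relevant $u\mapsto u\,e^{-u}$ profile — which is where the factors $e^{-1}$ and $8$ enter — then absorb the floors in $\Delta=\min\{\lfloor c d_0\sigma_\varepsilon^2\kappa^{-2}\rfloor,\lfloor T/4\rfloor\}$ and the auxiliary constraint $2cd_0\max\{d_0,2\}\le\Delta$, and it is precisely this accounting that pins the admissible range of $c$ to $0<c<2/(8e+1)$. A secondary technical point is bounding the chi-square divergence of the randomized-direction mixture, where the short block strictly between $\eta_0$ and $\eta_1$ and the block after it are coupled through the shared random direction, and the ambient dimension must be large enough for two random sparse supports to overlap little. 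By contrast, the conditional-Gaussian KL identity, the Le~Cam reduction, and checking that $P_0,P_1$ meet the sparsity, boundedness and minimal-spacing requirements of $\mathcal{P}$ are all routine.
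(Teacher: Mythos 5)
Your Le~Cam reduction and the conditional-Gaussian KL identity are fine, and you correctly observe that the fixed-direction two-point argument tops out at order $\sigma_\varepsilon^2/\kappa^2$. The gap is in the randomization step, and it is not the ``secondary technical point'' you flag but the place where the argument breaks. In your construction the post-change coefficient is a single vector, so under $P_0$ the random direction $v$ on the short block $(\eta_0,\eta_1]$ is necessarily the same as on the long block $(\eta_1,T]$; that long block has length of order $T$, hence carries $|B|\kappa^2/\sigma_\varepsilon^2\gtrsim d_0$ units of signal (and far more if $T\gg d_0\sigma_\varepsilon^2/\kappa^2$, which the lemma allows), enough to estimate $v$ up to constant correlation. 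Conditioning on such an estimate, distinguishing $P_0$ from $P_1$ on $(\eta_0,\eta_1]$ becomes a one-dimensional Gaussian problem with signal-to-noise ratio of order $\sqrt{L}\,\kappa/\sigma_\varepsilon\asymp\sqrt{d_0}$ when $L\asymp\Delta$, which diverges with $d_0$. So your mixture hypotheses are in fact consistently distinguishable and the construction cannot produce the $d_0$ factor: the sparse-detection $\chi^2$ bound you invoke (bounded for $L\kappa^2\lesssim d_0\sigma_\varepsilon^2$) is only valid when no other portion of the data leaks the random direction.

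The paper's proof uses a different geometry precisely to avoid this leakage. The two candidate change points are placed at $\Delta$ and $T-\Delta$, so their separation is $T-2\Delta\ge T/2$ and the final bound $ (T/2)(1-\tfrac12)=T/4\ge\Delta$ comes from that large separation, not from two hypotheses $\Delta$ apart; the coefficient is nonzero only on the length-$\Delta$ prefix under $P_0$ and only on the length-$\Delta$ suffix under $P_1$, with a Rademacher-sign mixture on a fixed support of size $d_0$. The two signal blocks are disjoint and each too short ($\Delta\kappa^2\lesssim d_0\sigma_\varepsilon^2$) to identify the direction, the total variation is reduced to the $\chi^2$ between the pure null and the sign mixture on a single length-$\Delta$ block (via the argument of \cite{wang2017optimal}), and that $\chi^2$ is controlled by an explicit Gaussian determinant computation for the joint law of $(y,x)$ together with Hoeffding's inequality for the Rademacher inner product $u^{\top}v$. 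To repair your argument you would essentially need to adopt this end-point construction; the center-placed two-point scheme with a persistent post-change signal cannot be salvaged by randomizing the direction.
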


\Cref{lem-reg-lb-1} shows that if $\kappa^2 \Delta \lesssim d_0 \sigma^2_{\varepsilon}$, then 
	\[
		\frac{\inf_{\widehat{\eta}} \sup_{P \in \mathcal{P}} \mathbb{E}_P(|\widehat{\eta} - \eta(P)|)}{\Delta} \geq 1,
	\]
	which implies that the localization error is not a vanishing fraction of $\Delta$ as the sample size grows unbounded.

\begin{lemma}\label{lem-reg-lb-2}
Let $\{(x_t, y_t)\}_{t = 1}^T \subset \mathbb{R}^p \times \mathbb{R}$ satisfy \Cref{assume:change point regression model} and \Cref{assume:high dim coefficient}, with $K = 1$.  In addition, assume $\{x_t\}_{t=1}^n \stackrel{\mbox{iid}}{\sim} \mathcal{N}(0, I_p)$ and $\{ \varepsilon_t \}_{t=1}^n \stackrel{\mbox{iid}}{\sim} \mathcal{N}(0, \sigma^2_{\varepsilon})$.  Let $P^T_{\kappa, \Delta, \sigma_{\varepsilon}, d}$ be the corresponding joint distribution.  For any diverging sequence $\zeta_T$, consider the class of distributions
	\[
		\mathcal{P} = \left\{P^T_{\kappa, \Delta, \sigma_{\varepsilon}, d}: \, \Delta = \min\left\{\lfloor \zeta_T d_0 \sigma_{\varepsilon}^2\kappa^{-2}\rfloor, \, \lfloor T/4\rfloor \right\} \right\}.
	\]
	Then 
	\[
		\inf_{\widehat{\eta}} \sup_{P \in \mathcal{P}} \mathbb{E}_P(|\widehat{\eta} - \eta(P)|)\geq \frac{cd_0 \sigma_{\varepsilon}^2}{\kappa^2},
	\]
	where $\eta(P)$ is the location of the change point of distribution $P$, the infimum is over all estimators of the change point and $c > 0$ is an absolute constant. 
\end{lemma}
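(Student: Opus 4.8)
The plan is to prove \Cref{lem-reg-lb-2} by a reduction to binary hypothesis testing (Le~Cam's two-point method). Set $\rho=\lfloor c'd_0\sigma_\varepsilon^2\kappa^{-2}\rfloor$ for a small absolute constant $c'$; since $\zeta_T$ diverges we have $\rho=o(\Delta)$, so a reference location $\eta_0$ can be chosen for which both the change point $\eta_0$ and the shifted change point $\eta_0+\rho$ lie at distance at least $\Delta$ from $\{1,T\}$, and hence both are admissible members of $\mathcal{P}$. If we can produce two (mixture) distributions $\overline{P}_0,\overline{P}_1$ with $\eta(\overline{P}_0)=\eta_0$, $\eta(\overline{P}_1)=\eta_0+\rho$ and $\mathrm{TV}(\overline{P}_0,\overline{P}_1)\le 1/2$, then a Neyman--Pearson/Markov argument yields
\[
	\inf_{\widehat\eta}\ \sup_{P\in\mathcal{P}}\ \mathbb{E}_P\bigl(|\widehat\eta-\eta(P)|\bigr)\ \ge\ \tfrac{\rho}{4}\bigl(1-\mathrm{TV}(\overline{P}_0,\overline{P}_1)\bigr)\ \ge\ \tfrac{\rho}{8},
\]
which is the claimed bound up to the value of the absolute constant. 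Everything therefore hinges on constructing the two hypotheses and controlling their total variation.

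A naive construction — two Dirac hypotheses that share a common pair of pre-/post-change coefficient vectors $0$ and $\beta^*$ with $\|\beta^*\|_2=\kappa$ and differ only in the location of the single change — is \emph{not} enough: on the $\rho$ ``ambiguous'' time points $(\eta_0,\eta_0+\rho]$ the two models disagree by the full jump $\beta^*$, so (using $x_t\sim\mathcal N(0,I_p)$, whence $\mathbb{E}[(x_t^\top\beta^*)^2]=\kappa^2$) their Kullback--Leibler divergence is $\rho\kappa^2/(2\sigma_\varepsilon^2)$, which only permits $\rho\lesssim\sigma_\varepsilon^2\kappa^{-2}$ and loses the factor $d_0$. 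To recover it I would randomise the jump: draw the support $S\subset\{1,\dots,p\}$ uniformly among the size-$d_0$ subsets and, given $S$, set $\beta^*=\tfrac{\kappa}{\sqrt{d_0}}\sum_{j\in S}\xi_j e_j$ with i.i.d.\ Rademacher $\xi_j$, keeping the pre-change coefficient at $0$; $\overline{P}_0$ and $\overline{P}_1$ are the induced mixtures, with the \emph{same} realisation of $(S,\xi)$ governing both the ambiguous window and the (long) post-change segment. The heuristic for why $d_0$ reappears: two independent draws of $\beta^*$ have essentially disjoint supports (overlap $O(d_0^2/p)$) and inner product of order $\kappa^2/\sqrt{d_0}$, so after averaging, the likelihood contributed by the ambiguous window concentrates rather than separating the hypotheses, and one can afford $\rho$ roughly $d_0$ times larger.

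The remaining, and main, task is the bound $\mathrm{TV}(\overline{P}_0,\overline{P}_1)\le1/2$, for which I would estimate a $\chi^2$- or Hellinger-distance. Conditionally on the design the likelihood ratio factorises over time; the pre-change segment contributes nothing; and the computation reduces to an exponential moment of the overlap $\langle\beta^*,\beta^{*\prime}\rangle/\sigma_\varepsilon^2$ of two independent prior draws $\beta^*,\beta^{*\prime}$, re-weighted by the likelihood of the common post-change segment. Using $\mathbb{E}\langle\beta^*,\beta^{*\prime}\rangle^2\lesssim\kappa^4/d_0$ together with the near-disjointness of supports, this moment stays bounded provided $\rho\kappa^2/\sigma_\varepsilon^2\lesssim d_0$, i.e.\ $\rho\asymp d_0\sigma_\varepsilon^2\kappa^{-2}$, exactly the target rate. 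The part I expect to be delicate is accounting for the information about $\beta^*$ (and especially about its support $S$) leaked by the post-change segment, which contains $\ge\Delta\gg\rho$ observations: one must verify that in the relevant regime ($p$ large enough that $\Delta\lesssim d_0\log p\cdot\sigma_\varepsilon^2\kappa^{-2}$, which is compatible with $\zeta_T$ diverging) the support cannot be pinned down accurately enough to undo the averaging, since otherwise an observer could first estimate $\beta^*$ from the long segment and then test $0$ versus $\beta^*$ on the window with power. With this in place the two-point inequality from the first paragraph closes the argument; the same construction, amplified so that $\rho$ is pushed up to $\Delta$ itself in the sub-threshold regime $\Delta\kappa^2\lesssim d_0\sigma_\varepsilon^2$, underlies \Cref{lem-reg-lb-1}.
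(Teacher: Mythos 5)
Your overall architecture---a two-point Le Cam reduction at separation $\rho\asymp d_0\sigma_\varepsilon^2\kappa^{-2}$, with the jump randomised over Rademacher sign patterns so that the distance between the two hypotheses is controlled through an exponential moment of the overlap $\langle\beta^*,\beta^{*\prime}\rangle$ of two independent prior draws---is the same as the paper's. The paper fixes the support $S$ and randomises only the signs, places the signal segment first and the null segment last, and bounds $\chi^2(P_1^{\delta},P_0^{\delta})$ by an explicit Gaussian determinant computation of $\mathbb{E}_{P_0}\left[\frac{dP_u\,dP_v}{dP_0\,dP_0}\right]$ followed by a Hoeffding bound on $(1^{\top}V/d)^2$; your sketch of the moment computation (overlap of typical size $\kappa^2/\sqrt{d_0}$, hence $\rho$ can be inflated by a factor $d_0$) is consistent with this, and your randomisation of the support is an inessential variant.

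However, the step you flag as ``delicate'' is not a verification to be deferred; it is the crux, and as written your argument does not close. In your construction the two hypotheses share a segment of length at least $\Delta-\rho\asymp\zeta_T d_0\sigma_\varepsilon^2\kappa^{-2}$ carrying the \emph{same} realisation of $\beta^*$, so the quantity you must control is the window $\chi^2$ \emph{re-weighted by the posterior on $(S,\xi)$ given that long segment}, and you give no bound for it. Your own attack (estimate $\beta^*$ from the long segment, then test $0$ versus $\widehat\beta$ on the window) shows the averaging genuinely can be undone unless one additionally restricts the regime, roughly $\zeta_T\lesssim\log p$; that restriction appears nowhere in \Cref{lem-reg-lb-2}, which is asserted for every diverging $\zeta_T$ and every $p$, so even a completed version of your program would prove a weaker statement. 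The paper's proof does not meet this difficulty head-on either: it invokes the reduction $\|P_0-P_1\|_1\le 2\|P_0^{\delta}-P_1^{\delta}\|_1$ (citing Step~2 of Lemma~3.1 of \cite{wang2017optimal}), which discards the long common segment \emph{before} the $\chi^2$ computation, so that the Ingster-type bound is applied only to ``null versus sign-mixture on the window.'' To complete your route you would have to either establish such a decoupling inequality for your construction or carry out the posterior-re-weighted moment bound; neither is present, so the proposal has a genuine gap at its central step.
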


Recalling all the results we have obtained, the change point localization task is either impossible when $\kappa^2 \Delta \lesssim d_0 \sigma^2_{\varepsilon}$ (in the sense that  no algorithm is guaranteed to be consistent) or, when $\kappa^2 \Delta \gtrsim d_0^2 K \sigma^2_{\varepsilon} \log^{1+\xi}(n \vee p)$, it can be solved by our algorithms at nearly a minimax optimal rate. 

In the intermediate case 
\[
d_0 \sigma^2_{\varepsilon} \lesssim \kappa^2 \Delta \lesssim d^2_0 K \sigma^2_{\varepsilon} \log^{1+\xi}(n \vee p) 
\]
we are unable to provide a result one way or another. 	
However, we remark that, if in addition, in \Cref{eq:DP consistent for regression}, we assume $\kappa \leq C$, for an absolute constant $C > 0$, then we are able to weaken the condition from $\kappa^2 \Delta \gtrsim d_0^2 K \sigma^2_{\varepsilon} \log^{1+\xi}(n \vee p)$ to $\kappa^2 \Delta \gtrsim d_0 K \sigma^2_{\varepsilon} \log^{1+\xi}(n \vee p)$, by almost identical arguments.  This shows that, under the additional conditions 
	\[
		\max\{\kappa, \, K\} \leq C,
	\]
	for an absolute constant $C > 0$, the condition is nearly optimal, off by a logarithmic factor.


\section{NUMERICAL EXPERIMENTS}

In this section, we investigate the numerical performances of our proposed methods, with efficient binary segmentation algorithm (EBSA) of \cite{leonardi2016computationally} as the competitor.  We compare four methods: dynamic programming (DP, see \ref{eq-wide-p}, \ref{eq-loss-1}, and \ref{eq-beta-1}), EBSA, local refinement (\Cref{algorithm:LR}) initialized by DP (DP.LR), and local refinement (\Cref{algorithm:LR}) initialized by EBSA (EBSA.LR).  

The evaluation metric considered is the scaled Hausdorff distance between the estimators $\{\hat{\eta}_k\}_{k = 1}^{\hat K}$ and the truth $\{\eta_k\}_{k =1}^K$.  To be specific, we report
$\mathit{d}(\hat{\mathcal{C}},\mathcal{C}) = n^{-1}\mathcal{D}(\hat{\mathcal{C}},\mathcal{C})$, where 
    \[
        \mathcal{D}(\hat{\mathcal{C}},\mathcal{C}) = \max\{\max_{\hat{\eta} \in \hat{\mathcal{C}}}\min_{\eta \in \mathcal{C}}|\hat{\eta} - \eta|,\max_{\eta \in \mathcal{C}}\min_{\hat{\eta} \in \hat{\mathcal{C}}}|\hat{\eta} - \eta|\},
    \]
     $\mathcal{C} = \{\eta_k\}_{k =1}^K$ and $\hat{\mathcal{C}} = \{\hat{\eta}_k\}_{k = 1}^{\hat K}$.

We consider both simulated data and a real-life public dataset on air quality indicators in Taiwan. 

\subsection{Tuning Parameter Selection}
We adopt a cross-validation approach to choosing tuning parameters.  Let samples with odd indices be the training set and even ones be the validation set.  Recall that for the DP, we have two tuning parameters $\lambda$ and $\gamma$, which we tune using a brute-force grid search.  For each pair of tuning parameters, we conduct DP on the training set and obtain estimated change points.  Within each estimated segment of the training set, we obtain $\hat{\beta_t}$ by \eqref{eq-beta-1}.  On the validation set, let $\hat{y_t} = x_t^{\top} \hat{\beta}_t$ and calculate the validation loss $(n/2)^{-1} \sum_{t \mod 2 \equiv 0}(\hat{y}_t - y_t)^2$.  The pair $(\lambda, \gamma)$ is chosen to be the one corresponding to the lowest validation loss. 

As for the simulated data, we use some prior knowledge of the truth to save some computational cost.  To be specific, we let the odd index set be partitioned by the true change points and estimate $\beta_t$ on these intervals.  We then plot the mean squared errors of $\hat{\beta}_t$ across a range of values of $\lambda$ and obtain an ``optimal'' $\lambda$.  We choose the grid range of $\lambda$ around the ``optimal'' $\lambda$.  This step is to approximately locate the range of $\lambda$'s value but this step will not be used in the real data experiment.  The same procedure is conducted for the tuning parameter selection in EBSA.

For the local refinement algorithm, we let the estimated change points of DP or EBSA be the initializers of the local refinement algorithm.  We then regard the initialization algorithm and local refinement as a self-contained method and tune all three parameters $\lambda$, $\gamma$, and $\zeta$ jointly.  The tuning procedure is almost the same as we described above, except that we use 
    \[
		\widehat \beta^\lambda_I = \argmin_{v \in \mathbb R^p} \bigg\{\sum_{t \in I} (y_t - x_t^{\top}v)^2  	+ \zeta \sqrt{|I|} \| v\|_1\bigg\} 
	\]
    to estimate $\beta_t$. 

\subsection{Simulations}

Throughout this section, we let $n = 600$, $p = 200$, $K = 4$, $\Sigma = I$ and $\sigma_{\epsilon} = 1$. The true change points are at 121, 221, 351 and 451.  Let $\beta_0 = (\beta_{0i}, i = 1, \ldots, p)^{\top}$, with $\beta_{0i} = 2^{-1}d_0^{-1/2}\kappa$, $i \in \{1, \ldots, d_0\}$, and zero otherwise.  Let
    \[
        \beta_t = \begin{cases} 
            \beta_0, & t \in \{1, \ldots, 120\}, \\
            -\beta_0, & t \in \{121, \ldots, 220\}, \\
            \beta_0, & t \in \{221, \ldots, 350\}, \\
            -\beta_0, & t \in \{351, \ldots, 450\},  \\
            \beta_0, & t \in \{451, \ldots, 600\}.
        \end{cases}
    \] 
    
We let $\kappa \in \{4, 5, 6\}$ and $d_0 \in \{10, 15, 20\}$.  For each pair of $\kappa$ and $d_0$, the experiment is repeated 100 times.  The results are reported in Table~\ref{error-table} and Figure \ref{error-plot}.

Generally speaking, DP outperforms EBSA, and LR significantly improves upon EBSA and DP when DP doesn't give accurate results.  LR is comparable with DP when the initial points estimated by DP are already good enough.  Note that since for EBSA.LR we tune EBSA to optimize EBSA.LR's performance, it may well happen that the estimated number of change points $K$ from EBSA.LR is much different than from EBSA.
\begin{table*}[h]
\caption{Scaled Hausdorff Distance. The numbers in the brackets indicate the corresponding standard errors of the scaled Hausdorff distance.}
\label{error-table}
    \begin{center}
    \begin{tabular}{c c c c c c}
         \hline\hline
         Setting & Cases & DP & DP.LR & EBSA & EBSA.LR \\
         \hline
         $\kappa = 4, d_0 = 10$ & & 0.023(0.015) &0.008(0.004) &0.104(0.031) &0.034(0.045)  \\
         $\kappa = 4, d_0 = 15$ & All &0.031(0.020) &0.017(0.047) &0.104(0.029) &0.038(0.050) \\
         $\kappa = 4, d_0 = 20$ & &0.038(0.032) &0.019(0.042) &0.104(0.027) & 0.036(0.051)\\
         \hline
         $\kappa = 4, d_0 = 10$ & & 0.022(0.015) &0.008(0.004) & 0.061(0.047) & 0.008(0.008)\\
         $\kappa = 4, d_0 = 15$ & $\hat{K} = K$ & 0.025(0.018) &0.008(0.007) &0.071(0.045) & 0.010(0.016)\\
         $\kappa = 4, d_0 = 20$ & &0.028(0.020) &0.014(0.012) &0.076(0.048) & 0.010(0.011)\\
         \hline
         $\kappa = 5, d_0 = 10$ & &0.022(0.022) & 0.007(0.004) &0.102(0.033) & 0.033(0.046)\\
         $\kappa = 5, d_0 = 15$ & All &0.025(0.023) &0.015(0.025) &0.102(0.030) & 0.027(0.042)\\
         $\kappa = 5, d_0 = 20$ & &0.030(0.027) &0.016(0.030) &0.102(0.030) & 0.041(0.048)\\
         \hline
         $\kappa = 5, d_0 = 10$ & &0.020(0.015) &0.007(0.004) &0.068(0.073) & 0.007(0.008)\\
         $\kappa = 5, d_0 = 15$ & $\hat{K} = K$ &0.021(0.012) &0.010(0.006) &0.075(0.049) & 0.007(0.008)\\
         $\kappa = 5, d_0 = 20$ & &0.025(0.018) &0.010(0.007) &0.076(0.065) & 0.010(0.012)\\
         \hline
         $\kappa = 6, d_0 = 10$ & &0.009(0.010) &0.007(0.004) &0.100(0.028) & 0.034(0.049)\\
         $\kappa = 6, d_0 = 15$ & All &0.022(0.017) &0.009(0.005) &0.101(0.029) & 0.037(0.049)\\
         $\kappa = 6, d_0 = 20$ & &0.023(0.017) &0.010(0.006) &0.102(0.031) & 0.028(0.043)\\
         \hline
         $\kappa = 6, d_0 = 10$ & &0.009(0.010) &0.007(0.004) &0.061(0.064) & 0.006(0.010)\\
         $\kappa = 6, d_0 = 15$ & $\hat{K} = K$ &0.022(0.017) &0.009(0.005) &0.064(0.050) & 0.007(0.007)\\
         $\kappa = 6, d_0 = 20$ & &0.023(0.017) &0.010(0.006) &0.076(0.041) & 0.009(0.013)\\
         \hline
    \end{tabular}
    \end{center}
\end{table*}

\begin{figure}[h]\label{error-plot}
 
\includegraphics[width=9cm]{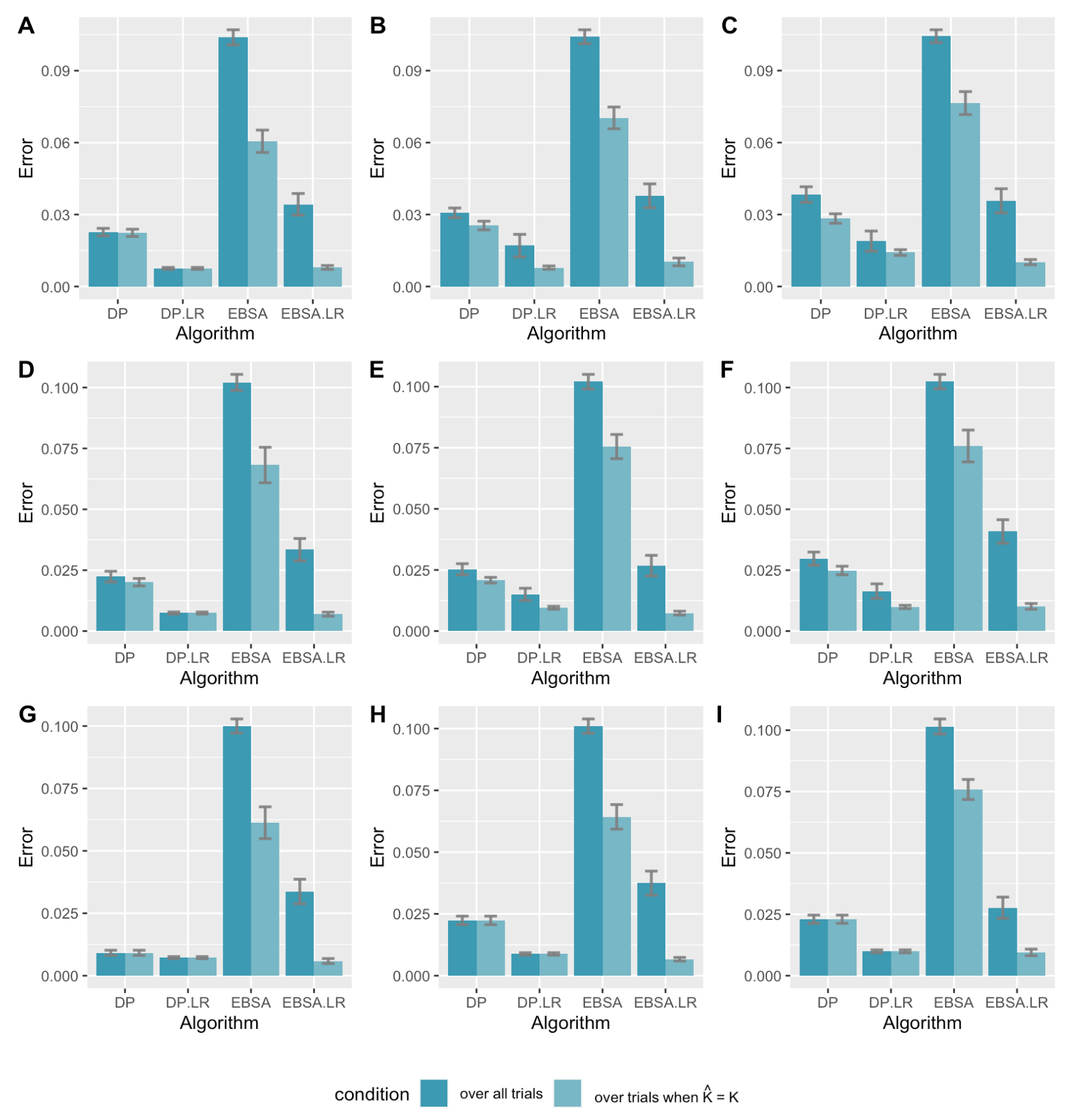}
\caption{Bar plots for results in Table \ref{error-table}. Plots A-C are settings with $\kappa = 4$ and $d_0 \in \{10,15,20\}$; Plots D-F are settings with $\kappa = 5$ and  $d_0 \in \{10,15,20\}$; and Plots G-I are settings with $\kappa = 6$ and  $d_0 \in \{10,15,20\}$.}
\end{figure}

\subsection{Air Quality Data}
In this subsection, we consider the air quality data from \url{https://www.kaggle.com/nelsonchu/air-quality-in-northern-taiwan}. It collects environment information and air quality data from Northern Taiwan in 2015.  We choose the PM10 in Banqiao as the response variables, with covariates being the temperature, the CO level, the NO level, the $\text{NO}_2$ level, the  $\text{NO}_x$ level, the rainfall quantity, the humidity quantity, the $\text{SO}_2$ level, the ultraviolet index, the wind speed, the wind direction and the PM10 levels in Guanyin, Longtan, Taoyuan, Xindian, Tamsui, Wanli and Keelung District.  
%
%
We transfer the hourly data into daily by averaging across 24 hours. After removing all dates containing missing values, we obtain a data set with $n = 343$ days and $p = 18$ covariates.  Our goal is to detect potential change points of this data set and determine if they are consistent with the historical information.

We standardize the data so that the variance of $y_t = 1$, for all $t$.  We then conduct DP, DP.LR, EBSA and EBSA.LR. DP estimates 2 change points which are March 16th and November 1st, 2015.  No change points are detected by EBSA. DP.LR and EBSA.LR both detect May 15th and October 25th, 2015 as the change points. 

The first change point detected by DP.LR and EBSA.LR seems to correspond with the first strong-enough typhoon near Northern Taiwan in 2015, which happened during May 6th-20th \citep[e.g.][]{wiki}.  The second change points from EBSA.LR, DP.LR and DP are relatively close and they all could be explained by the severe air pollution at the beginning of November in Taiwan, which reached the hazardous purple alert on November 8th \citep[e.g.][]{wikipollution}.  The visualization is shown in \Cref{fig-realdata}.

\begin{figure}[htbp]
\centerline{}
\includegraphics[width=0.45\textwidth]{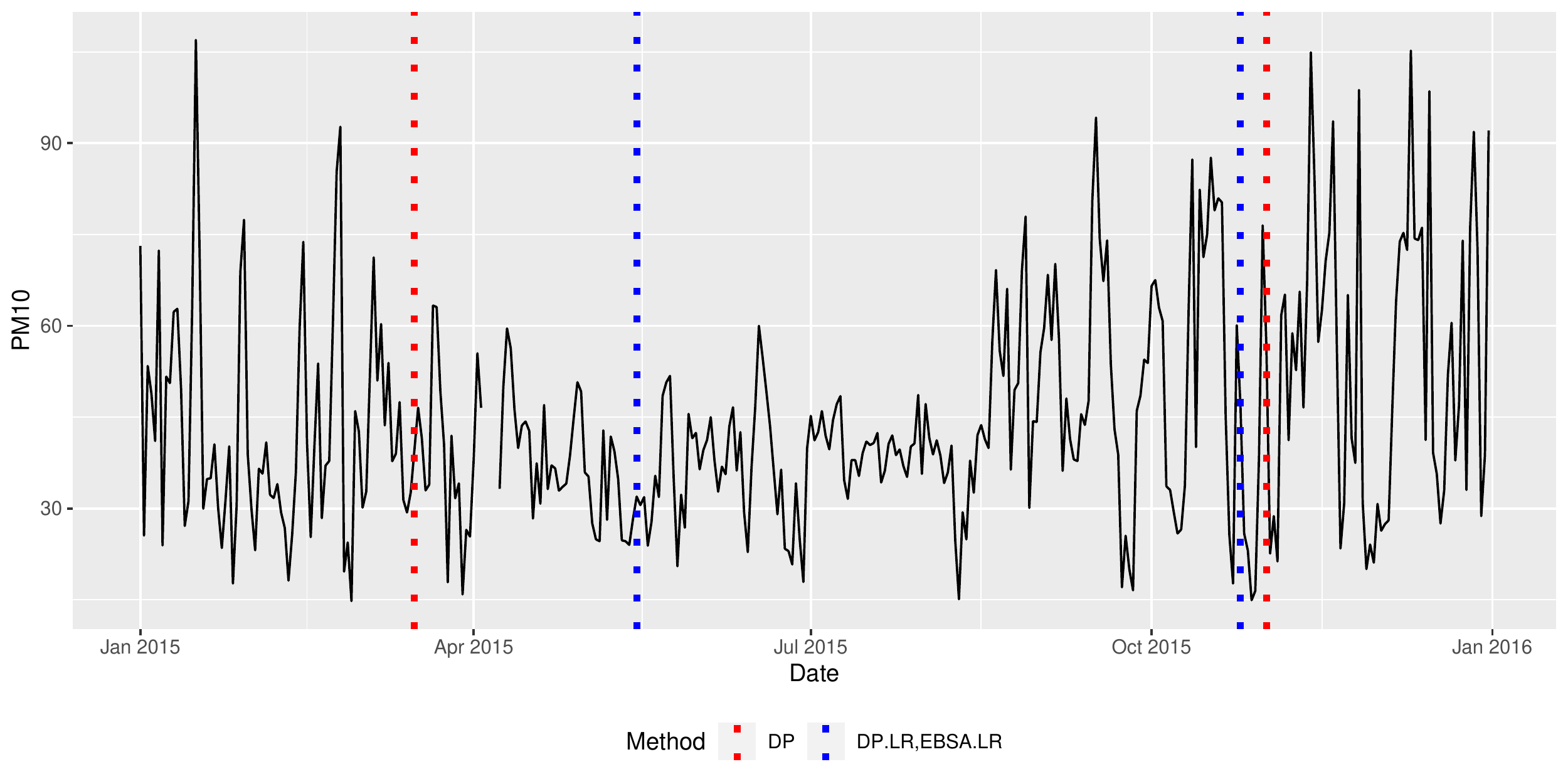}
\caption{Real air quality example.  When we tune EBSA for its standalone performance, we obtain zero change points, but when we tune EBSA to optimize the EBSA.LR's performance, we obtain two change points.\label{fig-realdata}}
\end{figure}

\section{DISCUSSION}

In this paper, we in fact provide a general framework for analyzing general regression-type change point localization problems that include the linear regression model above as a special case.   The analysis in this paper may be utilized as a blueprint for more complex change point localization problems.  In our analysis, we develop a new and refined toolbox for the change point detection community to study more complex data generating mechanisms above and beyond linear regression models.

\newpage

\bibliographystyle{authordate1}
\bibliography{citations}

\onecolumn
\aistatstitle{Localizing Changes in High-Dimensional Regression Models: \\
Supplementary Materials}

\section{Proof of Theorem~\ref{eq:DP consistent for regression}}\label{app-pf1}

\subsection{Sketch of the Proofs}\label{sec-rm}

In this subsection, we first sketch the proof of Theorem~\ref{eq:DP consistent for regression}, which serves as a general template to derive upper bounds on the localization error change point problems in the general regression framework described in \Cref{assume:change point regression model}. 

\Cref{eq:DP consistent for regression} is an immediate consequence of Propositions~\ref{proposition:dp step 1} and \ref{prop-2}.

\begin{proposition}\label{proposition:dp step 1}
Under the same conditions in \Cref{eq:DP consistent for regression} and letting $\widehat{\mathcal{P}}$ being the solution to \eqref{eq-wide-p}, the following hold with probability at least $1 - C(n \vee p)^{-c}$. 
	\begin{itemize}
		\item [(i)] For each interval  $\widehat I = (s, e] \in \widehat{\mathcal{P}}$    containing one and only one true change point $\eta$, it must be the case that
			\[
				\min\{e - \eta, \eta - s\} \leq C_{\epsilon} \left(\frac{d_0 \lambda^2 + \gamma}{\kappa^2}\right),
			\]
			where $C_{\epsilon} > 0$ is an absolute constant;
		\item [(ii)] for each interval  $\widehat I = (s, e] \in \widehat{\mathcal{P}}$ containing exactly two true change points, say  $\eta_1 < \eta_2$, it must be the case that
			\[
				\max\{e - \eta_2, \eta_1 - s\} \leq C_{\epsilon} \left(\frac{d_0 \lambda^2 + \gamma}{\kappa^2}\right), 	
			\]  
			where $C_{\epsilon} > 0$ is an absolute constant;
		\item [(iii)] for all consecutive intervals $\widehat I$ and $\widehat {J}$ in $\widehat{P}$, the interval $\widehat I \cup \widehat{J}$ contains at least one true change point; and
		\item [(iv)] no interval $\widehat I \in \widehat{\mathcal{P}}$ contains strictly more than two true change points.
	 \end{itemize}
\end{proposition}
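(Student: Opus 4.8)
The plan is to condition on a high-probability event on which the Lasso losses $\mathcal L(I)$ are well behaved \emph{simultaneously} over all $O(n^2)$ integer subintervals $I\subseteq(0,n]$, and then pit the deterministic basic inequality coming from the optimality of $\widehat{\mathcal P}$ in \eqref{eq-wide-p} against carefully chosen competitor partitions. Concretely, writing $\mathrm{Sig}(I)=\min_{\beta\in\mathbb R^p}\sum_{t\in I}\|\beta_t^*-\beta\|_\Sigma^2$ for the population ``unexplained signal'' on $I$, I would first show that, on an event of probability at least $1-C(n\vee p)^{-c}$, the following two bounds hold for \emph{every} $I$ at once: \textbf{(UB)} $\mathcal L(I)\le\sum_{t\in I}\varepsilon_t^2+C d_0\lambda^2$ whenever $\beta_t^*$ is constant on $I$ (i.e.\ $\mathrm{Sig}(I)=0$); and \textbf{(LB)} $\mathcal L(I)\ge\sum_{t\in I}\varepsilon_t^2-C d_0\lambda^2+c\,\mathrm{Sig}(I)$ in general. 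The multiplicative factor $\max\{|I|,\log(n\vee p)\}$ in the penalty of \eqref{eq-beta-1} is exactly what forces \textbf{(UB)}--\textbf{(LB)} to remain valid for short intervals: for $|I|\ge\log(n\vee p)$ this is the usual fast-rate Lasso analysis (prediction error and cross term $\sum_{t\in I}\varepsilon_t x_t^\top(\widehat\beta_I^\lambda-\beta^*)$ of order $d_0\lambda^2$), while for $|I|<\log(n\vee p)$ one bounds $\mathcal L(I)$ by evaluating the Lasso objective at $\beta^*|_I$, using $\|\beta^*|_I\|_1\le C_\beta d_0$ and a Bernstein bound on $\sum_{t\in I}(x_t^\top\beta^*)^2$. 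With $\lambda\asymp\sigma_\varepsilon\sqrt{d_0\log(n\vee p)}$ one has $d_0\lambda^2\asymp d_0^2\sigma_\varepsilon^2\log(n\vee p)$, and $\gamma$ is a large constant multiple of $(K+1)d_0^2\sigma_\varepsilon^2\log(n\vee p)$, so $\gamma\gg d_0\lambda^2$ as soon as $C_\gamma$ is large relative to $C_\lambda$. I would also record the elementary single--change--point variance identity $\mathrm{Sig}((s,e])=\tfrac{(\eta-s)(e-\eta)}{e-s}\|\beta^*_\eta-\beta^*_{\eta-1}\|_\Sigma^2\ge\tfrac{c_x^2}{2}\min\{\eta-s,e-\eta\}\kappa^2$ and, for (iv), its multi--change--point counterpart: if $(s,e]$ contains change points $\eta_1<\cdots<\eta_r$, restricting the minimisation to the $r-1$ inner segments (each of length $\ge\Delta$, with consecutive coefficient vectors differing by $\ge\kappa$) gives $\mathrm{Sig}((s,e])\ge c\,c_x^2(r-2)\Delta\kappa^2$.

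Granting \textbf{(UB)}--\textbf{(LB)}, each of (i)--(iv) follows by modifying $\widehat{\mathcal P}$ and invoking optimality. Splitting $\widehat I$ into $m$ pieces yields $\mathcal L(\widehat I)\le\sum_{\text{pieces}}\mathcal L(\text{piece})+(m-1)\gamma$, and merging two consecutive $\widehat I,\widehat J$ yields $\mathcal L(\widehat I)+\mathcal L(\widehat J)+\gamma\le\mathcal L(\widehat I\cup\widehat J)$. \textbf{(iii)} If $\widehat I\cup\widehat J$ contained no change point, the merge inequality with \textbf{(LB)} on $\widehat I,\widehat J$ and \textbf{(UB)} on $\widehat I\cup\widehat J$ (all zero-signal, the $\sum\varepsilon_t^2$ terms cancelling) gives $\gamma\le C d_0\lambda^2$, contradicting $\gamma\gg d_0\lambda^2$. \textbf{(i)} If $\widehat I=(s,e]$ has one change point $\eta$ with $\min\{\eta-s,e-\eta\}$ exceeding the claimed bound (hence both sides exceed $\log(n\vee p)$, so \textbf{(UB)} applies to the two change-point-free halves), splitting at $\eta$ and using \textbf{(LB)} and the variance identity gives $\tfrac{c\,c_x^2}{2}\min\{\eta-s,e-\eta\}\kappa^2\le 3C d_0\lambda^2+\gamma$, i.e.\ $\min\{\eta-s,e-\eta\}\le C_\epsilon(d_0\lambda^2+\gamma)/\kappa^2$, a contradiction for $C_\epsilon$ large. \textbf{(ii)} If $\widehat I$ has change points $\eta_1<\eta_2$ with, say, $\eta_1-s$ exceeding the claimed bound, split at $\eta_1,\eta_2$; bound $\mathrm{Sig}(\widehat I)$ below by restricting to a sub-interval containing only $\eta_1$, giving $\mathrm{Sig}(\widehat I)\ge\tfrac{c_x^2}{2}\min\{\eta_1-s,\eta_2-\eta_1\}\kappa^2$; the case $\eta_1-s\ge\Delta$ is impossible since it would force $\Delta\kappa^2\lesssim(K+1)d_0^2\sigma_\varepsilon^2\log(n\vee p)$, contradicting \eqref{eq:snr}, while if $\eta_1-s<\Delta$ the bound reads $\tfrac{c_x^2}{2}(\eta_1-s)\kappa^2$ and one concludes as in (i); symmetry handles $e-\eta_2$. \textbf{(iv)} If $\widehat I$ had $r\ge3$ change points, split at all of them; \textbf{(LB)} with the multi--change--point signal bound and optimality give $c\,c_x^2(r-2)\Delta\kappa^2\le(r+2)C(d_0\lambda^2+\gamma)$, whence (using $\tfrac{r+2}{r-2}\le5$) $\Delta\kappa^2\lesssim(K+1)d_0^2\sigma_\varepsilon^2\log(n\vee p)$, again contradicting \eqref{eq:snr} once $C_{\mathrm{SNR}}$ is large enough (if $\xi>0$, equivalently for $n$ past a fixed threshold).

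I expect the technical heart, and the main obstacle, to be establishing the uniform Lasso bounds \textbf{(UB)}--\textbf{(LB)}, especially the lower bound: one must show that the Lasso fit $\widehat\beta_I^\lambda$ on an \emph{arbitrary} interval $I$ cannot explain away more of the piecewise-constant signal than the $d_0\lambda^2$ error budget allows. This requires a compatibility/restricted-eigenvalue argument valid simultaneously over all $I$ (including short intervals, where the Lasso may simply return $0$), together with uniform control of $\sum_{t\in I}\varepsilon_t x_t^\top v$ and of the empirical quadratic forms $\sum_{t\in I}(x_t^\top v)^2$ for approximately $d_0$-sparse $v$; the $\max\{|I|,\log(n\vee p)\}$ scaling of the penalty is precisely the device that lets a single union bound over all intervals succeed at the right logarithmic order. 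Once \textbf{(UB)}--\textbf{(LB)} are in hand, parts (i)--(iv) are bookkeeping with the basic inequality and the deterministic signal lower bounds are elementary. A subtle but routine point is the ordering of constants: $C_\lambda$ is fixed first so the Lasso bounds hold, $C_\gamma$ is then taken large relative to $C_\lambda$ for (iii), and finally (ii) and (iv) require $C_{\mathrm{SNR}}$ (or $n$) large relative to $C_\lambda,C_\gamma$.
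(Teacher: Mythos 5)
Your proposal follows essentially the same route as the paper's proof: uniform restricted-eigenvalue and noise-deviation events over all intervals, Lasso oracle bounds placing $\mathcal L(I)$ within $O(d_0\lambda^2)$ of the oracle loss $\sum_{t\in I}(y_t-x_t^\top\beta_t^*)^2$ (including the averaged-coefficient target on intervals straddling change points, which is exactly why $\lambda$ carries the extra $\sqrt{d_0}$), the signal identity $\frac{|I_1||I_2|}{|I|}\kappa^2\ge\frac{\kappa^2}{2}\min\{|I_1|,|I_2|\}$, and case-by-case contradictions against competitor partitions obtained by splitting or merging, with the SNR condition disposing of cases (ii) and (iv). The only cosmetic discrepancy is that your \textbf{(LB)} should carry an error budget proportional to the number of change points inside $I$ rather than a fixed $Cd_0\lambda^2$ (one cross term per homogeneous segment, as your own case-(iv) bookkeeping with the factor $r+2$ already implicitly assumes), which is harmless since that budget is dominated by a constant multiple of $\gamma\asymp(K+1)d_0^2\sigma_\varepsilon^2\log(n\vee p)$.
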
 


\begin{proposition}\label{prop-2}
Under the same conditions in \Cref{eq:DP consistent for regression}, with $\widehat{\mathcal{P}}$ being the solution to \eqref{eq-wide-p}, satisfying $K \leq |\widehat{\mathcal{P}}| \leq 3K$, then with probability at least $1 - C(n \vee p)^{-c}$, it holds that $|\widehat{\mathcal{P}}| = K$.	
\end{proposition}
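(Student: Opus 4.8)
The plan is to argue by contradiction: assuming that $|\widehat{\mathcal P}| > K$ (the bound $|\widehat{\mathcal P}|\ge K$ being part of the hypothesis), I will produce a competing interval partition $\mathcal P'$ with $|\mathcal P'| = |\widehat{\mathcal P}|-1$ whose penalized cost is strictly smaller, contradicting the optimality of $\widehat{\mathcal P}$ in \eqref{eq-wide-p}. Since $\widehat{\mathcal P}$ is a minimizer, $\sum_{I\in\widehat{\mathcal P}}\mathcal L(I)+\gamma|\widehat{\mathcal P}| \le \sum_{I\in\mathcal P'}\mathcal L(I)+\gamma(|\widehat{\mathcal P}|-1)$, so it is enough to exhibit such a $\mathcal P'$ with $\sum_{I\in\mathcal P'}\mathcal L(I)-\sum_{I\in\widehat{\mathcal P}}\mathcal L(I)<\gamma$ on an event of probability at least $1-C(n\vee p)^{-c}$.

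The first ingredient is the combinatorial bookkeeping furnished by \Cref{proposition:dp step 1}. Each true change point lies in exactly one interval of $\widehat{\mathcal P}$, and by part (iv) each interval contains at most two of them; letting $n_0,n_1,n_2$ count the intervals containing $0,1,2$ true change points, we have $n_1+2n_2=K$ and $|\widehat{\mathcal P}|=n_0+n_1+n_2=n_0+K-n_2$, so $|\widehat{\mathcal P}|>K$ forces $n_0>n_2\ge 0$; in particular $\widehat{\mathcal P}$ contains an interval $\widehat I=(s,e]$ with no true change point. By part (iii) neither neighbour of $\widehat I$ can also be change-point-free, and by parts (i)--(ii), used in their jump-adapted form (i.e.\ with $\kappa^2$ in the denominator replaced by the squared jump at the relevant change point, which is what the proof of \Cref{proposition:dp step 1} actually yields), every true change point lying in a neighbouring interval sits within $C_\epsilon(d_0\lambda^2+\gamma)/\|\mathrm{jump}\|_2^2$ of an endpoint of that interval. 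I then form $\mathcal P'$ by deleting the detected break shared by $\widehat I$ and a suitably chosen neighbour: if that break is within the above distance of a true change point I first relocate it there (this keeps $|\widehat{\mathcal P}|$ unchanged and, by the same restricted-eigenvalue/Lasso estimates used for \Cref{proposition:dp step 1}, perturbs $\sum\mathcal L$ by at most $O(\sigma_{\varepsilon}^2 d_0\log(n\vee p))$), and then merge the two adjacent intervals, arranging things so that the merged interval still carries each of its at most two true change points within $C_\epsilon(d_0\lambda^2+\gamma)/\|\mathrm{jump}\|_2^2$ of its own endpoints.

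The core estimate is then the bound on the loss increase $\mathcal L(\widehat I\cup\widehat J)-\mathcal L(\widehat I)-\mathcal L(\widehat J)$ caused by the merge. I would upper bound $\mathcal L(\widehat I\cup\widehat J)$ by comparing the single penalized Lasso fit on the merged interval against the oracle piecewise-constant signal and expanding the squared error into a variance-type term $\lesssim\sigma_{\varepsilon}^2 d_0\log(n\vee p)$ controlled by a deviation inequality holding uniformly over all $O(n^2)$ sub-intervals, a cross term of the same order, and a bias term from the at-most-two change points inside, which by the near-endpoint property is $\lesssim(\text{distance to endpoint})\times\|\mathrm{jump}\|_2^2\lesssim d_0\lambda^2+\gamma$; lower bounding $\mathcal L(\widehat I)$ and $\mathcal L(\widehat J)$ by $\sum_t\varepsilon_t^2$ minus analogous deviations and cancelling the noise sums gives $\mathcal L(\widehat I\cup\widehat J)-\mathcal L(\widehat I)-\mathcal L(\widehat J)\le C'(d_0\lambda^2+\sigma_{\varepsilon}^2 d_0\log(n\vee p))+(\text{small constant})\cdot\gamma$, which with $\lambda=C_\lambda\sigma_{\varepsilon}\sqrt{d_0\log(n\vee p)}$ and $\gamma=C_\gamma\sigma_{\varepsilon}^2(K+1)d_0^2\log(n\vee p)$ is strictly below $\gamma$ once $C_\gamma$ is large relative to $C_\lambda$ and the absolute constants, giving the contradiction and hence $|\widehat{\mathcal P}|=K$. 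The step I expect to be the main obstacle is precisely this loss-increase bound: one must (a) establish the Lasso deviation inequalities uniformly over the random family of all sub-intervals rather than only the data-dependent ones in $\widehat{\mathcal P}$, which is where the $\max\{|I|,\log(n\vee p)\}$ weighting in \eqref{eq-beta-1} enters, (b) control the bias generated by a change point lying near but not exactly at an endpoint of the merged interval without losing the jump-adapted scaling, and (c) track absolute constants so that the final bound is a genuine fraction of $\gamma$, not merely of the same order; the combinatorial task of always locating a "safe" merge — one that does not push a large-jump change point into the interior of the merged interval — also needs care, and is exactly where the relocation device and \Cref{proposition:dp step 1}(iii) are used.
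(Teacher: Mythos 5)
Your proposal takes a genuinely different route from the paper's, and it has a gap at exactly the step you flag as the main obstacle; unfortunately that obstacle is structural rather than a matter of constant-tracking. Your plan is to delete one detected break and show the resulting loss increase is strictly below $\gamma$. But by \Cref{proposition:dp step 1}(i)--(ii), a true change point in the neighbouring interval is only guaranteed to lie within $C_{\epsilon}(d_0\lambda^2+\gamma)/\kappa^2$ of an endpoint, where $C_{\epsilon}$ is a large absolute constant produced by the proof of \Cref{proposition:dp step 1} (it involves factors like $c_x^{-4}$ and numerical constants well above $1$). After the merge, even in the most favourable configuration (the surviving change point near the \emph{far} endpoint of the merged interval), the bias contribution to $\mathcal{L}(\widehat I\cup\widehat J)$ is of order $(\text{distance to endpoint})\times\kappa^2 \approx C_{\epsilon}(d_0\lambda^2+\gamma) \geq C_{\epsilon}\gamma$, while the merge only saves $\gamma$ in penalty. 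Since the localization error from \Cref{proposition:dp step 1} itself scales linearly in $\gamma$, enlarging $C_{\gamma}$ enlarges the bias proportionally and cannot create the strict inequality you need; and $C_{\epsilon}$ cannot be driven below $1$. In the unfavourable configurations (a change point adjacent to the shared break, or to the break of the other neighbour), the relocation device fixes one break but you must still \emph{delete} a break to reduce the count, and whichever break you delete absorbs a nearby change point into the interior of a merged piece at distance up to the full length of $\widehat I$, giving bias up to $\min\{|\widehat I|,\Delta\}\kappa^2\gg\gamma$. So the contradiction never materializes.

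The paper avoids this entirely by comparing in the opposite direction: rather than coarsening $\widehat{\mathcal{P}}$, it \emph{refines} it by inserting all true change points, and telescopes $S_n^*$, $S_n(\eta_1,\ldots,\eta_K)$, $S_n(\widehat\eta_1,\ldots,\widehat\eta_{\widehat K})$ and $S_n(\widehat\eta_1,\ldots,\widehat\eta_{\widehat K},\eta_1,\ldots,\eta_K)$ against one another. The point is that on every sub-piece of the refined partition the signal is constant, so each comparison costs only $O(\lambda^2 d_0)$ per segment (Lemmas~\ref{lemma:RSS lasso} and \ref{lem-case-5-prop-2-needed}) with \emph{no} $\gamma$-dependence; summing gives a total slack of $O((K+\widehat K)\lambda^2 d_0)$, and the chain forces $(\widehat K-K)\gamma\leq C(5K+3)\lambda^2 d_0$, which contradicts $\gamma= C_{\gamma}\sigma_{\epsilon}^2(K+1)d_0^2\log(n\vee p)$ for $C_{\gamma}$ large. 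If you want to salvage your local approach you would need a loss-increase bound for merging that does not route through the \Cref{proposition:dp step 1} localization radius; as written, the core estimate fails.
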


\begin{proof}[Proof of \Cref{eq:DP consistent for regression}]
	It follows from \Cref{proposition:dp step 1} that, $K \leq |\widehat{\mathcal{P}}| \leq 3K$.  This combined with \Cref{prop-2} completes the proof.
\end{proof}

The key ingredients of the proofs of both Propositions~\ref{proposition:dp step 1} and \ref{prop-2} are two types of deviation inequalities.  
	\begin{itemize}
		\item \textbf{Restricted eigenvalues.}  In the literature on sparse regression, there are several versions of the restricted eigenvalue conditions \citep[see, e.g.][]{buhlmann2011statistics}.  In our analysis, such conditions amount to controlling the probability of the event
			\[
				\mathcal{E}_I = \left\{\sqrt{\sum_{t \in I} \bigl(x_t^{\top}v\bigr)^2} \geq \frac{c_x\sqrt{|I|}}{4} \|v\|_2 - 9 C_x \sqrt{\log(p)}\|v\|_1, \quad \forall v \in \mathbb{R}^p \right\},
			\]
			which is done in \Cref{lem-RE}.
		\item \textbf{Deviations bounds of scaled noise.}  In addition, we need to control the deviations of the quantities of the form
			\begin{equation}\label{eq-lambda-rate-determ}
				\left\|\sum_{t \in I} \varepsilon_t x_t \right\|_{\infty}.
			\end{equation}
			See \Cref{lem-x-bound}.
	\end{itemize}
	
In standard analyses of the performance of the Lasso estimator, as detailed e.g.~in Section 6.2 of \cite{buhlmann2011statistics}, the combination of restricted eigenvalues conditions and large probability bounds on the noise lead to oracle inequalities for the estimation and prediction errors in situations in which there exists no change point and the data are independent. We have extended this line of arguments to the present, more challenging settings, to derive analogous oracle inequalities.  We emphasize a few points in this regard.

	\begin{itemize}
		\item In standard analyses of the Lasso estimator, where there is one and only one true coefficient vector, the magnitude of $\lambda$ is determined as a high-probability upper bound to \eqref{eq-lambda-rate-determ}.  However in our situation, in order to control the $\ell_1$- and $\ell_2$-loss of the estimators $\widehat{\beta}^{\lambda}_I$, where the interval $I$ contains more than one true coefficient vectors, the value of $\lambda$ needs to be inflated by a factor of $\sqrt{d_0}$.  This is detailed in \Cref{lemma:oracle 2}; see, in particular, \eqref{eq-lem10-pf-1}.
		\item The magnitude of the tuning parameter $\gamma$ is determined based on an appropriate oracle inequality for the Lasso and on the number of true change points; more precisely, $\gamma$ can be derived as a high-probability bound for 
			\[
				\left|\sum_{t \in I} \bigl\{(y_t - x_t^{\top}\widehat{\beta}^{\lambda}_I)^2 - (y_t - x_t^{\top}\beta_t^*)^2\bigr\}\right|.
			\]
			See \Cref{lemma:RSS lasso} for details.
			
			The fact that $\gamma$ is linear in the number of change point $K$ is to prompt the consistency.  This is shown in \eqref{eq:Khat=K} in the proof of \Cref{prop-2}. 
		\item The final localization error is obtained by the following calculations.  Assume that there exists one and only one true change point $\eta \in I = (s, e]$.  Define $I_1 = (s, \eta]$ and $I_2 = (\eta_1, e]$.  Let $\beta^*_{I_1}$ and $\beta^*_{I_2}$ be the two true coefficient vectors in $I_1$ and $I_2$, respectively.   For readability, below we will omit all constants here and use the symbol $\lessapprox$ to denote an inequality up to hidden universal constants.  We first assume by contradiction that
			\begin{equation}\label{eq-main-i-cond}
				\min\{|I_1|, |I_2|\} \gtrsim d_0 \log(n \vee p),
 			\end{equation}
 			then use oracle inequalities to establish that
			\begin{align}
				& \sum_{t \in I_1} \{x_t^{\top} (\widehat{\beta}^{\lambda}_I - \beta^*_{I_1})\}^2 + \sum_{t \in I_2} \{x_t^{\top} (\widehat{\beta}^{\lambda}_I - \beta^*_{I_2})\}^2 \nonumber \\
				\lessapprox & \lambda\sqrt{\max\{|I_1|, \, \log(n \vee p)\}} \big\{\sqrt{d_0}\|(\widehat{\beta}^{\lambda}_I - \beta^*_{I_1})(S)\|_2 + \|\widehat{\beta}^{\lambda}_I(S^c)\|_1\big\} \nonumber \\
				& \hspace{2cm} + \lambda\sqrt{\max\{|I_2|, \, \log(n \vee p)\}} \big\{\sqrt{d_0}\|(\widehat{\beta}^{\lambda}_I - \beta^*_{I_2})(S)\|_2 + \|\widehat{\beta}^{\lambda}_I(S^c)\|_1\big\} + \gamma \nonumber\\				
				\lessapprox & \lambda\sqrt{|I_1|} \big\{\sqrt{d_0}\|(\widehat{\beta}^{\lambda}_I - \beta^*_{I_1})(S)\|_2 + \|\widehat{\beta}^{\lambda}_I(S^c)\|_1\big\} \nonumber \\
				& \hspace{2cm} + \lambda\sqrt{|I_2|} \big\{\sqrt{d_0}\|(\widehat{\beta}^{\lambda}_I - \beta^*_{I_2})(S)\|_2 + \|\widehat{\beta}^{\lambda}_I(S^c)\|_1\big\} + \gamma \nonumber\\
				\lessapprox & \frac{\lambda^2 d_0}{c_x^2} + |I_1| \|\widehat{\beta}^{\lambda}_I - \beta^*_{I_1}\|_2^2 + |I_2| \|\widehat{\beta}^{\lambda}_I - \beta^*_{I_2}\|_2^2 + \lambda^2 + (|I_1|^2 + |I_2|^2)\|\widehat{\beta}^{\lambda}_I(S^c)\|_1^2  + \gamma, \label{eq-illustrate-upper}		 
			\end{align}
			where the second inequality follows \eqref{eq-main-i-cond} and the third inequality follows from $2ab \leq a^2 + b^2$ and from setting
			\[
				a = \lambda\sqrt{d_0} \quad \mbox{and} \quad b = \sqrt{|I_1|}\|\widehat{\beta}^{\lambda}_I - \beta^*_{I_1}\|_2.
			\]
			Next we apply the restricted eigenvalue conditions along with standard arguments from the Lasso literature to establish that 
			\begin{align}
				& \sum_{t \in I_1} \{x_t^{\top} (\widehat{\beta}^{\lambda}_I - \beta^*_{I_1})\}^2 + \sum_{t \in I_2} \{x_t^{\top} (\widehat{\beta}^{\lambda}_I - \beta^*_{I_2})\}^2 \nonumber \\
				& \hspace{2cm} \geq c_x^2 |I_1| \|\widehat{\beta}^{\lambda}_I - \beta^*_{I_1}\|^2 + c_x^2 |I_2| \|\widehat{\beta}^{\lambda}_I - \beta^*_{I_2}\|^2 \geq c_x^2 \kappa^2 \epsilon, \label{eq-illustrate-lower}		 
			\end{align}
			where $\epsilon$ is an upper bound on the localization error.  Combining \eqref{eq-illustrate-upper} and \eqref{eq-illustrate-lower} leads to 
			\[
				\epsilon \lesssim \frac{\lambda^2 d_0 + \gamma}{\kappa^2}.
			\]
		\item Finally, the signal-to-noise ratio condition that one needs to assume in order to obtain consistent localization rates is determined by setting $\epsilon \lesssim \Delta$.	
	\end{itemize}



The proofs related with \Cref{algorithm:LR} and \Cref{cor-lr-3} are all based on an oracle inequality of the group Lasso estimator.  Once it is established that
	\begin{equation}\label{eq-intuit-g-lasso}
		\sum_{t = s + 1}^e \|\widehat{\beta}_t - \beta^*_t\|_2^2 \leq \delta \leq \kappa \sqrt{\Delta},
	\end{equation}
	where $\delta \asymp d_0 \log(n \vee p)$ and where there is one and only one change point in the interval $(s, e]$ for both the sequence $\{\widehat{\beta}_t\}$ and $\{\beta^*_t\}$, then the final claim follows immediately that the refined localization error $\epsilon$ satisfies
	\[
		\epsilon \leq \delta/\kappa^2.
	\]
	The group Lasso penalty is deployed to prompt \eqref{eq-intuit-g-lasso} and the designs of the algorithm guarantee the desirability of each working interval.

The proof of \Cref{eq:DP consistent for regression} proceeds through several steps. For convenience, \Cref{fig-rm}  provides a roadmap for the entire proof. Throughout this section, with some abuse of notation, for any interval $I \subset (0, n]$, we denote with $\beta^*_I = |I|^{-1}\sum_{t \in I} \beta^*_t$.

\begin{figure}
\centering
\begin{tikzpicture}[roundnode/.style={circle, draw=Fuchsia!60, fill=Fuchsia!5, very thick, minimum size=7mm}, squarednode/.style={rectangle, draw=CarnationPink!60, fill=CarnationPink!5, very thick, minimum size=7mm}]
\node[squarednode] (1) {\Cref{eq:DP consistent for regression}};
\node[squarednode] (2) [below of= 1, yshift = -1cm, xshift = -2.8cm] {\Cref{proposition:dp step 1}};

\node[squarednode] (18) [right of= 2, xshift = 5cm] {\Cref{prop-2}};

\node[squarednode] (5) [below of= 2, yshift = -1cm, xshift = 0cm] {$\stackrel{\mbox{Case (iii)}}{\mbox{Lemma } \ref{lemma:no change point}}$};
\node[squarednode] (4) [right of= 5, xshift = -4cm] {$\stackrel{\mbox{Case (ii)}}{\mbox{Lemma } \ref{lemma:two change point}}$};
\node[squarednode] (3) [right of= 4, xshift = -4cm] {$\stackrel{\mbox{Case (i)}}{\mbox{Lemma } \ref{lemma:one change point}}$};
\node[squarednode] (6) [right of= 5, xshift = 2cm] {$\stackrel{\mbox{Case (iv)}}{\mbox{Lemma } \ref{lemma:three change point}}$};

\node[squarednode](17) [below of= 18, yshift = -1.2cm] {\Cref{lem-case-5-prop-2-needed}};

\draw[-{Latex[width=2mm]}] (2) -- (1);
\draw[-{Latex[width=2mm]}] (3) -- (2);
\draw[-{Latex[width=2mm]}] (4) -- (2);
\draw[-{Latex[width=2mm]}] (5) -- (2);
\draw[-{Latex[width=2mm]}] (6) -- (2);
\draw[-{Latex[width=2mm]}] (18) -- (1);

\draw[-{Latex[width=2mm]}] (17) -- (18);
\end{tikzpicture}
\caption{Road map to complete  the proof of Theorem~\ref{eq:DP consistent for regression}.   The directed edges mean the heads of the edges are used in the tails of the edges. \label{fig-rm}}
\end{figure}
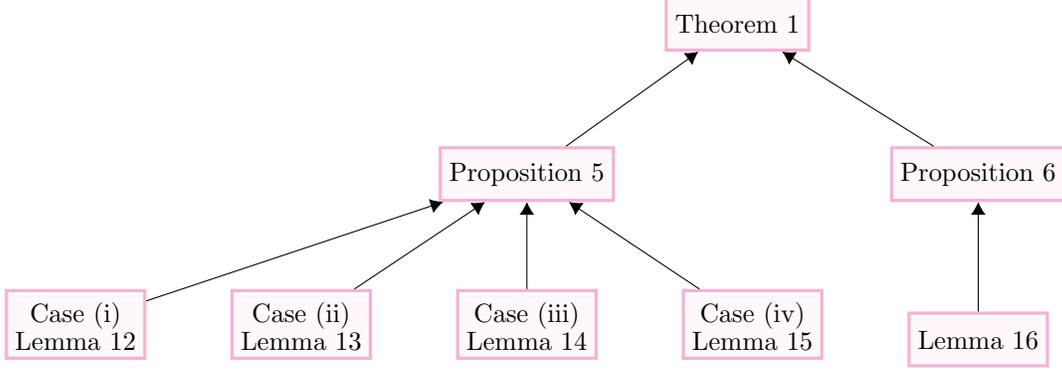
 
\subsection{Large Probability Events}

\begin{lemma}\label{lem-RE}
For \Cref{assume:change point regression model}, under \Cref{assume:high dim coefficient}(\textbf{c}), for any interval $I \subset (0, n]$, it holds that
	\[
		\mathbb{P}\{\mathcal{E}_I\} \geq 1 - c_1 \exp(-c_2 |I|),
	\]
	where $c_1, c_2 > 0$ are absolute constants only depending on the distributions of covariants $\{x_t\}$, and
	\[
		\mathcal{E}_I = \left\{\sqrt{\sum_{t \in I} \bigl(x_t^{\top}v\bigr)^2} \geq \frac{c_x\sqrt{ |I|}}{4} \|v\|_2 - 9 C_x \sqrt{\log(p)}\|v\|_1, \quad v \in \mathbb{R}^p \right\}.
	\]
\end{lemma}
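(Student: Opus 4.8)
The plan is to establish the lower bound
\[
\sqrt{\sum_{t\in I}(x_t^\top v)^2}\;\geq\;\frac{c_x\sqrt{|I|}}{4}\|v\|_2 - 9C_x\sqrt{\log p}\,\|v\|_1
\]
uniformly in $v\in\mathbb{R}^p$ by a standard two-part argument: a pointwise concentration inequality for fixed sparse directions, followed by a covering/peeling step to upgrade to a uniform statement. Since the claimed bound is homogeneous of degree one in $v$, it suffices to prove it on the unit sphere $\|v\|_2=1$. First I would reduce the problem to sparse vectors: for a generic $v$ on the sphere one decomposes $v$ into its largest coordinates and a remainder (the classical ``cone'' argument of \cite{raskutti2010restricted}), so that controlling the quadratic form on $s$-sparse unit vectors, for a suitably chosen sparsity level $s$, plus absorbing the leftover $\ell_1$-mass into the $-9C_x\sqrt{\log p}\|v\|_1$ slack, yields the full statement. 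The negative $\ell_1$ term is exactly what makes the bound hold for \emph{all} $v$ rather than only sparse ones — this is the Gaussian-width / $\ell_1$-relaxation phenomenon, and the constant $9C_x\sqrt{\log p}$ is tuned so that the sparse bound applies at sparsity level $\asymp |I|/\log p$.

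The core pointwise estimate is as follows. Fix a unit vector $v$. Then $x_t^\top v$ is mean-zero sub-Gaussian with parameter $\lesssim C_x$ (since $\mathrm{Var}(x_t^\top v)=v^\top\Sigma v\in[c_x^2,\;\|v\|_1^2 C_x^2]$ and sub-Gaussianity of $x_t$), while $\mathbb{E}[(x_t^\top v)^2]=v^\top\Sigma v\geq c_x^2$. Hence $\sum_{t\in I}(x_t^\top v)^2$ is a sum of independent sub-exponential random variables with mean $\geq c_x^2|I|$, and Bernstein's inequality gives
\[
\mathbb{P}\!\left\{\sum_{t\in I}(x_t^\top v)^2 \leq \tfrac{c_x^2}{4}|I|\,\|v\|_2^2\right\}\leq \exp(-c|I|)
\]
for an absolute constant $c$ depending only on $c_x,C_x$ (here I take the $\tfrac14$ rather than $\tfrac12$ to leave room for the union bound). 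I would then take a $\tfrac14$-net $\mathcal{N}_s$ of the set of $s$-sparse unit vectors; standard bounds give $|\mathcal{N}_s|\leq \binom{p}{s}9^s\leq (9ep/s)^s$, so the union bound costs a factor $\exp\{s\log(9ep/s)\}$. Choosing $s\asymp |I|/\log(n\vee p)$ makes this exponential factor dominated by $\exp(c|I|/2)$, so the event holds on the net with probability $\geq 1-c_1\exp(-c_2|I|)$; a routine successive-approximation argument transfers the bound from the net to all $s$-sparse unit vectors, at the price of adjusting constants (which is why the final constant is $c_x/4$, leaving slack for the net error). Finally, the cone/peeling decomposition of a general $v$ into $s$-sparse pieces, together with the triangle inequality $\sqrt{\sum_t(x_t^\top v)^2}\geq \sqrt{\sum_t(x_t^\top v_{S_0})^2}-\sqrt{\sum_t(x_t^\top v_{S_0^c})^2}$ and crude $\ell_1$--$\ell_2$ bounds on the tail pieces, yields the stated inequality for all $v\in\mathbb{R}^p$.

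The main obstacle is bookkeeping the constants so that all three sources of error — the Bernstein deviation, the net cardinality, and the sparse-to-general peeling — are simultaneously absorbed while keeping the final failure probability genuinely of the form $c_1\exp(-c_2|I|)$ with constants depending \emph{only} on the covariate distribution (in particular not on $p$ or $|I|$). The delicate point is that $|I|$ can be as small as $\log(n\vee p)$, so the sparsity level $s\asymp|I|/\log(n\vee p)$ may be $O(1)$ and one must ensure the covering argument still produces an exponent proportional to $|I|$ and not merely to $s$; this is handled by noting that even when $s$ is a small constant the term $s\log(9ep/s)$ is $\lesssim |I|$ under the stated scaling, and by routing the residual $\ell_1$-mass through the $-9C_x\sqrt{\log p}\|v\|_1$ term rather than trying to control it directly. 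An alternative, cleaner route would invoke a ready-made restricted-eigenvalue result for sub-Gaussian designs (e.g.~the arguments in \cite{buhlmann2011statistics}) and simply rescale; I would mention this but carry out the self-contained Bernstein-plus-net proof to keep the constants explicit.
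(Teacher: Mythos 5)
Your strategy is sound, but it is not the route the paper takes: the paper gives no proof at all for this lemma, stating only that it ``follows from the same proof as Theorem~1 in \cite{raskutti2010restricted}.'' That cited proof is built on Gaussian comparison inequalities (Gordon's minimax theorem) together with concentration of Lipschitz functions of Gaussian vectors, and as written it applies to Gaussian designs; your Bernstein-plus-net-plus-peeling argument is a genuinely different and arguably more appropriate route here, since \Cref{assume:change point regression model} only assumes sub-Gaussian covariates (your approach is essentially the one used to extend restricted-eigenvalue bounds to sub-Gaussian designs, \`a la Rudelson--Zhou). What the Gaussian-comparison route buys is clean, dimension-free constants with no covering argument; what your route buys is applicability to the actual sub-Gaussian model and explicit control of how $c_1,c_2$ depend on $c_x,C_x$. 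Your reduction to $s$-sparse vectors with $s\asymp |I|/\log p$, the Bernstein bound $\exp(-c|I|)$ for a fixed direction, and the net cost $\exp\{s\log(9ep/s)\}\lesssim\exp(c|I|/2)$ are all correct, and the regime $|I|\lesssim\log p$ is harmless because there the right-hand side of the event $\mathcal{E}_I$ is negative and the claim is vacuous (the paper itself only ever invokes $\mathcal{E}_I$ when $|I|\gtrsim d_0\log(n\vee p)$).

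One step in your sketch is under-specified and would fail if taken literally: the ``crude $\ell_1$--$\ell_2$ bounds on the tail pieces.'' A genuinely crude bound such as $|x_t^{\top}v_{S_0^c}|\leq\|x_t\|_{\infty}\|v\|_1$ yields $\sqrt{\sum_{t\in I}(x_t^{\top}v_{S_0^c})^2}\lesssim\sqrt{|I|\log(np)}\,\|v\|_1$, which is too large by a factor of order $\sqrt{|I|}$ relative to the $9C_x\sqrt{\log p}\,\|v\|_1$ slack you must hit. To make the peeling close, you need a uniform \emph{upper} restricted-isometry bound over $s$-sparse vectors, $\sqrt{\sum_{t\in I}(x_t^{\top}u)^2}\leq C\sqrt{|I|}\,\|u\|_2$ on the same high-probability event (proved by the same Bernstein-plus-net argument, in the other tail), combined with the sorted-block inequality $\sum_{j\geq 1}\|v_{S_j}\|_2\leq\|v\|_1/\sqrt{s}$; only then does the tail contribute $C\sqrt{|I|}\cdot\|v\|_1/\sqrt{s}\asymp C'\sqrt{\log p}\,\|v\|_1$. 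With that upper bound added, your argument goes through.
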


This follows from the same proof as Theorem~1 in \cite{raskutti2010restricted}, therefore we omit the proof of \Cref{lem-RE}.  For interval $I$ satisfying $|I| > C d_0\log(p)$, an immediate consequence of \Cref{lem-RE} is a restricted eigenvalue condition \citep[e.g.][]{van2009conditions, bickel2009simultaneous}.  It will be used repeatedly in the rest of this paper.

It will become clearer in the rest of the paper, we only deal with intervals satisfying $|I| \gtrsim d_0\log(n \vee p)$ when considering the events $\mathcal{E}_I$.

\begin{lemma}\label{lem-x-bound}
For \Cref{assume:change point regression model}, under \Cref{assume:high dim coefficient}(\textbf{c}), for any interval $I \subset (0, n]$, it holds that for any 
	\[
		\lambda \geq \lambda_1 := C_{\lambda}\sigma_{\varepsilon} \sqrt{\log(n \vee p)},
	\]
	where $C_{\lambda} > 0$ is a large enough absolute constant such that, we have
	\[
		\mathbb{P}\{\mathcal{B}_I(\lambda)\} > 1 - 2(n \vee p)^{-c_3},
	\]	
	where
	\[
		\mathcal{B}_I(\lambda) = \left\{\left\|\sum_{t \in I} \varepsilon_t x_t \right\|_{\infty} \leq \lambda \sqrt{\max\{|I|, \, \log(n \vee p)\}}/8 \right\},
	\]
	where $c_3 > 0$ is an absolute constant depending only on the distributions of covariants $\{x_t\}$ and $\{\varepsilon_t\}$.
\end{lemma}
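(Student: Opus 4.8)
The plan is to bound $\|\sum_{t\in I}\varepsilon_t x_t\|_\infty$ coordinate-by-coordinate via a union bound over the $p$ coordinates, using a Bernstein-type tail inequality for sums of products of independent sub-Gaussian random variables. First I would fix a coordinate $j\in\{1,\ldots,p\}$ and consider the scalar sum $Z_j = \sum_{t\in I}\varepsilon_t x_t(j)$. Since $\{x_t\}$ and $\{\varepsilon_t\}$ are independent, and each $\varepsilon_t$ is mean-zero sub-Gaussian with parameter $\sigma_\varepsilon^2$ while each $x_t(j)$ is mean-zero sub-Gaussian with parameter controlled by $(\Sigma)_{jj}\le C_x^2$ (Assumption~\ref{assume:high dim coefficient}(\textbf{c})), each product $\varepsilon_t x_t(j)$ is a mean-zero sub-exponential random variable with sub-exponential parameter of order $\sigma_\varepsilon C_x$. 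The sum $Z_j$ is therefore a sum of $|I|$ independent mean-zero sub-exponential variables, and Bernstein's inequality (see e.g.~\cite{buhlmann2011statistics} or standard references) gives
\[
\mathbb{P}\bigl\{|Z_j| > u\bigr\} \le 2\exp\left(-c\min\left\{\frac{u^2}{|I|\sigma_\varepsilon^2 C_x^2},\,\frac{u}{\sigma_\varepsilon C_x}\right\}\right)
\]
for an absolute constant $c>0$.

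Next I would choose $u = \lambda\sqrt{\max\{|I|,\log(n\vee p)\}}/8$ and verify that with this choice the right-hand side is at most $2(n\vee p)^{-c_3-1}$ whenever $\lambda \ge C_\lambda \sigma_\varepsilon\sqrt{\log(n\vee p)}$ with $C_\lambda$ large enough. There are two regimes to check. When $|I|\ge\log(n\vee p)$, we have $u = \lambda\sqrt{|I|}/8 \ge (C_\lambda/8)\sigma_\varepsilon\sqrt{|I|\log(n\vee p)}$, so $u^2/(|I|\sigma_\varepsilon^2 C_x^2) \gtrsim C_\lambda^2\log(n\vee p)/C_x^2$, which dominates and gives the subexponential tail as required once $C_\lambda$ is chosen large relative to $C_x$ and $c_3$; one also checks the linear term $u/(\sigma_\varepsilon C_x) \gtrsim \sqrt{|I|\log(n\vee p)} \ge \log(n\vee p)$, so it too is of the right order. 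When $|I| < \log(n\vee p)$, we have $u = \lambda\sqrt{\log(n\vee p)}/8 \ge (C_\lambda/8)\sigma_\varepsilon\log(n\vee p)$, and since $|I| < \log(n\vee p)$ the quadratic term $u^2/(|I|\sigma_\varepsilon^2 C_x^2) > u^2/(\log(n\vee p)\sigma_\varepsilon^2 C_x^2) \gtrsim C_\lambda^2\log(n\vee p)/C_x^2$, again of the right order, and the linear term is $\gtrsim C_\lambda\log(n\vee p)$. In both cases $\mathbb{P}\{|Z_j|>u\}\le 2(n\vee p)^{-c_3-1}$ for suitable absolute constants.

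Finally I would take a union bound over $j=1,\ldots,p$:
\[
\mathbb{P}\{\mathcal{B}_I(\lambda)^c\} = \mathbb{P}\Bigl\{\max_{j}|Z_j| > u\Bigr\} \le p\cdot 2(n\vee p)^{-c_3-1} \le 2(n\vee p)^{-c_3},
\]
since $p\le n\vee p$. This yields the stated bound $\mathbb{P}\{\mathcal{B}_I(\lambda)\} > 1 - 2(n\vee p)^{-c_3}$. The only mildly delicate step is the bookkeeping of constants — ensuring that a single absolute choice of $C_\lambda$ (depending only on $C_x$, $\sigma_\varepsilon$'s normalization, and the desired exponent $c_3$) simultaneously handles both the $|I|\ge\log(n\vee p)$ and $|I|<\log(n\vee p)$ regimes and absorbs the factor $p$ in the union bound; but this is routine once the Bernstein inequality is in place. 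No step presents a genuine obstacle: the key structural input is simply the independence of $\{x_t\}$ and $\{\varepsilon_t\}$, which makes $\varepsilon_t x_t(j)$ a clean sub-exponential summand.
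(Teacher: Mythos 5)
Your proposal is correct and follows essentially the same route as the paper: the paper likewise observes that $\varepsilon_t x_t$ is sub-exponential with parameter of order $C_x\sigma_{\varepsilon}$ (via Lemma~2.7.7 of \cite{vershynin2018high}), applies Bernstein's inequality to get a tail bound with the factor $2p$ (which is exactly your coordinate-wise union bound), and then chooses the threshold $t \asymp C_{\lambda}\sigma_{\varepsilon}\sqrt{\log(n\vee p)}\sqrt{\max\{|I|,\log(n\vee p)\}}$. Your explicit verification of the two regimes $|I|\ge\log(n\vee p)$ and $|I|<\log(n\vee p)$ is a welcome elaboration of a step the paper leaves implicit, and the constant bookkeeping is handled correctly.
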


For notational simplicity, we drop the dependence on $\lambda$ in the notation $\mathcal{B}_I(\lambda)$.

\begin{proof}
	Since $\varepsilon_t$'s are sub-Gaussian random variables and $x_t$'s are sub-Gaussian random vectors, we have that $\varepsilon_t x_t$'s are sub-Exponential random vectors with parameter $C_x \sigma_{\varepsilon}$ \citep[see e.g.~Lemma~2.7.7 in][]{vershynin2018high}.  It then follows from Bernstein's inequality \citep[see e.g.~Theorem~2.8.1 in][]{vershynin2018high} that for any $t > 0$,
		\[
			\mathbb{P}\left\{\left\|\sum_{t \in I} \varepsilon_t x_t \right\|_{\infty} > t\right\} \leq 2p \exp\left\{-c \min\left\{\frac{t^2}{|I|C_x^2\sigma^2_{\varepsilon}}, \, \frac{t}{C_x \sigma_{\varepsilon}}\right\}\right\}.
		\]
		Taking 
		\[
			t = C_{\lambda}C_x/4 \sigma_{\varepsilon} \sqrt{\log(n \vee p)} \sqrt{\max\{|I|, \, \log(n \vee p)\}}
		\] 
		yields that 
		\[
			\mathbb{P}\{\mathcal{B}_I\} > 1 - 2(n \vee p)^{-c_3},
		\]
		where $c_3 > 0$ is an absolute constant depending on $C_{\lambda}, C_x, \sigma_{\varepsilon}$.
\end{proof}

\subsection{Auxiliary Lemmas}

\begin{lemma} \label{lemma:lasso}
For \Cref{assume:change point regression model}, under \Cref{assume:high dim coefficient}(\textbf{a}) and (\textbf{c}), if there exists no true change point in $I = (s, e]$, with $|I| > 288^2 C^2_x d_0 \log(n \vee p)/c_x^2$ and
	\[
		\lambda \geq \lambda_1 := C_{\lambda}\sigma_{\varepsilon} \sqrt{\log(n \vee p)},
	\] 
	where $C_{\lambda} >  0$ being an absolute constant, it holds that
	\begin{align*}
		& \mathbb{P}\left\{\bigl\|\widehat{\beta}^{\lambda}_I - \beta^*_I\bigr\|_2 \leq \frac{C_3 \lambda \sqrt{d_0}} {\sqrt{|I|}}, \quad \bigl\|\widehat{\beta}^{\lambda}_I - \beta^*_I\bigr\|_1 \leq \frac{C_3 \lambda d_0}{\sqrt{|I|}} \right\} \\
		& \hspace{5cm} \geq 	1 - c_1 (n \vee p)^{-288^2C_x^2d_0c_2/c_x^2} - 2(n \vee p)^{-c_3},
	\end{align*}
	where $C_3 > 0$ is an absolute constant depending on all the other absolute constants, $c_1, c_2, c_3$ are absolute constants defined in Lemmas~\ref{lem-RE} and \ref{lem-x-bound}.
\end{lemma}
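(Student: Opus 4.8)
The plan is to run the standard Lasso oracle-inequality argument (as in Section~6.2 of \cite{buhlmann2011statistics}) on the interval $I$, working on the intersection of the two large-probability events $\mathcal{E}_I$ and $\mathcal{B}_I(\lambda)$ supplied by Lemmas~\ref{lem-RE} and \ref{lem-x-bound}. Since there is no change point in $I$, all $\beta^*_t$ with $t\in I$ equal the common value $\beta^*_I$, which is $d_0$-sparse with support in $S$ by \Cref{assume:high dim coefficient}(\textbf{a}). First I would write down the basic inequality: by optimality of $\widehat\beta^\lambda_I$ in \eqref{eq-beta-1}, comparing its objective value to that of $\beta^*_I$ and using $y_t = x_t^\top\beta^*_I + \varepsilon_t$, one gets
\[
	\sum_{t\in I}\bigl(x_t^\top(\widehat\beta^\lambda_I-\beta^*_I)\bigr)^2 \le 2\Bigl|\sum_{t\in I}\varepsilon_t x_t^\top(\widehat\beta^\lambda_I-\beta^*_I)\Bigr| + \lambda\sqrt{\max\{|I|,\log(n\vee p)\}}\bigl(\|\beta^*_I\|_1 - \|\widehat\beta^\lambda_I\|_1\bigr).
\]
On $\mathcal{B}_I(\lambda)$ the cross term is bounded by $\tfrac14\lambda\sqrt{\max\{|I|,\log(n\vee p)\}}\,\|\widehat\beta^\lambda_I-\beta^*_I\|_1$ via H\"older; combining with the norm-decomposition $\|\beta^*_I\|_1-\|\widehat\beta^\lambda_I\|_1 \le \|(\widehat\beta^\lambda_I-\beta^*_I)(S)\|_1 - \|(\widehat\beta^\lambda_I-\beta^*_I)(S^c)\|_1$ yields the cone condition $\|(\widehat\beta^\lambda_I-\beta^*_I)(S^c)\|_1 \le 5\|(\widehat\beta^\lambda_I-\beta^*_I)(S)\|_1$ (constants up to the $1/4$ and $1$ factors above), so that $\|\widehat\beta^\lambda_I-\beta^*_I\|_1 \le 6\sqrt{d_0}\,\|\widehat\beta^\lambda_I-\beta^*_I\|_2$.

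Next I would convert the empirical quadratic form into a genuine $\ell_2$ lower bound using $\mathcal{E}_I$. On $\mathcal{E}_I$, setting $v=\widehat\beta^\lambda_I-\beta^*_I$ and writing $a=\tfrac{c_x\sqrt{|I|}}{4}\|v\|_2$, $b = 9C_x\sqrt{\log p}\|v\|_1$, we have $\sqrt{\sum_{t\in I}(x_t^\top v)^2}\ge a-b$; plugging in the cone bound $\|v\|_1\le 6\sqrt{d_0}\|v\|_2$ gives $b \le 54 C_x\sqrt{d_0\log p}\,\|v\|_2$, and the hypothesis $|I| > 288^2 C_x^2 d_0\log(n\vee p)/c_x^2$ is exactly what forces $b \le \tfrac12 a$, hence $\sum_{t\in I}(x_t^\top v)^2 \ge \tfrac{1}{4}a^2 = \tfrac{c_x^2|I|}{64}\|v\|_2^2$. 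Feeding this back into the basic inequality, after bounding $\|(\widehat\beta^\lambda_I-\beta^*_I)(S)\|_1\le\sqrt{d_0}\|v\|_2$ and $\max\{|I|,\log(n\vee p)\}\le |I|$ (since $|I|>\log(n\vee p)$), the right-hand side is $\lesssim \lambda\sqrt{|I|}\sqrt{d_0}\|v\|_2$, so we obtain $\tfrac{c_x^2|I|}{64}\|v\|_2^2 \lesssim \lambda\sqrt{|I|}\sqrt{d_0}\|v\|_2$, i.e.\ $\|v\|_2 \le C_3\lambda\sqrt{d_0}/\sqrt{|I|}$ for an absolute constant $C_3$ depending on $c_x, C_x$ and the numerical constants above. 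The $\ell_1$ bound $\|v\|_1 \le C_3\lambda d_0/\sqrt{|I|}$ then follows from the cone bound $\|v\|_1\le 6\sqrt{d_0}\|v\|_2$ (adjusting $C_3$). Finally, the probability statement is just a union bound: $\mathbb{P}(\mathcal{E}_I^c) \le c_1\exp(-c_2|I|) \le c_1(n\vee p)^{-288^2 C_x^2 d_0 c_2/c_x^2}$ using $|I| > 288^2 C_x^2 d_0\log(n\vee p)/c_x^2$, and $\mathbb{P}(\mathcal{B}_I(\lambda)^c) \le 2(n\vee p)^{-c_3}$.

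I do not anticipate a genuine obstacle here — this is a bookkeeping exercise in the standard Lasso proof — but the one point requiring care is tracking how the $9C_x\sqrt{\log p}\|v\|_1$ slack term in $\mathcal{E}_I$ interacts with the cone condition: one must first establish the cone inequality from the basic inequality \emph{before} invoking $\mathcal{E}_I$, then use the lower bound $|I| \gtrsim d_0\log(n\vee p)$ to absorb the slack, rather than trying to do both at once. Keeping the two large-probability events and their logical order straight, and making sure every appearance of $\max\{|I|,\log(n\vee p)\}$ is simplified to $|I|$ using the size hypothesis on $|I|$, is the main thing to get right.
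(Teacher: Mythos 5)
Your proposal is correct and follows essentially the same route as the paper's proof: the basic inequality from optimality of $\widehat\beta^\lambda_I$, the cross-term bound on $\mathcal{B}_I(\lambda)$, the cone condition, absorption of the $9C_x\sqrt{\log p}\,\|v\|_1$ slack in $\mathcal{E}_I$ via the hypothesis on $|I|$, and a union bound over the two events. The only discrepancy is bookkeeping: your loosened cone constant ($\|v(S^c)\|_1\le 5\|v(S)\|_1$ rather than the sharp $\tfrac{5}{3}$, or the paper's $3$) would literally require $|I|\gtrsim 432^2C_x^2d_0\log(n\vee p)/c_x^2$ instead of the stated $288^2$, but tracking the cone constant exactly—as you yourself flag—recovers the stated threshold.
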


\begin{proof}
Let $v = \widehat{\beta}^{\lambda}_I - \beta^*_I$.  Since $|I| > \log(n \vee p)$, it follows from the definition of $\widehat{\beta}^{\lambda}_I$ that
	\[
		\sum_{t \in I} (y_t - x_t^{\top}\widehat{\beta}^{\lambda}_I)^2 + \lambda \sqrt{|I|}\|\widehat{\beta}^{\lambda}_I\|_1 \leq \sum_{t \in I} (y_t - x_t^{\top}\beta^*_I)^2 + \lambda \sqrt{|I|}\|\beta^*_I\|_1,
	\]
	which leads to
	\begin{equation}\label{eq-lem9-pf-1}
		\sum_{t \in I}(x_t^{\top}v)^2 + \lambda\sqrt{|I|}\|\widehat{\beta}^{\lambda}_I\|_1 \leq \lambda \sqrt{|I|}\|\beta^*_I\|_1 + 2 \sum_{t \in I} \varepsilon_t x_t^{\top}v \leq \lambda \sqrt{|I|}\|\beta^*_I\|_1 +  \frac{\lambda}{2}\sqrt{|I|}\|v\|_1,
	\end{equation}
	where the last inequality holds on the event $\mathcal{B}_I$, with the choice of $\lambda$ and due to \Cref{lem-x-bound}.  Note that
	\begin{equation}\label{eq-lem9-pf-2}
		\|\widehat{\beta}^{\lambda}_I\|_1 \geq \|\beta^*_I(S)\|_1 - \|v(S)\|_1 + \|\widehat{\beta}^{\lambda}_I(S^c)\|_1
	\end{equation}
	and 
	\begin{equation}\label{eq-lem9-pf-3}
		\|v\|_1 = \|v(S)\|_1 + \|\widehat{\beta}^{\lambda}_I(S^c)\|_1.
	\end{equation}
	Combining \eqref{eq-lem9-pf-1}, \eqref{eq-lem9-pf-2} and \eqref{eq-lem9-pf-3} yields
	\begin{equation}\label{eq-lem9-pf-4}
		\sum_{t \in I}(x_t^{\top}v)^2 + \frac{\lambda}{2}\sqrt{|I|}\|\widehat{\beta}^{\lambda}_I(S^c)\|_1 \leq \frac{3\lambda}{2}\sqrt{|I|}\|\widehat{\beta}^{\lambda}_I(S)\|_1,
	\end{equation}
	which in turn implies
	\[
		\|\widehat{\beta}^{\lambda}_I(S^c)\|_1 \leq 3\|\widehat{\beta}^{\lambda}_I(S)\|_1.
	\]
	
On the event of $\mathcal{E}_I$, it holds that
	\begin{align}
		\sqrt{\sum_{t \in I} \bigl(x_t^{\top}v\bigr)^2} & \geq \frac{c_x\sqrt{|I|}}{4} \|v\|_2 - 9 C_x \sqrt{\log(p)}\|v\|_1 \nonumber \\
		& = \frac{c_x\sqrt{|I|}}{4} \|v\|_2 - 9 C_x \sqrt{\log(p)}\|v(S)\|_1 - 9 C_x \sqrt{\log(p)}\|v(S^c)\|_1 \nonumber \\
		& \geq \frac{c_x\sqrt{|I|}}{4} \|v\|_2 - 36 C_x \sqrt{\log(p)}\|v(S)\|_1 \geq  \frac{c_x\sqrt{|I|}}{4} \|v\|_2 - 36 C_x \sqrt{d_0 \log(p)}\|v(S)\|_2 \nonumber \\
		& \geq \left(\frac{c_x\sqrt{|I|}}{4} - 36 C_x \sqrt{d_0 \log(p)}\right) \|v\|_2 > \frac{c_x\sqrt{|I|}}{8} \|v\|_2, \label{eq-lem9-pf-5}
	\end{align}
	where the second inequality follows from \eqref{eq-lem9-pf-4}, the third inequality follows from \Cref{assume:high dim coefficient}(\textbf{a}) and the last inequality follows from the choice of $|I|$.
	
Combining \eqref{eq-lem9-pf-4} and \eqref{eq-lem9-pf-5} leads to
	\[
		\frac{c_x^2|I|}{64} \|v\|_2^2 \leq \frac{3\lambda}{2}\sqrt{|I|} \|v(S)\|_1 \leq \frac{3\lambda}{2}\sqrt{|I|d_0} \|v\|_2,
	\]	
	therefore
	\[
		\|v\|_2 \leq \frac{96 \lambda\sqrt{d_0}}{\sqrt{|I|} c_x^2}
	\]
	and 
	\[
		\|v\|_1 = \|v(S)\|_1 + \|v(S^c)\|_1 \leq 4\|v(S)\|_1 \leq 4\sqrt{d_0}\|v\|_2 \leq \frac{384 \lambda d_0}{\sqrt{|I|} c_x^2}.
	\]
\end{proof}

\begin{lemma}\label{lemma:RSS lasso} 
For \Cref{assume:change point regression model}, under \Cref{assume:high dim coefficient}(\textbf{a}) and (\textbf{c}), if there exists no true change point in $I = (s, e]$, and
	\[
		\lambda \geq \lambda_1 : = C_{\lambda}\sigma_{\varepsilon} \sqrt{\log(n \vee p)},
	\]
	where $C_{\lambda} > 0$ being an absolute constant, it holds that if $|I| \geq 288^2 C^2_x d_0 \log(n \vee p)/c_x^2$, then
	\begin{align*}
		& \mathbb{P}\left\{\left|\sum_{t \in I} \bigl\{(y_t - x_t^{\top}\widehat{\beta})^2 - (y_t - x_t^{\top}\beta^*)^2\bigr\}\right| \leq \lambda^2 d_0 \right\} \\
		& \hspace{2cm}\geq 	1 - c_1 (n \vee p)^{-288^2C_x^2d_0c_2/c_x^2} - 2(n \vee p)^{-c_3};
	\end{align*}
	if $|I| < 288^2 C^2_x d_0 \log(n \vee p)/c_x^2$, then 
	\begin{align*}
		\mathbb{P}\left\{\left|\sum_{t \in I} \bigl\{(y_t - x_t^{\top}\widehat{\beta})^2 - (y_t - x_t^{\top}\beta^*)^2\bigr\}\right| \leq C_4 \lambda \sqrt{\log(n \vee p)} d_0^{3/2}\right\} \geq 1 - 2(n \vee p)^{-c_3},
	\end{align*}
	where $C_4 > 0$ is an absolute constant depending on all the other constants.
\end{lemma}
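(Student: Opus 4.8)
The plan is to expand the residual-sum-of-squares difference algebraically, reduce everything to bounding $\|v\|_1$ and $\sum_{t\in I}(x_t^\top v)^2$ for $v=\widehat{\beta}^\lambda_I-\beta^*_I$, and then split into the two sample-size regimes, working on the event $\mathcal{B}_I$ of \Cref{lem-x-bound} throughout and additionally on $\mathcal{E}_I$ of \Cref{lem-RE} only in the long-interval case. Since $I=(s,e]$ contains no true change point, $\beta^*_t=\beta^*_I$ on $I$, so $y_t-x_t^\top\beta^*_I=\varepsilon_t$ and $y_t-x_t^\top\widehat{\beta}^\lambda_I=\varepsilon_t-x_t^\top v$, whence
\[
\sum_{t\in I}\bigl\{(y_t-x_t^\top\widehat{\beta}^\lambda_I)^2-(y_t-x_t^\top\beta^*_I)^2\bigr\}=-2\Bigl(\sum_{t\in I}\varepsilon_tx_t\Bigr)^\top v+\sum_{t\in I}(x_t^\top v)^2 .
\]
On $\mathcal{B}_I$ we have $\|\sum_{t\in I}\varepsilon_tx_t\|_\infty\le(\lambda/8)\sqrt{\max\{|I|,\log(n\vee p)\}}$, so Hölder's inequality bounds the first term in absolute value by $(\lambda/4)\sqrt{\max\{|I|,\log(n\vee p)\}}\,\|v\|_1$; the quadratic term is nonnegative and is exactly what \Cref{lemma:lasso} and its proof are set up to control.

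For long intervals, $|I|\ge 288^2C_x^2d_0\log(n\vee p)/c_x^2$, I would additionally condition on $\mathcal{E}_I$ (whose probability is close to one precisely because $|I|\gtrsim d_0\log(n\vee p)$) and invoke \Cref{lemma:lasso}, which gives $\|v\|_1\lesssim \lambda d_0/\sqrt{|I|}$ and $\|v\|_2\lesssim \lambda\sqrt{d_0}/\sqrt{|I|}$. The cross term is then $\lesssim\lambda\sqrt{|I|}\cdot\lambda d_0/\sqrt{|I|}=\lambda^2d_0$; for the quadratic term I reuse the basic (cone) inequality from the proof of \Cref{lemma:lasso}, namely $\sum_{t\in I}(x_t^\top v)^2\lesssim\lambda\sqrt{|I|}\,\|v(S)\|_1\le\lambda\sqrt{|I|}\sqrt{d_0}\,\|v\|_2\lesssim\lambda^2d_0$. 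Adding the two contributions gives $|\cdot|\lesssim\lambda^2d_0$ (the implied constant depends only on $c_x$ and is absorbed into the statement), with failure probability $\mathbb{P}(\mathcal{E}_I^c)+\mathbb{P}(\mathcal{B}_I^c)$, matching the claim.

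The main obstacle is the short-interval regime, $|I|<288^2C_x^2d_0\log(n\vee p)/c_x^2$, where $\mathcal{E}_I$ is useless (\Cref{lem-RE} only delivers probability $1-c_1e^{-c_2|I|}$), so no restricted eigenvalue is available and a penalty-only argument is needed. For the upper bound, testing optimality of $\widehat{\beta}^\lambda_I$ against the feasible vector $\beta^*_I$ gives directly
\[
\sum_{t\in I}\bigl\{(y_t-x_t^\top\widehat{\beta}^\lambda_I)^2-(y_t-x_t^\top\beta^*_I)^2\bigr\}\le\lambda\sqrt{\max\{|I|,\log(n\vee p)\}}\bigl(\|\beta^*_I\|_1-\|\widehat{\beta}^\lambda_I\|_1\bigr)\le\lambda\sqrt{\max\{|I|,\log(n\vee p)\}}\,\|\beta^*_I\|_1,
\]
and since $\beta^*_I$ is supported on $S$ with $\|\beta^*_I\|_\infty\le C_\beta$ we get $\|\beta^*_I\|_1\le C_\beta d_0$, while the restriction on $|I|$ forces $\sqrt{\max\{|I|,\log(n\vee p)\}}\lesssim\sqrt{d_0\log(n\vee p)}$; multiplying yields the bound $O(\lambda\sqrt{\log(n\vee p)}\,d_0^{3/2})$. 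For the lower bound, the quadratic term in the expansion is nonnegative, so the difference is $\ge-2|(\sum_{t\in I}\varepsilon_tx_t)^\top v|\ge-(\lambda/4)\sqrt{\max\{|I|,\log(n\vee p)\}}\,\|v\|_1$ on $\mathcal{B}_I$, and it remains only to show $\|v\|_1\lesssim C_\beta d_0$.

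To get that last bound without a restricted eigenvalue, I would rearrange the same optimality inequality into $\lambda m_I\|\widehat{\beta}^\lambda_I\|_1\le\lambda m_I\|\beta^*_I\|_1+2(\sum_{t\in I}\varepsilon_tx_t)^\top v$ with $m_I=\sqrt{\max\{|I|,\log(n\vee p)\}}$, bound the cross term on $\mathcal{B}_I$ by $(\lambda/4)m_I\|v\|_1\le(\lambda/4)m_I(\|\widehat{\beta}^\lambda_I\|_1+\|\beta^*_I\|_1)$, and solve the resulting self-referential inequality to obtain $\|\widehat{\beta}^\lambda_I\|_1\le\tfrac53\|\beta^*_I\|_1$, hence $\|v\|_1\le\tfrac83\|\beta^*_I\|_1\le\tfrac83 C_\beta d_0$. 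This self-bounding step is the heart of the argument: it goes through using only $\mathcal{B}_I$ — which is why the stated probability in this case is $1-2(n\vee p)^{-c_3}$ — precisely because the penalty level $\lambda\ge\lambda_1$ is calibrated so that the $\ell_1$ penalty strictly dominates the noise cross-term. Combining the upper and lower bounds gives the claimed two-sided inequality with $C_4$ an absolute constant depending on $C_\beta,C_x,c_x$.
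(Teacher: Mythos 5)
Your proof is correct, and the long-interval case together with the short-interval \emph{upper} bound follow essentially the paper's own route: both reduce to the optimality inequality for $\widehat{\beta}^\lambda_I$, the event $\mathcal{B}_I$ from \Cref{lem-x-bound}, and (for long $I$) the $\ell_1$/$\ell_2$ error bounds of \Cref{lemma:lasso} on $\mathcal{E}_I$. Where you genuinely diverge is the short-interval \emph{lower} bound. The paper absorbs the cross term $2\sum_{t\in I}\varepsilon_t x_t^{\top}v$ into the quadratic term via $2ab\le a^2+b^2$, which leaves it needing to control $\sum_{t\in I}\varepsilon_t^2$ by a separate sub-exponential concentration step (and the bound it writes there, $\max\{\sqrt{|I|\log(n\vee p)},\log(n\vee p)\}$, is arguably off by the mean $|I|\sigma_\varepsilon^2$). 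You instead derive a self-bounding $\ell_1$ estimate, $\|\widehat{\beta}^\lambda_I\|_1\le\tfrac53\|\beta^*_I\|_1$, directly from the basic inequality on $\mathcal{B}_I$ (dropping the nonnegative quadratic term), and then apply H\"older to the cross term; this stays entirely within the event $\mathcal{B}_I$, gives the correct failure probability $2(n\vee p)^{-c_3}$, and avoids the extra concentration step — a cleaner argument for this regime. Two cosmetic points, neither a gap in your reasoning: both your argument and the paper's use $\|\beta^*_I\|_1\le C_\beta d_0$, which requires \Cref{assume:high dim coefficient}(\textbf{b}) even though the lemma only cites (\textbf{a}) and (\textbf{c}); and in the long-interval case both proofs actually deliver $C\lambda^2 d_0$ for an absolute constant $C$, whereas the displayed statement writes $\lambda^2 d_0$ with no constant — that constant cannot be absorbed into $C_\lambda$ since both sides scale as $\lambda^2$, so this is a pre-existing inaccuracy in the statement rather than in your proof.
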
 

\begin{proof}
To ease notation, in this proof, let $\widehat{\beta} = \widehat{\beta}^\lambda_{I}$ and $\beta^* = \beta^*_I$. 

\vskip 3mm
\noindent \textbf{Case 1.}  If $|I| \geq 288^2 C^2_x d_0 \log(n \vee p)/c_x^2$, then $|I| > \log(n \vee p)$.  With probability at least $1 - c_1 \exp(-c_2 |I|) - 2(n \vee p)^{-c_3}$, we have that
	\begin{align*}
		& \sum_{t \in I} \bigl\{(y_t - x_t^{\top}\widehat{\beta})^2 - (y_t - x_t^{\top}\beta^*)^2\bigr\} \leq \lambda \sqrt{|I|} \|\beta^*\|_1 - \lambda \sqrt{|I|} \|\widehat{\beta}\|_1\leq \lambda \sqrt{|I|} \|\widehat{\beta} - \beta^*\|_1 \leq C_3 \lambda^2 d_0,
	\end{align*}
	where the fist inequality follows from the definition of $\widehat{\beta}$ and the second is due to \Cref{lemma:lasso}.
		
\vskip 3mm
\noindent \textbf{Case 2.}  If $|I| < 288^2 C^2_x d_0 \log(n \vee p)/c_x^2$, then
	\begin{align*}
		\sum_{t \in I} \bigl\{(y_t - x_t^{\top}\widehat{\beta})^2 - (y_t - x_t^{\top}\beta^*)^2\bigr\}	 & \leq \lambda \sqrt{\max\{|I|, \, \log(n \vee p)\}}\|\beta^*\|_1 \leq C_4 \lambda \sqrt{\log(n \vee p)} d_0^{3/2},
	\end{align*}
	since $\|\beta^*\|_1 \leq C_{\beta}d_0$.  In addition, it holds with probability at least $1 - 2(n \vee p)^{-c_3}$ that
	\begin{align*}
		 & \sum_{t \in I} \bigl\{(y_t - x_t^{\top}\beta^*)^2 - (y_t - x_t^{\top}\widehat{\beta})^2\bigr\} = - \sum_{t \in I} (x_t^{\top} \beta^* - x_t^{\top} \widehat{\beta})^2 + 2\sum_{t \in I} \varepsilon_t x_t^{\top}(\widehat{\beta} - \beta^*) \\
		 \leq & - \sum_{t \in I} (x_t^{\top} \beta^* - x_t^{\top} \widehat{\beta})^2 + \sum_{t \in I} (x_t^{\top} \beta^* - x_t^{\top} \widehat{\beta})^2 + \sum_{t \in I} \varepsilon_t^2 \leq \sum_{t \in I} \varepsilon_t^2   \\
		 \leq & \max\{\sqrt{|I|\log(n \vee p)}, \, \log(n \vee p)\} \leq C_4 \lambda \sqrt{\log(n \vee p)} d_0^{3/2},
	\end{align*}
	where the first inequality follow from $2ab \leq a^2 + b^2$ and letting $a = \varepsilon_t$, $b = x_t^{\top}(\widehat{\beta} - \beta^*)$,  the third inequality follows from the sub-Gaussianity of $\{\varepsilon_t\}$.
\end{proof}
 
\begin{lemma} \label{lemma:oracle 2}
For \Cref{assume:change point regression model}, under \Cref{assume:high dim coefficient}(\textbf{a})-(\textbf{c}), for any interval $I = (s, e]$ and
	\[
		\lambda \geq \lambda_2 : = C_{\lambda}\sigma_{\varepsilon} \sqrt{d_0 \log(n \vee p)}, 
	\] 
	where $C_{\lambda} > 8C_{\beta}C_x/\sigma_{\varepsilon}$, it holds with probability at least of $1 - 2(n \vee p)^{-c}$ that,
	\begin{align*}	
		\| \widehat \beta^\lambda_{I} (S^c)\|_1 \le 3 \| \widehat \beta^\lambda_{I} (S )\|_1.
	\end{align*}

If in addition, the interval $I$ satisfies $|I| > 288^2 C^2_x d_0 \log(n \vee p)/c_x^2$, it holds with probability at least $1 - c_1 (n \vee p)^{-288^2C_x^2d_0c_2/c_x^2} - 2(n \vee p)^{-c_3}$ that
	\begin{align*}
		\left \|\widehat\beta_{I}^ \lambda - \frac{1}{|I|} \sum_{t \in I} \beta_t^* \right\|_2 \leq  \frac{C_5 \lambda \sqrt{d_0}}{\sqrt{|I|}} \quad \mbox{and} \quad \left \|  \widehat\beta_{I}^ \lambda - \frac{1}{|I|} \sum_{t \in I} \beta_t^* \right\|_1 \le  \frac{C_5 \lambda d_0}{\sqrt{|I|}},
	\end{align*}
	where $C_5 > 0$ is an absolute constant depending on other constants.
\end{lemma}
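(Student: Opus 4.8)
The plan is to run the standard Lasso basic-inequality argument used in the proof of \Cref{lemma:lasso}, but with the single true coefficient vector replaced by the interval mean $\bar\beta := \beta^*_I = |I|^{-1}\sum_{t\in I}\beta_t^*$, and then to absorb the extra ``bias'' caused by the non-constancy of $\{\beta_t^*\}_{t\in I}$ into the tuning parameter. Writing $m=\max\{|I|,\log(n\vee p)\}$ and $v=\widehat\beta^\lambda_I-\bar\beta$, the optimality of $\widehat\beta^\lambda_I$ in \eqref{eq-beta-1} together with $y_t=x_t^\top\beta_t^*+\varepsilon_t$ gives, after completing the square (the common quadratic terms $\sum_{t\in I}(x_t^\top(\beta_t^*-\bar\beta))^2$ and $\sum_{t\in I}\varepsilon_t^2$ cancel, so only terms linear in $v$ survive on the right):
\begin{align*}
\sum_{t\in I}(x_t^\top v)^2 + \lambda\sqrt m\bigl(\|\widehat\beta^\lambda_I\|_1-\|\bar\beta\|_1\bigr)
\le 2\Bigl(\sum_{t\in I}\varepsilon_t x_t\Bigr)^{\!\top}v
+ 2\Bigl(\sum_{t\in I}x_t x_t^\top(\beta_t^*-\bar\beta)\Bigr)^{\!\top}v.
\end{align*}
On the event $\mathcal B_I$ of \Cref{lem-x-bound} (available since $\lambda\ge\lambda_2\ge\lambda_1$) the first right-hand term is at most $\tfrac{\lambda\sqrt m}{4}\|v\|_1$ by H\"older's inequality.

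The crux, and the reason $\lambda$ must be inflated by $\sqrt{d_0}$ relative to \Cref{lemma:lasso}, is the second right-hand term. Because $\sum_{t\in I}(\beta_t^*-\bar\beta)=0$, the matrix $M:=\sum_{t\in I}x_t x_t^\top(\beta_t^*-\bar\beta)$ has zero mean; each of its $p$ coordinates is a sum of independent, mean-zero, sub-exponential variables whose $\psi_1$-parameter is of order $\|\beta_t^*-\bar\beta\|_2$, which by \Cref{assume:high dim coefficient}(\textbf{a},\textbf{b}) is at most $2C_\beta\sqrt{d_0}$. A Bernstein inequality and a union bound over the $p$ coordinates then give, for a large enough absolute constant $C$, that with probability at least $1-2(n\vee p)^{-c}$,
\[
\|M\|_\infty \;\le\; C\,C_x C_\beta\sqrt{d_0}\,\sqrt{\log(n\vee p)}\,\sqrt m \;\le\; \frac{\lambda\sqrt m}{8},
\]
the last step using $\lambda\ge\lambda_2=C_\lambda\sigma_\varepsilon\sqrt{d_0\log(n\vee p)}$ and $C_\lambda>8C_\beta C_x/\sigma_\varepsilon$; this is exactly the $\sqrt{d_0}$-inflation of the noise level. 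Hence the second right-hand term is also at most $\tfrac{\lambda\sqrt m}{4}\|v\|_1$.

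Plugging both bounds into the displayed inequality and handling the $\ell_1$-terms exactly as in \eqref{eq-lem9-pf-2}--\eqref{eq-lem9-pf-4} (using $\|\bar\beta\|_1=\|\bar\beta(S)\|_1$ and the reverse triangle inequality) yields
\[
\sum_{t\in I}(x_t^\top v)^2 + \tfrac{\lambda\sqrt m}{2}\|v(S^c)\|_1 \;\le\; \tfrac{3\lambda\sqrt m}{2}\|v(S)\|_1 ;
\]
since $v(S^c)=\widehat\beta^\lambda_I(S^c)$, discarding the nonnegative quadratic term gives the first claim. For the second claim, assume in addition $|I|>288^2C_x^2d_0\log(n\vee p)/c_x^2$ and intersect with the restricted-eigenvalue event $\mathcal E_I$ of \Cref{lem-RE}. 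The cone condition gives $\|v\|_1\le4\|v(S)\|_1\le4\sqrt{d_0}\|v\|_2$; together with $\mathcal E_I$ and the lower bound on $|I|$ this yields $\sqrt{\sum_{t\in I}(x_t^\top v)^2}\ge\tfrac{c_x\sqrt{|I|}}{8}\|v\|_2$, exactly as in \eqref{eq-lem9-pf-5}. On the other hand the last display gives $\sum_{t\in I}(x_t^\top v)^2\le\tfrac{3\lambda\sqrt{|I|}}{2}\|v(S)\|_1\le\tfrac{3\lambda\sqrt{|I|d_0}}{2}\|v\|_2$. Combining the two and solving for $\|v\|_2$ produces $\|v\|_2\le C_5\lambda\sqrt{d_0}/\sqrt{|I|}$, whence $\|v\|_1\le4\sqrt{d_0}\|v\|_2\le C_5\lambda d_0/\sqrt{|I|}$; a union bound over $\mathcal B_I$, $\mathcal E_I$ and the Bernstein event gives the stated probabilities. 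I expect the main obstacle to be the deviation bound for $\|M\|_\infty$: one must center via $\sum_{t\in I}(\beta_t^*-\bar\beta)=0$, correctly track that the relevant sub-exponential scale is $\sqrt{d_0}$ rather than a constant (this is precisely what forces $\lambda_2$ to carry the extra $\sqrt{d_0}$), and verify that the stated threshold $C_\lambda>8C_\beta C_x/\sigma_\varepsilon$ suffices.
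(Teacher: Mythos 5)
Your proposal is correct and follows essentially the same route as the paper: the same basic inequality with the interval mean $\beta^*_I$ as comparator, the same centering observation that $\sum_{t\in I}x_tx_t^\top(\beta^*_I-\beta^*_t)$ has mean zero, the same sub-exponential/Bernstein bound at scale $C_xC_\beta\sqrt{d_0}$ (which is exactly what forces the $\sqrt{d_0}$-inflated $\lambda_2$ and the threshold $C_\lambda>8C_\beta C_x/\sigma_\varepsilon$), and the same reduction of the remainder to the argument of Lemma~\ref{lemma:lasso}. The only differences are immaterial constants (e.g.\ $\lambda\sqrt m/8$ versus the paper's $\lambda\sqrt m/4$ for the bias term).
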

 
\begin{proof} 
Denote $\widehat{\beta} = \widehat\beta_{I}^\lambda$ and $\beta^* = (|I|)^{-1}\sum_{t \in I} \beta^*_t$.  It follows from the definition of $\widehat{\beta}$ that 
	\[
		\sum_{t \in I} (y_t - x_t^{\top}\widehat{\beta})^2 + \lambda \sqrt{\max\{|I|, \, \log(n \vee p)\}} \bigl\|\widehat{\beta}\bigr\|_1 \leq \sum_{t \in I} (y_t - x_t^{\top}\beta^*)^2 + \lambda \sqrt{\max\{|I|, \, \log(n \vee p)\}} \bigl\|\beta^*\bigr\|_1,
	\]
	which leads to
	\begin{align*}
		\sum_{t \in I} \bigl\{x_t^{\top}(\widehat{\beta} - \beta^*)\bigr\}^2 +  2\sum_{t \in I}(y_t - x_t^{\top}\beta^*)x_t^{\top}(\beta^* - \widehat{\beta})  + \lambda \sqrt{\max\{|I|, \, \log(n \vee p)\}} \bigl\|\widehat{\beta}\bigr\|_1 \\
		\leq \lambda \sqrt{\max\{|I|, \, \log(n \vee p)\}} \bigl\|\beta^*\bigr\|_1,
	\end{align*}
	therefore
	\begin{align}
		& \sum_{t \in I} \bigl\{x_t^{\top}(\widehat{\beta} - \beta^*)\bigr\}^2 + 2(\widehat{\beta} - \beta^*)^{\top}\sum_{t \in I} x_t x_t^{\top}(\beta^* - \beta^*_t) \nonumber \\
		& \hspace{3cm} \leq 2\sum_{t \in I} \varepsilon_t x_t^{\top}(\widehat{\beta} - \beta^*) + \lambda \sqrt{\max\{|I|, \, \log(n \vee p)\}}\bigl(\bigl\|\beta^*\bigr\|_1 - \bigl\|\widehat{\beta}\bigr\|_1\bigr).	 \label{eq-lem10-pf-2}
	\end{align}
	
We bound	
	\[
		\left\|\sum_{t \in I} x_t x_t^\top (\beta^*  -\beta^*_t) \right\|_{\infty}.  
	\]
	For any $k \in \{1, \ldots, p\}$, the $k$th entry of $\sum_{t \in I} x_t x_t^\top (\beta^*  -\beta^*_t)$ satisfies that
	\begin{align*}
		& \mathbb{E}\left\{\sum_{t \in I} \sum_{j = 1}^p x_t(k)x_t(j)(\beta^*(j) - \beta^*_t(j))\right\} = \sum_{t \in I}\sum_{j = 1}^p \mathbb{E}\{x_t(k) x_t(j)\}\{\beta^*(j) - \beta^*_t(j)\} \\
		= & \sum_{j = 1}^p \mathbb{E}\{x_1(k) x_1(j)\} \sum_{t \in I}\{\beta^*(j) - \beta^*_t(j)\} = 0.
	\end{align*}

Note that $x_t^{\top}(\beta^* - \beta^*_t)$'s are sub-Gaussian random variables with a common parameter $2C_{\beta}C_x\sqrt{d_0}$, and $x_t$'s are sub-Gaussian random vectors with parameter $C_x$.  Therefore due to sub-Exponential inequalities \citep[e.g. Proposition~2.7.1 in][]{vershynin2018high}, it holds with probability at least of $1 - 2(n \vee p)^{-c}$ that,
	\begin{align}
		\left\|\sum_{t \in I} x_t x_t^\top (\beta^*  -\beta^*_t) \right\|_{\infty} \leq 2C_x C_{\beta}\sqrt{d_0}\max\{\sqrt{|I|\log(n \vee p)}, \, \log(n \vee p)\}\nonumber \\
		\leq \lambda \sqrt{\max\{|I|, \, \log(n \vee p)\}}/4.\label{eq-lem10-pf-1}
	\end{align}
	On the event $\mathcal{B}_I$, combining \eqref{eq-lem10-pf-2} and \eqref{eq-lem10-pf-1} yields
	\begin{align*}
		&\sum_{t \in I} \bigl\{x_t^{\top}(\widehat{\beta} - \beta^*)\bigr\}^2 + \lambda \sqrt{\max\{|I|, \, \log(n \vee p)\}} \bigl\|\widehat{\beta}\bigr\|_1 \\
		&\hspace{2cm}\leq \lambda/2 \sqrt{\max\{|I|, \, \log(n \vee p)\}} \bigl\|\beta^* - \widehat{\beta}\bigr\|_1 + \lambda \sqrt{\max\{|I|, \, \log(n \vee p)\}} \bigl\|\beta^*\bigr\|_1. 
	\end{align*}
	The final claims follow from the same arguments as in \Cref{lemma:lasso}.
\end{proof}

\subsection{All cases in Proposition~\ref{proposition:dp step 1}}

\begin{lemma}[Case (i)]\label{lemma:one change point} 
With the conditions and notation in \Cref{proposition:dp step 1}, assume that $I = (s, e] \in \widehat{\mathcal{P}}$ has one and only one true change point $\eta$.  Denote $I_1 = (s, \eta]$, $I_2 = (\eta, e]$ and $\|\beta_{I_1}^* - \beta_{I_2}^*\|_2 = \kappa$.  If, in addition, it holds that  
	\begin{align}\label{eq:one change point} 
		\sum_{t \in I}(y_t - x_t^{\top}\widehat{\beta}^\lambda_I)^2 \leq \sum_{t \in I_1}(y_t - x_t^{\top}\widehat{\beta}^\lambda_{I_1})^2 + \sum_{t \in I_2}(y_t - x_t^{\top}\widehat{\beta}^\lambda_{I_2})^2 + \gamma,
	\end{align}
	then with 
	\[
		\lambda \geq \lambda_2 = C_{\lambda}\sigma_{\varepsilon} \sqrt{d_0 \log(n \vee p)}, 
	\]
	where $C_{\lambda} > 8C_{\beta}C_x/\sigma_{\varepsilon}$, it holds with probability at least $1 - 2c_1 (n \vee p)^{-288^2C_x^2d_0c_2/c_x^2} - 2(n \vee p)^{-c_3}$ that, that
	\[
		\min\{|I_1 |,\, |I_2|\} \leq C_{\epsilon} \left(\frac{\lambda^2 d_0 + \gamma}{\kappa^2}\right).
	\]
\end{lemma}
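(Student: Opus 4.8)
\noindent\emph{Proof strategy.} The plan is to turn the ``no-improvement-by-splitting'' hypothesis \eqref{eq:one change point} into an oracle-type inequality for $\widehat\beta^\lambda_I$ on the interval $I$ carrying the single change point, and then to play a prediction-error lower bound against a noise upper bound. Throughout I would write $\epsilon=\min\{|I_1|,|I_2|\}$ and assume without loss of generality that $|I_1|\le|I_2|$, so $|I_2|\ge|I|/2$ and $\epsilon=|I_1|$. All the estimates take place on the intersection $\mathcal G$ of the restricted-eigenvalue events $\mathcal E_{I_1},\mathcal E_{I_2}$ of \Cref{lem-RE}, the noise events $\mathcal B_I,\mathcal B_{I_1},\mathcal B_{I_2}$ of \Cref{lem-x-bound} taken at the smaller scale $\lambda_1=\lambda/\sqrt{d_0}\le\lambda$ (legitimate since $\lambda\ge\lambda_2=\sqrt{d_0}\,\lambda_1$), and the event of \Cref{lemma:oracle 2}; a union bound delivers the stated probability. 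First I would dispose of the trivial regime: if $|I_1|\le 288^2C_x^2d_0\log(n\vee p)/c_x^2$, then $\kappa^2=\|\beta^*_{I_1}-\beta^*_{I_2}\|_2^2\le 4C_\beta^2 d_0$ by \Cref{assume:high dim coefficient}(\textbf{b}) and $\lambda^2 d_0\ge C_\lambda^2\sigma_\varepsilon^2 d_0^2\log(n\vee p)$ together give $|I_1|\lesssim d_0\log(n\vee p)\lesssim\lambda^2 d_0/\kappa^2$, and there is nothing to prove. So I may assume $|I_1|$, hence also $|I_2|$ and $|I|$, exceed that threshold, which makes \Cref{lemma:RSS lasso} available on $I_1$ and $I_2$ (on each of which $\beta^*_t$ is constant) and the $\ell_1$-, $\ell_2$- and companion prediction-error bounds of \Cref{lemma:oracle 2} available on $I$.

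Next I would expand the squares in \eqref{eq:one change point}. Writing $y_t=x_t^\top\beta^*_t+\varepsilon_t$, the $\mathcal L(I)$ side contributes $\sum_{t\in I}(x_t^\top(\beta^*_t-\widehat\beta^\lambda_I))^2+2\sum_{t\in I}\varepsilon_t x_t^\top(\beta^*_t-\widehat\beta^\lambda_I)+\sum_{t\in I}\varepsilon_t^2$, whereas \Cref{lemma:RSS lasso} gives $\mathcal L(I_j)\le\sum_{t\in I_j}\varepsilon_t^2+C\lambda^2 d_0$ for $j=1,2$, since $(y_t-x_t^\top\beta^*_{I_j})^2=\varepsilon_t^2$ on $I_j$. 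The $\sum\varepsilon_t^2$ terms cancel, and \eqref{eq:one change point} becomes, up to constants, $\sum_{t\in I}(x_t^\top(\beta^*_t-\widehat\beta^\lambda_I))^2+2\sum_{t\in I}\varepsilon_t x_t^\top(\beta^*_t-\widehat\beta^\lambda_I)\le C\lambda^2 d_0+\gamma$. For the cross term I would decompose $\beta^*_t-\widehat\beta^\lambda_I=(\beta^*_t-\beta^*_I)+(\beta^*_I-\widehat\beta^\lambda_I)$. The part carrying $\beta^*_I-\widehat\beta^\lambda_I$ is bounded by $2\|\sum_{t\in I}\varepsilon_t x_t\|_\infty\,\|\beta^*_I-\widehat\beta^\lambda_I\|_1\lesssim\lambda_1\sqrt{|I|}\cdot\lambda d_0/\sqrt{|I|}=\lambda^2\sqrt{d_0}$ on $\mathcal B_I$ using the $\ell_1$-oracle bound of \Cref{lemma:oracle 2}. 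The part carrying $\beta^*_t-\beta^*_I$ is a \emph{fixed} vector, constant on each of $I_1$ and $I_2$ and proportional to the jump direction $\beta^*_{I_2}-\beta^*_{I_1}$ (with coefficients $\mp|I_{3-j}|/|I|$ on $I_j$), so its contribution is at most $\sum_j\frac{|I_{3-j}|}{|I|}\|\sum_{t\in I_j}\varepsilon_t x_t\|_\infty\,\|\beta^*_{I_2}-\beta^*_{I_1}\|_1\lesssim\lambda\kappa\sqrt{|I_1|}$; here $\|\beta^*_{I_2}-\beta^*_{I_1}\|_1\le\sqrt{d_0}\,\kappa$ (support in $S$), $\|\sum_{t\in I_j}\varepsilon_t x_t\|_\infty\le\lambda_1\sqrt{|I_j|}/8=\lambda\sqrt{|I_j|}/(8\sqrt{d_0})$ on $\mathcal B_{I_j}$ — so the $\sqrt{d_0}$-inflation $\lambda=\sqrt{d_0}\lambda_1$ is exactly what cancels the $\sqrt{d_0}$ coming from the $\ell_1$-norm of the jump — and $|I_{3-j}|\sqrt{|I_j|}/|I|\le\sqrt{|I_1|}$. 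This leaves $\sum_{t\in I}(x_t^\top(\beta^*_t-\widehat\beta^\lambda_I))^2\le C'(\lambda^2 d_0+\gamma)+C''\lambda\kappa\sqrt{|I_1|}$.

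Then I would lower-bound the left-hand side. By $(a+b)^2\ge\frac12 a^2-b^2$ with $a=x_t^\top(\beta^*_t-\beta^*_I)$ and $b=x_t^\top(\beta^*_I-\widehat\beta^\lambda_I)$, together with the prediction-error bound $\sum_{t\in I}(x_t^\top(\beta^*_I-\widehat\beta^\lambda_I))^2\lesssim\lambda^2 d_0$ (which the proof of \Cref{lemma:oracle 2} also yields), it is at least $\frac12\sum_{t\in I}(x_t^\top(\beta^*_t-\beta^*_I))^2-C\lambda^2 d_0$. Splitting the remaining sum over $I_1,I_2$ and applying the restricted-eigenvalue bounds on $\mathcal E_{I_1},\mathcal E_{I_2}$ to the support-$S$ vectors $\beta^*_{I_j}-\beta^*_I$ (for which the $\sqrt{\log p}\,\|\cdot\|_1$ correction in \Cref{lem-RE} is dominated once $|I_j|$ exceeds the threshold), and using the elementary identity $\sum_{t\in I}\|\beta^*_t-\beta^*_I\|_2^2=\frac{|I_1||I_2|}{|I|}\kappa^2\ge\frac{\kappa^2|I_1|}{2}$ (the minimizer of $u\mapsto\sum_{t\in I}\|u-\beta^*_t\|_2^2$ being $\beta^*_I$), I would obtain $\sum_{t\in I}(x_t^\top(\beta^*_t-\beta^*_I))^2\gtrsim c_x^2\kappa^2|I_1|$. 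Combining the two sides gives $c_x^2\kappa^2|I_1|\lesssim\lambda^2 d_0+\gamma+\lambda\kappa\sqrt{|I_1|}$; applying $2ab\le t a^2+t^{-1}b^2$ to $\lambda\kappa\sqrt{|I_1|}$ with $t$ a sufficiently small multiple of $c_x^2$ absorbs the $\kappa^2|I_1|$ term on the right into the left, and (using $\lambda^2\le\lambda^2 d_0$) leaves $\kappa^2|I_1|\lesssim\lambda^2 d_0+\gamma$, i.e.\ $\min\{|I_1|,|I_2|\}\le C_\epsilon(\lambda^2 d_0+\gamma)/\kappa^2$ for an absolute constant $C_\epsilon$.

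The main obstacle is the cross term $2\sum_{t\in I}\varepsilon_t x_t^\top(\beta^*_t-\widehat\beta^\lambda_I)$: it must be routed through the data-independent jump direction $\beta^*_{I_2}-\beta^*_{I_1}$ so that its size is controlled by $\kappa$ (and therefore absorbable by the $\kappa^2|I_1|$ prediction-error lower bound) rather than by $\|\beta^*_{I_2}-\beta^*_{I_1}\|_1$; this is exactly the point where the $\sqrt{d_0}$-inflated tuning $\lambda=\lambda_2$ and the oracle $\ell_1$-bound of \Cref{lemma:oracle 2} valid on an interval straddling a change point are indispensable. Everything else — the restricted-eigenvalue and noise-event bookkeeping, and the $2ab\le a^2+b^2$ manipulations — is routine.
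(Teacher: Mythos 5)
Your proof is correct, but it runs the argument through a different decomposition than the paper does. You center everything at the averaged coefficient $\beta^*_I=|I|^{-1}\sum_{t\in I}\beta^*_t$, split $\beta^*_t-\widehat\beta^\lambda_I=(\beta^*_t-\beta^*_I)+(\beta^*_I-\widehat\beta^\lambda_I)$, and extract the signal from the deterministic bias $\sum_{t\in I}\{x_t^{\top}(\beta^*_t-\beta^*_I)\}^2\gtrsim c_x^2\,\tfrac{|I_1||I_2|}{|I|}\kappa^2$ via the restricted eigenvalue applied to the $S$-supported vectors $\beta^*_{I_j}-\beta^*_I$. The paper instead centers segment-wise, working with $\Delta_i=\widehat\beta^\lambda_I-\beta^*_{I_i}$, and extracts $\kappa$ from the fact that a single estimator cannot be close to both segment coefficients, i.e.\ $\sqrt{|I_1|}\|\Delta_1\|_2+\sqrt{|I_2|}\|\Delta_2\|_2\geq\kappa\sqrt{|I_1||I_2|/|I|}$ (the same variance identity, entered from the other side). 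The two routes buy slightly different bookkeeping: the paper's version never produces a cross term proportional to $\kappa$, so the ``main obstacle'' you identify --- the noise term $2\sum_t\varepsilon_t x_t^{\top}(\beta^*_t-\beta^*_I)$ of size $\lambda\kappa\sqrt{|I_1|}$ that must be absorbed by AM--GM into the $\kappa^2|I_1|$ signal --- is an artifact of your choice of center rather than an intrinsic difficulty; on the other hand your version needs only the single Lasso prediction-error/$\ell_1$ bound on $I$ from \Cref{lemma:oracle 2} plus the noise events at the deflated scale $\lambda/\sqrt{d_0}$, and it dispenses with the paper's contradiction framing by handling the small-$|I_1|$ regime directly. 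All the auxiliary facts you invoke (the prediction-error consequence of \Cref{lemma:oracle 2}, the validity of $\mathcal{B}_{I_j}(\lambda/\sqrt{d_0})$ under $\lambda\geq\lambda_2$, and the $\ell_1\leq\sqrt{d_0}\,\ell_2$ control of the jump direction) are indeed available, so the argument closes and yields the same $C_\epsilon(\lambda^2 d_0+\gamma)/\kappa^2$ rate.
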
 

\begin{proof} 
First we notice that with the choice of $\lambda$, it holds that 
	\[
		\lambda \geq \max\{\lambda_1, \, \lambda_2\},
	\]
	and therefore we can apply Lemmas~\ref{lemma:lasso}, \ref{lemma:RSS lasso} and \ref{lemma:oracle 2} when needed.

We prove by contradiction, assuming that
	\begin{equation}\label{eq-contra}
		\min\{|I_1|, \, |I_2|\} > C_{\epsilon} \left(\frac{\lambda^2 d_0 + \gamma}{\kappa^2}\right) > 288^2 C^2_x d_0 \log(n \vee p)/c_x^2,
	\end{equation}
	where the second inequality follows from the observation that $\kappa^2 \leq 4 d_0 C^2_{\beta}$.  Therefore we also have
	\[
		\min\{|I_1|, \, |I_2|\} > \log(n \vee p).
	\]

It follows from \Cref{lemma:RSS lasso} and \eqref{eq:one change point} that, with probability at least $1 - 2c_1 (n \vee p)^{-288^2C_x^2d_0c_2/c_x^2} - 2(n \vee p)^{-c_3}$ that, that
	\begin{align}
		& \sum_{t \in I_1}(y_t - x_t^{\top}\widehat{\beta}^\lambda_I)^2 + \sum_{t \in I_2}(y_t - x_t^{\top}\widehat{\beta}^\lambda_I)^2 = \sum_{ t\in I}(y_t - x_t^{\top}\widehat{\beta}^\lambda_I)^2  \nonumber \\
		\leq & \sum_{t \in I_1}(y_t - x_t^{\top}\widehat{\beta}^\lambda_{I_1})^2 + \sum_{t \in I_2}(y_t - x_t^{\top}\widehat{\beta}^\lambda_{I_2})^2 + \gamma \nonumber \\
		\leq & \sum_{t \in I_1}(y_t - x_t^{\top}\beta^*_{I_1})^2 + \sum_{t \in I_2}(y_t - x_t^{\top}\beta^*_{I_2})^2 + \gamma + 2C_3 \lambda^2 d_0. \label{eq-lem-1cp-pf-1}
	\end{align}

Denoting $\Delta_i = \widehat{\beta}^\lambda_{I} - \beta_{I_i}^*$, $i = 1, 2$, \eqref{eq-lem-1cp-pf-1} leads to that  
	\begin{align}
		& \sum_{t\in I_1} (x_t^{\top}\Delta_1)^2 + \sum_{t\in I _2 } (x_t^{\top}\Delta_2)^2 \leq 2\sum_{t\in I_1} \varepsilon_t x_t^{\top} \Delta_1 + 2\sum_{t\in I_2} \varepsilon_t x_t^{\top} \Delta_2  + \gamma + 2C_3 \lambda^2 d_0 \nonumber \\
		\leq & 2 \left\|\sum_{t\in I_1} \varepsilon_t x_t  \right\|_{\infty} \|\Delta_1\|_1 + 2 \left\|\sum_{t\in I_2} \varepsilon_t x_t  \right\|_{\infty} \|\Delta_2\|_1 + \gamma + 2C_3 \lambda^2 d_0 \nonumber \\
		\leq & 2 \left\|\sum_{t\in I_1} \varepsilon_t x_t  \right\|_{\infty} \bigl(\|\Delta_1(S)\|_1 + \|\Delta_1(S^c)\|_1\bigr) + 2 \left\|\sum_{t\in I_2} \varepsilon_t x_t  \right\|_{\infty} \bigl(\|\Delta_2(S)\|_1 + \|\Delta_2(S^c)\|_1\bigr) \nonumber \\
		& \hspace{5cm} + \gamma + 2C_3 \lambda^2 d_0 \nonumber \\
		\leq & 2 \left\|\sum_{t\in I_1} \varepsilon_t x_t  \right\|_{\infty} \bigl(\sqrt{d_0}\|\Delta_1(S)\|_2 + \|\Delta_1(S^c)\|_1\bigr) + 2 \left\|\sum_{t\in I_2} \varepsilon_t x_t  \right\|_{\infty} \bigl(\sqrt{d_0}\|\Delta_2(S)\|_2 + \|\Delta_2(S^c)\|_1\bigr) \nonumber \\
		& \hspace{5cm} + \gamma + 2C_3 \lambda^2 d_0.\label{eq:one change point first step}
	\end{align}
	
On the events $\mathcal{B}_{I_1} \cap \mathcal{B}_{I_2}$, it holds that
	\begin{align}
		\eqref{eq:one change point first step} & \leq \lambda/2 \big(\sqrt{|I_1|d_0}\|\Delta_1(S)\|_2 + \sqrt{|I_1|}\|\Delta_1(S^c)\|_1 + \sqrt{|I_2|d_0} \|\Delta_2(S)\|_2 \nonumber \\
		& \hspace{1cm} + \sqrt{|I_2|}\|\Delta_2(S^c)\|_1\big) + \gamma + 2C_3 \lambda^2 d_0 \nonumber \\
		& \leq \frac{32 \lambda^2 d_0}{c_x^2} + \frac{c_x^2 |I_1| \|\Delta_1\|_2^2}{256} + \frac{c_x^2 |I_2| \|\Delta_2\|_2^2}{256} + \frac{\lambda (\sqrt{|I_1|} + \sqrt{|I_2|})}{2} \|\widehat{\beta}^{\lambda}_I(S^c)\|_1 + \gamma + 2C_3 \lambda^2 d_0 \nonumber \\
		& \leq \frac{32 \lambda^2 d_0}{c_x^2} + \frac{c_x^2 |I_1| \|\Delta_1\|_2^2}{256} + \frac{c_x^2 |I_2| \|\Delta_2\|_2^2}{256} + \gamma + 4C_3 \lambda^2 d_0, \label{eq-upper-long}
	\end{align}
	where the second inequality follows from $2ab \leq a^2 + b^2$, letting
	\[
		a = 4\lambda \sqrt{d_0}/c_x \quad \mbox{and} \quad b = c_x \sqrt{|I_j|}\|\Delta_1\|_2/16, \quad j = 1, 2,
	\]
	and the last inequality follows from \Cref{lemma:oracle 2}.

\vskip 3mm
Note that 
	\begin{align*}
		\|\Delta_1\|_1 \leq \|\Delta_1(S)\|_1 + \|\Delta_1(S^c)\|_1 \leq \sqrt{d_0}\|\Delta_1\|_2 + \frac{C_5 \lambda d_0 }{\sqrt{|I_1|}},
	\end{align*}
	which combines with \eqref{eq-contra}, on the event $\mathcal{E}_{I_1}$, leads to
	\[
		\sqrt{\sum_{t \in I_1} (x_t^{\top}\Delta_1)^2} > \frac{c_x \sqrt{|I_1|}}{4} \|\Delta_1\|_2 - 9 C_x \sqrt{\log(p)}\|\Delta_1\|_1 > \frac{c_x \sqrt{|I_1|}}{8} \|\Delta_1\|_2 - \frac{9C_5 C_x\lambda d_0 \sqrt{\log(p)} }{c_x^2 \sqrt{|I_1|}}.
	\]

Moreover, we have
	\begin{align}
		& \sqrt{|I_1|} \|\Delta_1\|_2 + \sqrt{|I_2|} \|\Delta_2\|_2 \geq \sqrt{|I_1| \|\Delta_1\|_2^2 + |I_2| \|\Delta_2\|_2^2} \nonumber \\
		\geq & \sqrt{\inf_{v \in \mathbb{R}^p} \{|I_1| \|\beta^*_{\eta} - v\|^2 + |I_2| \|\beta^*_{\eta + 1} - v\|^2\}} = \kappa \sqrt{\frac{|I_1||I_2|}{|I|}} \geq \frac{\kappa}{\sqrt{2}}\min\{\sqrt{|I_1|}, \sqrt{|I_2|} \}. \label{eq-lower-long}
	\end{align}
	
Therefore, on the event $\mathcal{E}_{I_1} \cap \mathcal{E}_{I_2} \cap \mathcal{B}_{I_1} \cap \mathcal{B}_{I_2}$, combining \eqref{eq:one change point first step} and \eqref{eq-upper-long},  we have that
	\begin{align*}
		&  \sqrt{|I_1|} \|\Delta_1\|_2 + \sqrt{|I_2|} \|\Delta_2\|_2 \leq \frac{8}{c_x} \left(\sqrt{\sum_{t \in I_1} (x_t^{\top}\Delta_1)^2} + \sqrt{\sum_{t \in I_2} (x_t^{\top}\Delta_2)^2} \right) \\
		& \hspace{5cm} + \frac{8}{c_x}\left(\frac{9C_5 C_x\lambda d_0 \sqrt{\log(p)} }{c_x^2 \sqrt{|I_1|}} + \frac{9C_5 C_x\lambda d_0 \sqrt{\log(p)} }{c_x^2 \sqrt{|I_2|}}\right) \\
		& \leq \frac{8\sqrt{2}}{c_x} \sqrt{\frac{32 \lambda^2 d_0}{c_x^2} + \frac{c_x^2 |I_1| \|\Delta_1\|_2^2}{256} + \frac{c_x^2 |I_2| \|\Delta_2\|_2^2}{256} + \gamma + 4C_3 \lambda^2 d_0} \\
		& \hspace{5cm} + \frac{8}{c_x}\left(\frac{9C_5 C_x\lambda d_0 \sqrt{\log(p)} }{c_x^2 \sqrt{|I_1|}} + \frac{9C_5 C_x\lambda d_0 \sqrt{\log(p)} }{c_x^2 \sqrt{|I_2|}}\right)  \\
		& \leq \frac{64\lambda \sqrt{d_0}}{c_x^2} + \frac{\sqrt{2} \sqrt{|I_1|}\|\Delta_1\|_2}{2} + \frac{\sqrt{2} \sqrt{|I_2|}\|\Delta_2\|_2}{2} + 	\frac{8\sqrt{2\gamma}}{c_x}	+ \frac{16\sqrt{2C_3} \lambda \sqrt{d_0}}{c_x} + \frac{C_5\lambda \sqrt{d_0}}{2c_x^2},
	\end{align*}
	which implies that
	\begin{align}\label{eq-2-sqrt-2}
		\frac{2 - \sqrt{2}}{2} \left(\sqrt{|I_1|} \|\Delta_1\|_2 + \sqrt{|I_2|} \|\Delta_2\|_2 \right) \leq \frac{128 + 32\sqrt{2}c_x\sqrt{C_3} + C_5}{2c_x^2} \lambda\sqrt{d_0} + 	\frac{8\sqrt{2\gamma}}{c_x}.
	\end{align}
	Combining \eqref{eq-lower-long} and \eqref{eq-2-sqrt-2} yields
	\begin{align*}
		\frac{2 - \sqrt{2}}{2\sqrt{2}}\kappa \sqrt{\min\{|I_1|, \, |I_2|\}} \leq \frac{128 + 32\sqrt{2}c_x\sqrt{C_3} + C_5}{2c_x^2} \lambda\sqrt{d_0} + 	\frac{8\sqrt{2\gamma}}{c_x},	 
	\end{align*}
	therefore
	\[
		\min\{|I_1|, \, |I_2|\} \leq C_{\epsilon} \left(\frac{\lambda^2 d_0 + \gamma}{\kappa^2}\right),
	\]
	which is a contradiction with \eqref{eq-contra}.
\end{proof}

\begin{lemma}[Case (ii)] \label{lemma:two change point}
For \Cref{assume:change point regression model}, under \Cref{assume:high dim coefficient}, with 
	\[
		\lambda \geq \lambda_2 = C_{\lambda}\sigma_{\varepsilon} \sqrt{d_0 \log(n \vee p)}, 
	\]
	where $C_{\lambda} > 8C_{\beta}C_x/\sigma_{\varepsilon}$, $I = (s, e]$ containing exactly two change points $\eta_1$ and $\eta_2$.  Denote $ I_1 =(s,\eta_1] $, $I_2 = (\eta_1,\eta_2] $, $I_3 =(\eta_2,e]$, $\|  \beta_{I_1} ^*  -\beta_{I_2} ^* \|_2 = \kappa_1$ and $\|  \beta_{I_2} ^*  -\beta_{I_3} ^* \|_2 = \kappa_2$.  If in addition it holds that  
	\[
		 \sum_{t\in I } (y_t -x_t^{\top} \widehat \beta^\lambda_ I )^2 \leq \sum_{t\in I_1 } (y_t -x_t^{\top} \widehat \beta^\lambda_ {I_1}  )^2 + \sum_{t\in I_2 } (y_t -x_t^{\top} \widehat \beta^\lambda_ {I_2}  )^2  + \sum_{t\in I_3} (y_t -x_t^{\top} \widehat \beta^\lambda_ {I_3}  )^2 + 2\gamma, 
	\]
	then 
	\begin{align*}
		\max\{|I_1|, \, |I_3|\} \leq C_{\epsilon} \left(\frac{\lambda^2 d_0 + \gamma}{\kappa^2}\right),
	\end{align*} 
	with probability at least $1 - 3c_1 (n \vee p)^{-288^2C_x^2d_0c_2/c_x^2} - 2(n \vee p)^{-c_3}$.
\end{lemma}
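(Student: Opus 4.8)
My plan is to follow the template of \Cref{lemma:one change point}, the only new feature being that the middle block $I_2$ is \emph{automatically} long, so it behaves exactly like one of the two segments in the single–change-point case, while $I_1$ and $I_3$ sit symmetrically on either side of it. I would set $\Delta_j=\widehat{\beta}^{\lambda}_I-\beta^{*}_{I_j}$ for $j=1,2,3$ and work on the intersection of the restricted-eigenvalue events of \Cref{lem-RE} for $I,I_1,I_2,I_3$, the noise events $\mathcal{B}_{I_1},\mathcal{B}_{I_2},\mathcal{B}_{I_3}$ of \Cref{lem-x-bound}, and the events underlying Lemmas~\ref{lemma:lasso}, \ref{lemma:RSS lasso} and \ref{lemma:oracle 2}; a union bound caps the probability of the complement by $3c_1(n\vee p)^{-288^2C_x^2 d_0 c_2/c_x^2}+2(n\vee p)^{-c_3}$. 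Then I would argue by contradiction: assume $\max\{|I_1|,|I_3|\}>C_{\epsilon}(\lambda^2 d_0+\gamma)/\kappa^2$ and, using the symmetry of $I_1$ and $I_3$ (each adjacent to $I_2$ across a jump of size at least $\kappa$), suppose $|I_1|>C_{\epsilon}(\lambda^2 d_0+\gamma)/\kappa^2$; since $\kappa^2\le 4C_{\beta}^2 d_0$, taking $C_{\epsilon}$ large forces both $|I_1|$ and hence $|I|$ to exceed $288^2 C_x^2 d_0\log(n\vee p)/c_x^2$. Also $|I_2|=\eta_2-\eta_1\ge\Delta$, which by \eqref{eq:snr} and \eqref{eq-eff-size} exceeds the same threshold, so $I_1,I_2,I$ are all long; $I_3$ may be long or short.

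The first step is to turn the splitting hypothesis into a basic inequality for $\widehat{\beta}^{\lambda}_I$. Since $I=I_1\sqcup I_2\sqcup I_3$, $\beta^{*}_t\equiv\beta^{*}_{I_j}$ on $I_j$, and $y_t=x_t^{\top}\beta^{*}_{I_j}+\varepsilon_t$ there, expanding squares gives
\[
\sum_{j=1}^{3}\sum_{t\in I_j}(x_t^{\top}\Delta_j)^2=2\sum_{j=1}^{3}\sum_{t\in I_j}\varepsilon_t x_t^{\top}\Delta_j+\sum_{j=1}^{3}\Bigl(\sum_{t\in I_j}(y_t-x_t^{\top}\widehat{\beta}^{\lambda}_I)^2-\sum_{t\in I_j}(y_t-x_t^{\top}\beta^{*}_{I_j})^2\Bigr).
\]
Each $I_j$ is free of true change points, so the lemma's hypothesis together with \Cref{lemma:RSS lasso} applied on each $I_j$ bounds the last sum by $2\gamma+\sum_{j}r_j$, with $r_j\le\max\{C_3\lambda^2 d_0,\,C_4\lambda\sqrt{\log(n\vee p)}\,d_0^{3/2}\}\lesssim\lambda^2 d_0$, the final estimate using $\lambda\ge\lambda_2=C_{\lambda}\sigma_{\varepsilon}\sqrt{d_0\log(n\vee p)}$ with $C_{\lambda}\sigma_{\varepsilon}>8C_{\beta}C_x$. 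This yields $\sum_{j}\sum_{t\in I_j}(x_t^{\top}\Delta_j)^2\le 2\sum_{j}\sum_{t\in I_j}\varepsilon_t x_t^{\top}\Delta_j+2\gamma+C\lambda^2 d_0$.

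From here I would replay the chain \eqref{eq:one change point first step}--\eqref{eq-lower-long} of \Cref{lemma:one change point} on the long intervals among $I_1,I_2,I_3$. For the upper bound: control each $\sum_{t\in I_j}\varepsilon_t x_t^{\top}\Delta_j$ on $\mathcal{B}_{I_j}$ by $\|\sum_{t\in I_j}\varepsilon_t x_t\|_{\infty}\|\Delta_j\|_1$, split $\|\Delta_j\|_1\le\sqrt{d_0}\|\Delta_j(S)\|_2+\|\widehat{\beta}^{\lambda}_I(S^c)\|_1$, use $2ab\le a^2+b^2$ to peel off a small multiple of $|I_j|\|\Delta_j\|_2^2$ at cost $O(\lambda^2 d_0/c_x^2)$, and bound $\sqrt{|I_j|}\,\|\widehat{\beta}^{\lambda}_I(S^c)\|_1\le\sqrt{|I|}\cdot C_5\lambda d_0/\sqrt{|I|}=C_5\lambda d_0$ via \Cref{lemma:oracle 2}; when $I_3$ is short, instead use $|I_3|\lesssim d_0\log(n\vee p)$ and $\|\Delta_3\|_1\lesssim d_0$ (again \Cref{lemma:oracle 2}) so that $2\sum_{t\in I_3}\varepsilon_t x_t^{\top}\Delta_3\lesssim\lambda\sqrt{d_0\log(n\vee p)}\,d_0\lesssim\lambda^2 d_0$ and no $|I_3|\|\Delta_3\|_2^2$ term is produced. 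For the lower bound: the restricted-eigenvalue estimate on $\mathcal{E}_{I_j}$, combined as in \eqref{eq-lem9-pf-5} (with $\|\Delta_j\|_1\le\sqrt{d_0}\|\Delta_j\|_2+C_5\lambda d_0/\sqrt{|I|}$ and $|I_j|$ large), gives $\sum_{t\in I_j}(x_t^{\top}\Delta_j)^2\ge\frac{c_x^2|I_j|}{2}\|\Delta_j\|_2^2-C''\lambda^2 d_0$ for each long $I_j$, while a short $I_3$ contributes a nonnegative term. Choosing the peeled-off multiple below $c_x^2|I_j|/2$ and rearranging leaves $|I_1|\|\Delta_1\|_2^2+|I_2|\|\Delta_2\|_2^2\lesssim(\lambda^2 d_0+\gamma)/c_x^2$.

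To finish, I would bound the left side from below by the projection identity, evaluated at $v=\widehat{\beta}^{\lambda}_I$ and using $\kappa_1\ge\kappa$:
\[
|I_1|\|\Delta_1\|_2^2+|I_2|\|\Delta_2\|_2^2\ge\inf_{v\in\mathbb{R}^p}\bigl\{|I_1|\|\beta^{*}_{I_1}-v\|_2^2+|I_2|\|\beta^{*}_{I_2}-v\|_2^2\bigr\}=\kappa_1^2\,\frac{|I_1||I_2|}{|I_1|+|I_2|}\ge\frac{\kappa^2}{2}\min\{|I_1|,|I_2|\}.
\]
If $|I_1|\le|I_2|$, then $\frac{\kappa^2}{2}|I_1|\lesssim(\lambda^2 d_0+\gamma)/c_x^2$, contradicting $|I_1|>C_{\epsilon}(\lambda^2 d_0+\gamma)/\kappa^2$ once $C_{\epsilon}$ is large. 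If $|I_1|>|I_2|$, then $\min\{|I_1|,|I_2|\}=|I_2|\ge\Delta$, so $\frac{\kappa^2\Delta}{2}\lesssim(\lambda^2 d_0+\gamma)/c_x^2\lesssim K\sigma_{\varepsilon}^2 d_0^2\log(n\vee p)/c_x^2$, contradicting \eqref{eq:snr} (for $C_{\mathrm{SNR}}$ large, or for $n\vee p$ large when $\xi>0$). Both alternatives being impossible, $\max\{|I_1|,|I_3|\}\le C_{\epsilon}(\lambda^2 d_0+\gamma)/\kappa^2$. The part I expect to require the most care is precisely the short-$I_3$ case: I must confirm that its \Cref{lemma:RSS lasso} correction and its noise contribution are both $O(\lambda^2 d_0)$ and that no uncontrolled $|I_3|\|\Delta_3\|_2^2$ term is generated — which is exactly what the choice $\lambda\ge\lambda_2$ secures — after which everything else is a routine reprise of the single–change-point argument, now with the extra but harmless long block $I_2$.
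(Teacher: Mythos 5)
Your proposal is correct and follows essentially the same route as the paper's proof: contradiction on $|I_1|$ (with $I_3$ handled by symmetry), the basic inequality via \Cref{lemma:RSS lasso}, noise and restricted-eigenvalue bounds on the long blocks exactly as in \Cref{lemma:one change point}, and the observation that $|I_2|\ge\Delta$ together with \eqref{eq:snr} forces $\min\{|I_1|,|I_2|\}=|I_1|$. The only (harmless) deviation is that you bound the short-$I_3$ cross term directly by $\|\sum_{t\in I_3}\varepsilon_t x_t\|_{\infty}\|\Delta_3\|_1\lesssim\lambda^2 d_0$ whereas the paper absorbs it via $2\varepsilon_t x_t^{\top}\Delta_3\le\varepsilon_t^2+(x_t^{\top}\Delta_3)^2$; both yield the same $O(\lambda^2 d_0)$ slack.
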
 

\begin{proof} 
First we notice that with the choice of $\lambda$, it holds that 
	\[
		\lambda \geq \max\{\lambda_1, \, \lambda_2\},
	\]
	and therefore we can apply Lemmas~\ref{lemma:lasso}, \ref{lemma:RSS lasso} and \ref{lemma:oracle 2} when needed.

By symmetry, it suffices to show that
	\[
		|I_1| \leq C_{\epsilon} \left(\frac{\lambda^2 d_0 + \gamma}{\kappa^2}\right).
	\] 
	We prove by contradiction, assuming that
	\begin{equation}\label{eq-contra-2}
		|I_1| > C_{\epsilon} \left(\frac{\lambda^2 d_0 + \gamma}{\kappa^2} \right) > 288^2 C^2_x d_0 \log(n \vee p)/c_x^2,
	\end{equation}
	where the second inequality follows from the observation that $\kappa^2 \leq 4 d_0 C^2_{\beta}$.  Therefore we have $|I_1| > \log(n \vee p)$.  Denote $\Delta_i = \widehat \beta^\lambda_I  - \beta_{I_i}^*$, $i = 1, 2, 3$.  We then consider the following two cases.

\vskip 3mm
\noindent {\bf Case 1.}  If 
	\[
		|I_3| > 288^2 C^2_x d_0 \log(n \vee p)/c_x^2,
	\]
	then $|I_3| > \log(n \vee p)$.  It follows from \Cref{lemma:RSS lasso} that the following holds with probability at least $1 - 3c_1 (n \vee p)^{-288^2C_x^2d_0c_2/c_x^2} - 2(n \vee p)^{-c_3}$ that,
	\begin{align}
		& \sum_{t\in I } (y_t -x_t^{\top} \widehat \beta^\lambda_ I )^2 \leq \sum_{t\in I_1 } (y_t -x_t^{\top} \widehat \beta^\lambda_ {I_1}  )^2 + \sum_{t\in I_2 } (y_t -x_t^{\top} \widehat \beta^\lambda_ {I_2}  )^2  + \sum_{t\in I_3} (y_t -x_t^{\top} \widehat \beta^\lambda_ {I_3}  )^2 + 2\gamma \nonumber \\
		\leq &  \sum_{t \in I_1 } (y_t -x_t^{\top}   \beta^* _ {I_1}  )^2 + \sum_{t \in I_2 } (y_t -x_t^{\top}   \beta^* _ {I_2}  )^2 + \sum_{t \in I_3 } (y_t -x_t^{\top}   \beta^* _ {I_3}  )^2 + 3C_3 \lambda^2 d_0 + 2\gamma,  \nonumber
	\end{align}
	which implies that 
	\begin{align}
	& \sum_{i=1}^3 \sum_{t\in I_i} (x_t^{\top}\Delta_i)^2 \leq 2 \sum_{i=1}^3 \sum_{t\in I_i} \varepsilon_t x_t^{\top} \Delta_i + 3C_3 \lambda^2 d_0 + 2\gamma \nonumber \\
	\leq & 2\sum_{i=1}^3 \left \| \frac{1}{\sqrt {|I_i| }} \sum_{t\in I_i } \varepsilon_t x_t  \right \|_{\infty } \| \sqrt { | I_i|}\Delta_i \|_1 + 3C_3 \lambda^2 d_0 + 2\gamma  \nonumber \\
   \le & \lambda/2 \sum_{i=1}^3 \left(  \sqrt {d_0 |I_i| }\|\Delta_i(S)\|_2  + \sqrt {|I_i| } \|\Delta_i(S^c) \|_1 \right) + 3C_3 \lambda^2 d_0 + 2\gamma,  \nonumber
	\end{align}
	where the last inequality follows from \Cref{lem-x-bound}.
	
It follows from identical arguments in \Cref{lemma:one change point} that, with probability at least $1 - 3c_1 (n \vee p)^{-288^2C_x^2d_0c_2/c_x^2} - 2(n \vee p)^{-c_3}$,
	\[
		\min\{|I_1|,\, |I_2|\} \le C_{\epsilon} \left(\frac{\lambda^2 d_0 + \gamma}{\kappa^2}\right).
	\]
	Since $|I_2| \ge \Delta$ by assumption, it follows from \Cref{assume:high dim coefficient}(\textbf{d}) that
	\[
		|I_1|  \le C_{\epsilon} \left(\frac{\lambda^2 d_0 + \gamma}{\kappa^2}\right),
	\]
	which contradicts \eqref{eq-contra-2}.

\vskip 3mm
\noindent {\bf Case 2.}  If  
	\[
		|I_3| \leq 288^2 C^2_x d_0 \log(n \vee p)/c_x^2,
	\]
	then it follows from \Cref{lemma:RSS lasso} that the following holds with probability at least $1 - 2c_1 (n \vee p)^{-288^2C_x^2d_0c_2/c_x^2} - 2(n \vee p)^{-c_3}$ that,
	\begin{align}
		& \sum_{t\in I } (y_t -x_t^{\top} \widehat \beta^\lambda_ I )^2 \leq \sum_{t\in I_1 } (y_t -x_t^{\top} \widehat \beta^\lambda_ {I_1}  )^2 + \sum_{t\in I_2 } (y_t -x_t^{\top} \widehat \beta^\lambda_ {I_2}  )^2  + \sum_{t\in I_3} (y_t -x_t^{\top} \widehat \beta^\lambda_ {I_3}  )^2 + 2\gamma \nonumber \\
		\leq &  \sum_{t \in I_1 } (y_t -x_t^{\top}   \beta^* _ {I_1}  )^2 + \sum_{t \in I_2 } (y_t -x_t^{\top}   \beta^* _ {I_2}  )^2 + \sum_{t \in I_3 } (y_t -x_t^{\top}   \beta^* _ {I_3}  )^2 + 2C_3 \lambda^2 d_0 + C_4 \lambda \sqrt{\log(p)} d_0^{3/2} + 2\gamma,  \nonumber
	\end{align}
	which implies that
	\begin{align}
		& \sum_{i=1}^3 \sum_{t\in I_i} (x_t^{\top}\Delta_i)^2 \leq 2 \sum_{i=1}^3 \sum_{t\in I_i} \varepsilon_t x_t^{\top} \Delta_i + 2C_3 \lambda^2 d_0 + C_4 \lambda \sqrt{\log(p)} d_0^{3/2} + 2\gamma \nonumber \\
		\leq & 2\sum_{i=1}^2 \left \| \frac{1}{\sqrt {|I_i| }} \sum_{t\in I_i } \varepsilon_t x_t  \right \|_{\infty } \| \sqrt { | I_i|}\Delta_i \|_1 + 2C_3 \lambda^2 d_0 + C_4 \lambda \sqrt{\log(p)} d_0^{3/2} \nonumber \\
		& \hspace{5cm} + 2\gamma + \sum_{t\in I_3} (x_t^{\top}\Delta_3)^2 + \sum_{t \in I_3}\varepsilon_t^2 \nonumber \\
	   \le & \lambda/2 \sum_{i=1}^2 \left(  \sqrt {d_0 |I_i| }\|\Delta_i(S)\|_2  + \sqrt {|I_i| } \|\Delta_i(S^c) \|_1 \right) + 2C_3 \lambda^2 d_0 + C_4 \lambda \sqrt{\log(p)} d_0^{3/2} \nonumber \\
	   & \hspace{5cm} + 2\gamma + \sum_{t\in I_3} (x_t^{\top}\Delta_3)^2 + \sum_{t \in I_3}\varepsilon_t^2. \nonumber 
	\end{align}
	The rest follows from the same arguments as in \textbf{Case 1.}
	
\end{proof}

\begin{lemma}[Case (iii) in Proposition~\ref{proposition:dp step 1}] \label{lemma:no change point}
For \Cref{assume:change point regression model}, under \Cref{assume:high dim coefficient}, if there exists no true change point in $I = (s, e]$, with 
	\[
		\lambda \geq \lambda_2 = C_{\lambda}\sigma_{\varepsilon} \sqrt{d_0 \log(n \vee p)}, 
	\]
	where $C_{\lambda} > \max\{8C_1C_x, \, 8C_{\beta}C_x/\sigma_{\varepsilon}\}$, and $\gamma = C_{\gamma} \sigma_{\varepsilon}^2 d_0^2\log(n \vee p)$, where $C_{\gamma} > \max\{3C_3/c_x^2, \, 3C_4/c_x\}$, it holds with probability at least $1 - 3c_1 (n \vee p)^{-288^2C_x^2d_0c_2/c_x^2} - 2(n \vee p)^{-c_3}$ that
	\[
		\sum_{t \in I}(y_t - x_t^{\top} \widehat \beta^\lambda_I)^2 < \min_{b = s + 1, \ldots, e - 1} \left\{\sum_{t \in (s, b]}(y_t - x_t^{\top} \widehat \beta^\lambda_{(s, b]})^2 + \sum_{t \in (b, e]} (y_t - x_t^{\top} \widehat \beta^\lambda_{(b, e]})^2 \right\} + \gamma.
	\]
\end{lemma}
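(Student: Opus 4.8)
The plan is to compare both sides of the claimed inequality to the \emph{oracle} residual sum of squares, the one computed with the single true coefficient vector, and to control the two comparisons with \Cref{lemma:RSS lasso}. Since $I=(s,e]$ contains no true change point, there is a vector $\beta^*\in\mathbb R^p$ with $\beta^*_t=\beta^*$ for all $t\in I$; hence $\beta^*_I=\beta^*_{(s,b]}=\beta^*_{(b,e]}=\beta^*$ for every $b\in\{s+1,\dots,e-1\}$, so the oracle RSS splits exactly, $\sum_{t\in I}(y_t-x_t^\top\beta^*)^2=\sum_{t\in(s,b]}(y_t-x_t^\top\beta^*)^2+\sum_{t\in(b,e]}(y_t-x_t^\top\beta^*)^2$ for all $b$. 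Note also that $\lambda\ge\lambda_2\ge\lambda_1$ since $d_0\ge1$, so \Cref{lemma:RSS lasso} applies to every change-point-free interval used below.

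First I would apply \Cref{lemma:RSS lasso} to $I$ to get the upper bound $\sum_{t\in I}(y_t-x_t^\top\widehat\beta^\lambda_I)^2\le\sum_{t\in I}(y_t-x_t^\top\beta^*)^2+D_I$ with $D_I\le\max\{\lambda^2 d_0,\,C_4\lambda\sqrt{\log(n\vee p)}\,d_0^{3/2}\}$ according to whether $|I|$ is large or small in the sense of that lemma. Next, for a fixed split point $b$, \Cref{lemma:RSS lasso} applied to the two change-point-free subintervals $(s,b]$ and $(b,e]$ gives $\sum_{t\in(s,b]}(y_t-x_t^\top\widehat\beta^\lambda_{(s,b]})^2\ge\sum_{t\in(s,b]}(y_t-x_t^\top\beta^*)^2-D_{b,1}$ and $\sum_{t\in(b,e]}(y_t-x_t^\top\widehat\beta^\lambda_{(b,e]})^2\ge\sum_{t\in(b,e]}(y_t-x_t^\top\beta^*)^2-D_{b,2}$, with $D_{b,1},D_{b,2}$ obeying the same bound; the point of invoking \Cref{lemma:RSS lasso} as a black box here is that it already covers both the regime where a subinterval is long enough to use the restricted-eigenvalue/Lasso oracle inequality of \Cref{lemma:lasso} and \Cref{lem-x-bound}, and the regime where a subinterval is very short, in which case it instead bounds the cross term by $2ab\le a^2+b^2$ and uses a $\chi^2$-type concentration of $\sum_t\varepsilon_t^2$. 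Adding the last two inequalities, subtracting the first, and using the exact split of the oracle RSS yields, for every $b$,
\[
\sum_{t\in I}(y_t-x_t^\top\widehat\beta^\lambda_I)^2-\Big\{\sum_{t\in(s,b]}(y_t-x_t^\top\widehat\beta^\lambda_{(s,b]})^2+\sum_{t\in(b,e]}(y_t-x_t^\top\widehat\beta^\lambda_{(b,e]})^2\Big\}\ \le\ D_I+D_{b,1}+D_{b,2}.
\]

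It then remains to check that the right-hand side is strictly below $\gamma$. With $\lambda\asymp\sigma_\varepsilon\sqrt{d_0\log(n\vee p)}$ one has $\lambda^2 d_0\asymp\sigma_\varepsilon^2 d_0^2\log(n\vee p)$ and $\lambda\sqrt{\log(n\vee p)}\,d_0^{3/2}\asymp\sigma_\varepsilon d_0^2\log(n\vee p)$, so each of $D_I,D_{b,1},D_{b,2}$ is at most a fixed multiple of $\sigma_\varepsilon^2 d_0^2\log(n\vee p)$; taking $C_\gamma$ strictly above the resulting threshold — this is exactly the condition $C_\gamma>\max\{3C_3/c_x^2,\,3C_4/c_x\}$ in the statement — makes each of the three terms strictly less than $\gamma/3$, hence their sum strictly less than $\gamma$, uniformly in $b$; minimizing over $b$ gives the claim. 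The probability statement follows by intersecting the events underlying \Cref{lemma:RSS lasso} over the $O(e-s)$ intervals involved (namely $I$ and the pairs $(s,b],(b,e]$); the union bound costs only a polynomial-in-$n$ factor, which is absorbed into the exponents of $c_1(n\vee p)^{-288^2C_x^2 d_0 c_2/c_x^2}$ (using $d_0\ge1$) and $(n\vee p)^{-c_3}$ by taking $C_\lambda$ large, leaving the stated bound.

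The step I expect to be the main obstacle is the treatment of the very short subintervals $(s,b]$ or $(b,e]$, those of length $o(d_0\log(n\vee p))$, for which \Cref{lemma:lasso} is unavailable: there the fitted-versus-oracle RSS comparison rests only on the elementary absorption of the cross term, and one must confirm that the resulting correction is still $O(d_0^2\log(n\vee p))$ — and therefore dominated by $\gamma$ — uniformly over all such $b$. Controlling the union bound over all split points $b$ without degrading the probability beyond the claimed form is the secondary technical point.
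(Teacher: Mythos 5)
Your proposal is correct and follows essentially the same route as the paper: compare each of the three fitted residual sums to the oracle RSS via Lemma~\ref{lemma:RSS lasso}, use that $\beta^*_I=\beta^*_{(s,b]}=\beta^*_{(b,e]}$ so the oracle RSS splits exactly, and bound each deviation by $\gamma/3$ via the choice of $C_\gamma$. If anything, you are more careful than the paper on the uniformity over the split point $b$ (the paper states the bound for a generic $b$ without an explicit union bound), and your absorption of the polynomial factor into the exponents handles that correctly.
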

\begin{proof}
First we notice that with the choice of $\lambda$, it holds that $\lambda > \lambda_1$, therefore we can apply \Cref{lemma:RSS lasso} when needed.

For any $b = s + 1, \ldots, e - 1$, let $I_1 = (s, b]$ and $I_2 = (b, e]$.  It follows from \Cref{lemma:RSS lasso} that with probability at least $1 - 3c_1 (n \vee p)^{-288^2C_x^2d_0c_2/c_x^2} - 2(n \vee p)^{-c_3}$, 
	\begin{align*}
		\max_{J \in \{I_1, I_2, I\}} \left | \sum_{t\in J} (y_t -x_t^{\top} \widehat \beta^\lambda_J)^2  - \sum_{t \in J} (y_t -x_t^{\top}   \beta^*_J)^2 \right| \leq \max\left\{C_3\lambda^2 d_0, \, C_4 \lambda \sqrt{\log(n \vee p)} d_0^{3/2} \right\} < \gamma/3.
	\end{align*}
	Since $ \beta^*_ {I} = \beta^*_ {I_1} = \beta^*_ {I_2}$, the final claim holds automatically.

\end{proof}

\begin{lemma}[Case (iv) in Proposition~\ref{proposition:dp step 1}] \label{lemma:three change point}
For \Cref{assume:change point regression model}, under \Cref{assume:high dim coefficient}, if $I = (s, e]$ contains  $J$ true change points $\{ \eta_{k}\}_{j=1}^J$, where $|J| \ge 3$, if
	\[
		\lambda \geq \lambda_2 = C_{\lambda}\sigma_{\varepsilon} \sqrt{d_0} \log(n \vee p), 
	\]
	where $C_{\lambda} > 8C_{\beta}C_x/\sigma_{\varepsilon}$, then with probability at least $1 - nc_1 (n \vee p)^{-288^2C_x^2d_0c_2/c_x^2} - 2(n \vee p)^{-c_3}$,
	\[
		\sum_{t\in I } (y_t -x_t^{\top} \widehat \beta^\lambda_ I )^2 > \sum_{j=1}^{J+1} \sum_{t\in I_j  } (y_t -x_t^{\top} \widehat \beta^\lambda_ {I_j}  )^2 + J \gamma, 
	\]
	where $I_1 =(s,\eta_{1}] $, $I_j = (\eta_{j},\eta_{j+1}] $ for any $2\le j \le J $ and $I_{J+1} = (\eta_J, e]$.
\end{lemma}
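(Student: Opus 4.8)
The plan is to show that squeezing a single Lasso coefficient $\widehat\beta^\lambda_I$ onto an interval $I$ that carries $J\ge 3$ change points is so wasteful that the residual sum of squares on $I$ exceeds, by strictly more than $J\gamma$, the residual sum of squares obtained by refitting separately on the $J+1$ change-point-free pieces $I_1,\dots,I_{J+1}$. Writing $v_j:=\beta^*_{I_j}-\widehat\beta^\lambda_I$ and using $y_t=x_t^\top\beta^*_t+\varepsilon_t$, I would first expand
\[
\sum_{t\in I}(y_t-x_t^\top\widehat\beta^\lambda_I)^2=\sum_{t\in I}\varepsilon_t^2+\sum_{j=1}^{J+1}\Big\{\sum_{t\in I_j}(x_t^\top v_j)^2+2\sum_{t\in I_j}\varepsilon_t x_t^\top v_j\Big\}.
\]
Since each $I_j$ contains no true change point, \Cref{lemma:RSS lasso} applied to every $I_j$ bounds the right-hand side of the claimed inequality by $\sum_{t\in I}\varepsilon_t^2+(J+1)\max\{C_3\lambda^2d_0,\,C_4\lambda\sqrt{\log(n\vee p)}d_0^{3/2}\}+J\gamma$, which by the choice of $\gamma$ is at most $\sum_{t\in I}\varepsilon_t^2+O(J\gamma)$. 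Hence it suffices to prove $\sum_{j=1}^{J+1}\{\sum_{t\in I_j}(x_t^\top v_j)^2+2\sum_{t\in I_j}\varepsilon_t x_t^\top v_j\}\gtrsim J\gamma$ up to lower-order terms, paralleling \Cref{lemma:one change point}.

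The signal lives in the quadratic part. Because consecutive true coefficients satisfy $\|\beta^*_{I_j}-\beta^*_{I_{j+1}}\|_2\ge\kappa$, the single vector $\widehat\beta^\lambda_I$ can be within $\kappa/2$ in $\ell_2$ of at most one member of each consecutive pair, so among the $J-1$ interior pieces $I_2,\dots,I_J$ — each of length at least $\Delta$ — there is a subset $\mathcal J$ of size $\ge(J-2)/2\gtrsim J$ on which $\|v_j\|_2\ge\kappa/2$. Keeping only the $\mathcal J$-terms, I would lower-bound each on the restricted-eigenvalue event $\mathcal E_{I_j}$ of \Cref{lem-RE} (valid since $|I_j|\ge\Delta$, which by \eqref{eq-eff-size} is $\gtrsim d_0\log(n\vee p)$), using the oracle control of $\widehat\beta^\lambda_I$ from \Cref{lemma:oracle 2} — namely $\|\widehat\beta^\lambda_I(S^c)\|_1\le3\|\widehat\beta^\lambda_I(S)\|_1$ together with $\|\widehat\beta^\lambda_I-\beta^*_I\|_2\le C_5\lambda\sqrt{d_0}/\sqrt{|I|}$ and $\|\widehat\beta^\lambda_I-\beta^*_I\|_1\le C_5\lambda d_0/\sqrt{|I|}$, which apply since $|I|\ge\Delta$ — to reduce $\|v_j\|_1$ to $\sqrt{d_0}\,\|v_j(S)\|_2$ plus a negligible remainder. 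This gives $\sum_{t\in I_j}(x_t^\top v_j)^2\gtrsim c_x^2|I_j|\,\|v_j\|_2^2\gtrsim c_x^2\Delta\kappa^2$ for each $j\in\mathcal J$, hence $\sum_j\sum_{t\in I_j}(x_t^\top v_j)^2\gtrsim J c_x^2\Delta\kappa^2$, which by \Cref{assume:high dim coefficient}(\textbf{d}) and \eqref{eq:snr} dominates $J\gamma$ by a growing factor.

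It then remains to absorb the cross terms and the restricted-eigenvalue corrections uniformly over all $J+1$ pieces. For the at most two \emph{short} endpoint pieces (those with $|I_j|<288^2C_x^2d_0\log(n\vee p)/c_x^2$, where $\mathcal E_{I_j}$ is unavailable) I would complete the square, $\sum_{t\in I_j}(x_t^\top v_j)^2+2\sum_{t\in I_j}\varepsilon_t x_t^\top v_j\ge-\sum_{t\in I_j}\varepsilon_t^2$, and note that on a high-probability $\chi^2$-concentration event $\sum_{t\in I_j}\varepsilon_t^2\lesssim\sigma^2_\varepsilon d_0\log(n\vee p)\lesssim\gamma$. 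For every long piece I would bound $|2\sum_{t\in I_j}\varepsilon_t x_t^\top v_j|$ on $\mathcal B_{I_j}$ of \Cref{lem-x-bound} by $\tfrac\lambda4\sqrt{|I_j|}\,\|v_j\|_1$, split $\|v_j\|_1\le\sqrt{d_0}\|v_j(S)\|_2+\|\widehat\beta^\lambda_I(S^c)\|_1$, use $2ab\le\epsilon a^2+\epsilon^{-1}b^2$ with small $\epsilon$ to charge the $\sqrt{d_0}\|v_j(S)\|_2$-component against a small constant fraction of $c_x^2|I_j|\|v_j\|_2^2$ at the cost of an additive $O(\lambda^2d_0)$ per piece, and observe that the $\|\widehat\beta^\lambda_I(S^c)\|_1$-component contributes only $O(\lambda^2d_0)$ per piece because $\sqrt{|I_j|/|I|}\le1$; the restricted-eigenvalue correction $9C_x\sqrt{\log p}\|v_j\|_1$, squared, is likewise absorbed up to an $O(\lambda^2d_0)$ remainder once $|I_j|\gg d_0\log(n\vee p)$.

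Summing over the $\le J+1$ pieces yields $\sum_j\{\sum_{t\in I_j}(x_t^\top v_j)^2+2\sum_{t\in I_j}\varepsilon_t x_t^\top v_j\}\gtrsim J c_x^2\Delta\kappa^2-O(J\lambda^2d_0)-O(J\gamma)$, and since \eqref{eq:snr} forces $\Delta\kappa^2$ to exceed both $\lambda^2d_0$ and $\gamma$ by a diverging factor, the claimed strict inequality follows. A union bound over the $O(n)$ events $\mathcal E_{I_j}$, $\mathcal B_{I_j}$ and the associated oracle and $\chi^2$-concentration events produces the prefactor $n$ in the stated failure probability. The main obstacle is exactly this last bookkeeping step — controlling the cross-term and restricted-eigenvalue corrections \emph{uniformly} across subintervals whose lengths range from $\Theta(\log(n\vee p))$ up to $\Theta(n)$ — which is why one must use the oracle bounds on $\widehat\beta^\lambda_I$ itself (rather than on the $\widehat\beta^\lambda_{I_j}$'s) for the long pieces and complete the square for the short ones.
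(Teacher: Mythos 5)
Your proposal is correct and follows essentially the same route as the paper's proof: apply \Cref{lemma:RSS lasso} on each change-point-free piece, decompose the residual sum of squares via $\Delta_j=\widehat\beta^\lambda_I-\beta^*_{I_j}$, control the cross terms on the long pieces using $\mathcal B_{I_j}$ together with the oracle bounds of \Cref{lemma:oracle 2} applied to $\widehat\beta^\lambda_I$ itself, complete the square on the short endpoint pieces, and extract the $\Delta\kappa^2$ signal from consecutive interior pieces before invoking the signal-to-noise condition. The only (cosmetic) difference is that you count the $\gtrsim J$ interior pieces with $\|v_j\|_2\ge\kappa/2$ directly via a pigeonhole on consecutive pairs, whereas the paper reaches the same conclusion through the identity $|I_j|\|\Delta_j\|_2^2+|I_{j+1}|\|\Delta_{j+1}\|_2^2\ge\min\{|I_j|,|I_{j+1}|\}\kappa^2/2$ and a contradiction with $|I_j|\ge\Delta$.
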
 

\begin{proof}
First we notice that with the choice of $\lambda$, it holds that 
	\[
		\lambda \geq \max\{\lambda_1, \, \lambda_2\},
	\]
	and therefore we can apply Lemmas~\ref{lemma:lasso}, \ref{lemma:RSS lasso} and \ref{lemma:oracle 2} when needed.

We prove the claim by contradiction, assuming that
	\[
		\sum_{t\in I } (y_t -x_t^{\top} \widehat \beta^\lambda_ I )^2 \leq \sum_{j=1}^{J+1} \sum_{t\in I_j  } (y_t -x_t^{\top} \widehat \beta^\lambda_ {I_j}  )^2 + J \gamma.
	\]
	Let $\Delta_i = \widehat \beta^\lambda_I  - \beta_{I_i}^*$, $i = 1, \ldots, J+1$.  It then follows from \Cref{lemma:RSS lasso} that with probability at least $1 - nc_1 (n \vee p)^{-288^2C_x^2d_0c_2/c_x^2} - 2(n \vee p)^{-c_3}$,
	\begin{align}
		& \sum_{t \in I} (y_t - x_t^{\top} \widehat \beta^\lambda_I )^2 \leq \sum_{j=1}^{J+1} \sum_{t \in I_j} (y_t - x_t^{\top} \widehat \beta^\lambda_{I_j})^2 + J \gamma \nonumber \\
		& \hspace{2cm}\leq  \sum_{j=1}^{J+1} \sum_{t \in I_j} (y_t - x_t^{\top} \beta^*_{I_j})^2 + J \gamma + (J+1) C_{\gamma} \sigma_{\varepsilon}^2 d_0^2\log(n \vee p), \nonumber
	\end{align}
	which implies that 
	\begin{align} 
		\sum_{j=1}^{J+1} \sum_{t\in I_j} (x_t^{\top}\Delta_j)^2 \leq 2 \sum_{j=1}^{J+1}  \sum_{t\in I_j} \varepsilon_t x_t^{\top} \Delta_j + J \gamma + (J+1)C_{\gamma} \sigma_{\varepsilon}^2 d_0^2\log(n \vee p). \label{eq:three change point first step}
	\end{align}

\vskip 3mm
\noindent \textbf{Step 1.}  For any $j \in \{2, \ldots, J\}$, it follows from \Cref{assume:high dim coefficient} that
	\begin{equation} \label{eq-lem12-pf-ijlength}
		|I_j| \geq \Delta \geq 288^2 C^2_x d_0 \log(n \vee p)/c_x^2.
	\end{equation}
	Due to \Cref{lem-x-bound}, on the event $\mathcal{B}_{(0, n]}$, it holds that	
	\begin{align}  
		& \sum_{t\in I_j} \varepsilon_t x_t^{\top} \Delta_j \leq \left\|\frac{1}{\sqrt{|I_j|}}  \sum_{t\in I_j } \varepsilon_t x_t \right\|_{\infty} \| \sqrt { | I_j| } \Delta_j\|_1  \leq \lambda/4 \left(\sqrt{d_0 |I_j|}\|\Delta_j(S)\|_2 + \sqrt{|I_j|} \|\Delta_j(S^c)\|_1\right) \nonumber  \\
		\leq & \frac{4\lambda^2 d_0}{c_x^2} + \frac{c_x^2|I_j|}{256} \|\Delta_j\|_2^2  + \lambda/4 \sqrt {|I_j|} \|\widehat \beta^\lambda_{I}(S^c)\|_1 \nonumber \\
		= & \frac{4\lambda^2 d_0}{c_x^2} + \frac{c_x^2|I_j|}{256} \|\Delta_j\|_2^2  + \lambda/4 \sqrt {|I_j|} \|(\widehat \beta^\lambda_{I} - (|I|)^{-1}\sum_{t\in I}\beta^*_t)(S^c)\|_1 \nonumber \\
		\leq & \frac{4\lambda^2 d_0}{c_x^2} + \frac{c_x^2|I_j|}{256} \|\Delta_j\|_2^2  + \lambda/4 \sqrt {|I_j|} \|\widehat \beta^\lambda_{I} - (|I|)^{-1}\sum_{t\in I}\beta^*_t\|_1 \nonumber \\
		\leq & \frac{4\lambda^2 d_0}{c_x^2} + \frac{c_x^2|I_j|}{256} \|\Delta_j\|_2^2 + C_5/4 \lambda^2 d_0, \label{eq:three change points standard equality 1}
	\end{align}
	where the third inequality follows from $2ab \leq a^2 + b^2$, letting
	\[
		a = 2\lambda\sqrt{d_0}/c_x \quad \mbox{and} ]\quad b = c_x\sqrt{|I_j|}\|\Delta_j\|_2/16,
	\] 
	and the last inequality follows from \Cref{lemma:oracle 2}.  In addition, on the event of $\mathcal{E}_{I_j}$, due to \Cref{lem-RE}, it holds that 
	\begin{align}
		& \sqrt{\sum_{t\in I_j} (x_t^{\top} \Delta_j)^2} \geq \frac{c_x\sqrt{|I_j|}}{4} \|\Delta_j\|_2 - 9C_x \sqrt{\log(p)} \|\Delta_j\|_1 \nonumber \\
		\geq & \frac{c_x\sqrt{|I_j|}}{4} \|\Delta_j\|_2 - 9C_x \sqrt{d_0 \log(p)} \|\Delta_j\|_2 - 9C_x \sqrt{\log(p)} \|\Delta_j(S^c)\|_1 \nonumber \\
		\geq & \frac{c_x\sqrt{|I_j|}}{8} \|\Delta_j\|_2 - 9C_x \sqrt{\log(p)} \|\Delta_j(S^c)\|_1 \geq \frac{c_x \sqrt{|I_j|}}{8} \|\Delta_j\|_2 - \frac{9 C \lambda d_0 \sqrt{\log(p)}}{\sqrt{|I|}}, \label{eq:REC three change}
	\end{align}
	where the third inequality follows from \eqref{eq-lem12-pf-ijlength} and the last follows from \Cref{lemma:oracle 2}.

\vskip 3mm
\noindent \textbf{Step 2.}  We then discuss the intervals $I_1$ and $I_{J+1}$.  These two will be treated in the same way, and therefore for $L \in \{I_1, I_{J+1}\}$ and $l \in \{1, J+1\}$, we have the following arguments.  If $|L| \geq 288^2 C^2_x d_0 \log(n \vee p)/c_x^2$, then due to the same arguments in \textbf{Step 1}, \eqref{eq:three change points standard equality 1} and \eqref{eq:REC three change} hold.  If instead, $|L| < 288^2 C^2_x d_0 \log(n \vee p)/c_x^2$ holds, then
	\begin{align*}  
		\sum_{t\in L} \varepsilon_t x_t^{\top} \Delta_l \leq 2^{-1}\sum_{t \in L}(x_t^{\top}\Delta_l)^2 + 4\sum_{t \in L}\varepsilon_t^2.
	\end{align*}
	Therefore, it follows from \eqref{eq:three change point first step} that
	\[
		\sum_{j = 2}^J |I_j|c_x^2 \|\Delta_j\|_2^2  \leq JC\max\left\{\lambda^2 d_0, \, \lambda \sqrt{\log(n \vee p)} d_0^{3/2}\right\} + J\gamma.
	\]

\vskip 3mm
\noindent \textbf{Step 3.} Since for any $j \in \{2, \ldots, J-1\}$, it holds that
	\begin{align*} 
		|I_j|\|\Delta_j\|_2^2 + |I_{j+1}|\|\Delta_{j+1}\|_2^2 & \geq   \inf_{v \in \mathbb{R}^p}\bigl\{|I_j|\|\beta_{I_j}^* - v\|_2^2 + |I_{j+1}| \|\beta_{I_{j+1}}^* -v\|_2^2\bigr\} \\
		& \geq \frac{|I_j||I_{j+1}|}{|I_j|+|I_{j+1}|} \kappa^2 \geq \min\{|I_j|,\, |I_{j+1}|\}\kappa^2/2.
	\end{align*}
	It then follows from the same arguments in \Cref{lemma:one change point} that
	\[
		\min_{j = 2, \ldots, J-1} |I_j| \leq C_{\epsilon} \left(\frac{\lambda^2 d_0 + \gamma}{\kappa^2}\right),
	\]
	which is a contradiction to \eqref{eq-lem12-pf-ijlength}.
\end{proof}

\subsection{Proof of Proposition~\ref{prop-2}}

\begin{lemma} \label{lem-case-5-prop-2-needed}
Under the assumptions and notation in \Cref{proposition:dp step 1}, suppose there exists no true change point in the interval $I$.  For any interval $J \supset I$, with 
	\[
		\lambda \geq \lambda_2 = C_{\lambda}\sigma_{\varepsilon} \sqrt{d_0 \log(n \vee p)}, 
	\]
	where $C_{\lambda} > \max\{8C_1C_x, \, 8C_{\beta}C_x/\sigma_{\varepsilon}\}$, it holds that with probability at least $1 - c_1 (n \vee p)^{-288^2C_x^2d_0c_2/c_x^2} - 2(n \vee p)^{-c_3}$,
	\[
		\sum_{t \in I} (y_t - x_t^{\top} \beta^*_I)^2 - \sum_{t \in I} (y_t - x_t^{\top} \widehat{\beta}^{\lambda}_J)^2 \leq C_6\lambda^2 d_0.
	\]
\end{lemma}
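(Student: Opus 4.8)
The plan is to turn the claim into a completion-of-squares estimate in which the empirical prediction term $\sum_{t\in I}(x_t^\top v)^2$ is \emph{retained} and used to pay for the noise cross term. Since $I$ contains no true change point we have $\beta^*_t\equiv\beta^*_I$ on $I$, hence $y_t-x_t^\top\beta^*_I=\varepsilon_t$ for $t\in I$; writing $v:=\widehat\beta^\lambda_J-\beta^*_I$ and expanding the squares gives the exact identity
\[
\sum_{t\in I}(y_t-x_t^\top\beta^*_I)^2-\sum_{t\in I}(y_t-x_t^\top\widehat\beta^\lambda_J)^2\;=\;2\sum_{t\in I}\varepsilon_t x_t^\top v-\sum_{t\in I}(x_t^\top v)^2 .
\]
The first thing to note is why the obvious route fails: $2\sum_{t\in I}\varepsilon_t x_t^\top v\le 2\|\sum_{t\in I}\varepsilon_t x_t\|_\infty\|v\|_1$, controlled on $\mathcal{B}_I$ via Lemma~\ref{lem-x-bound}, is only of order $\lambda\sqrt{|I|}\,\|v\|_1$, and $\|v\|_1$ is of order $d_0$ because $v$ carries the non-vanishing bias $\beta^*_J-\beta^*_I$ with $\beta^*_J=|J|^{-1}\sum_{t\in J}\beta^*_t$; this yields $\lambda d_0\sqrt{|I|}$, not $\lambda^2 d_0$, once $|I|$ is large.

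The decisive step is the split $v=\delta+w$, where $\delta:=\beta^*_J-\beta^*_I$ is a \emph{deterministic}, $d_0$-sparse vector with $\|\delta\|_2\le 2C_\beta\sqrt{d_0}$ by Assumption~\ref{assume:high dim coefficient}(\textbf{a})--(\textbf{b}), and $w:=\widehat\beta^\lambda_J-\beta^*_J$ is the Lasso estimation error on $J$. Using $(x_t^\top v)^2\ge\tfrac12(x_t^\top\delta)^2-(x_t^\top w)^2$, the right-hand side above is bounded by
\[
\Big(2\sum_{t\in I}\varepsilon_t x_t^\top\delta-\tfrac12\sum_{t\in I}(x_t^\top\delta)^2\Big)+\Big(2\sum_{t\in I}\varepsilon_t x_t^\top w+\sum_{t\in I}(x_t^\top w)^2\Big).
\]
For the second bracket I would invoke Lemma~\ref{lemma:oracle 2} on the enclosing interval $J$: since, by the running convention, $|I|\gtrsim d_0\log(n\vee p)$ and $|J|\ge|I|$, that lemma applies to $J$ and gives $\|w\|_1\le C_5\lambda d_0/\sqrt{|J|}$; its proof, following Lemma~\ref{lemma:lasso}, also yields the prediction bound $\sum_{t\in J}(x_t^\top w)^2\lesssim\lambda^2 d_0$. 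Combined with $\mathcal{B}_I$, with $\sqrt{|I|}\le\sqrt{|J|}$, and with $\sum_{t\in I}(x_t^\top w)^2\le\sum_{t\in J}(x_t^\top w)^2$, this bracket is $\lesssim\lambda^2 d_0$.

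The first bracket is the crux and the point at which the dangerous factor $\sqrt{|I|}$ is eliminated. On $\mathcal{B}_I$, $d_0$-sparsity of $\delta$ gives $2\sum_{t\in I}\varepsilon_t x_t^\top\delta\le 2\|\sum_{t\in I}\varepsilon_t x_t\|_\infty\|\delta\|_1\le\tfrac{\lambda}{4}\sqrt{|I|}\sqrt{d_0}\,\|\delta\|_2$, using $\max\{|I|,\log(n\vee p)\}=|I|$ under the convention. On $\mathcal{E}_I$, the restricted-eigenvalue computation already carried out in \eqref{eq-lem9-pf-5}, applied to the $d_0$-sparse $\delta$ and using $|I|\gtrsim d_0\log p$, gives $\sum_{t\in I}(x_t^\top\delta)^2\ge\tfrac{c_x^2|I|}{64}\|\delta\|_2^2$. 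Hence the first bracket is at most $\tfrac{\lambda\sqrt{d_0|I|}}{4}\|\delta\|_2-\tfrac{c_x^2|I|}{128}\|\delta\|_2^2$, and maximizing this concave quadratic over $\|\delta\|_2\in[0,\infty)$ produces the bound $2\lambda^2 d_0/c_x^2$, which no longer depends on $|I|$. Adding the two brackets and absorbing constants into $C_6$ finishes the proof, and the stated probability is a union bound over the events of Lemmas~\ref{lem-RE}, \ref{lem-x-bound} and \ref{lemma:oracle 2} for the intervals $I$ and $J$. I expect the only genuine difficulty to be recognizing that the quadratic term must be kept and split so that the restricted-eigenvalue lower bound for the \emph{fixed} sparse direction $\beta^*_J-\beta^*_I$ can cancel the $\sqrt{|I|}$-sized noise fluctuation; the remaining estimates are routine given the oracle inequalities already established.
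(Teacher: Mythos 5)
Your treatment of the main case is correct and, at its core, the same as the paper's: both arguments hinge on \emph{retaining} the quadratic term $\sum_{t\in I}(x_t^{\top}v)^2$, lower-bounding it via the restricted-eigenvalue event $\mathcal{E}_I$, and completing the square against the noise term so that the $\sqrt{|I|}$ factor cancels and only $\lambda^2 d_0$ survives. The paper works directly with $\Delta_I=\beta^*_I-\widehat{\beta}^{\lambda}_J$ and controls the off-support part through $\|\widehat{\beta}^{\lambda}_J(S^c)\|_1$ from \Cref{lemma:oracle 2}, whereas you split $v=\delta+w$ into a deterministic $d_0$-sparse bias $\delta=\beta^*_J-\beta^*_I$ and the estimation error $w$ on $J$; your version is arguably cleaner bookkeeping, at the price of invoking the prediction-error bound $\sum_{t\in J}(x_t^{\top}w)^2\lesssim\lambda^2 d_0$, which is not stated in \Cref{lemma:oracle 2} but does follow from its proof exactly as you say.

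There is, however, a genuine gap: you dispose of the case $|I|\lesssim d_0\log(n\vee p)$ by appeal to ``the running convention,'' but the lemma is stated for \emph{any} interval $I$ with no change point and any $J\supset I$, and in the proof of \Cref{prop-2} it is applied to the pieces $J_l$ of $\widehat{I}$ cut by the true change points, which can be arbitrarily short. For such short $I$ both halves of your argument break down: the restricted-eigenvalue lower bound $\sum_{t\in I}(x_t^{\top}\delta)^2\gtrsim c_x^2|I|\,\|\delta\|_2^2$ requires $|I|\gtrsim d_0\log p$, and \Cref{lemma:oracle 2} applied to $J$ may also be unavailable if $J$ itself is short. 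The paper handles this with a separate one-line case: for any $v$,
\begin{equation*}
2\sum_{t\in I}\varepsilon_t x_t^{\top}v-\sum_{t\in I}(x_t^{\top}v)^2\leq\sum_{t\in I}\varepsilon_t^2\leq\max\bigl\{\sqrt{|I|\log(n\vee p)},\,\log(n\vee p)\bigr\}\lesssim\lambda^2 d_0
\end{equation*}
with high probability when $|I|\lesssim d_0\log(n\vee p)$. You need to add this case (or an equivalent argument) for the lemma to hold as stated and to be usable where it is actually needed.
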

 
\begin{proof}

\noindent \textbf{Case 1.} If 
	\begin{equation}\label{eq-lem16-i-cond} 
		|I| \geq  288^2 C^2_x d_0 \log(n \vee p)/c_x^2,		
	\end{equation}
	then letting $\Delta_I = \beta^*_I  -\widehat \beta^{\lambda}_J$, on the event $\mathcal{E}_I$, we have
	\begin{align}
		& \sqrt{\sum_{t \in I} (x_t^{\top} \Delta_I)^2} \geq \frac{c_x\sqrt{|I|}}{4} \|\Delta_I\|_2 - 9C_x\sqrt{\log(p)} \|\Delta_I\|_1 \nonumber \\
		= & \frac{c_x\sqrt{|I|}}{4} \|\Delta_I\|_2 - 9C_x\sqrt{\log(p)} \|\Delta_I(S)\|_1 - 9C_x\sqrt{\log(p)} \|\Delta_I(S^c)\|_1 \nonumber \\
		\geq & \frac{c_x\sqrt{|I|}}{4} \|\Delta_I\|_2 - 9C_x\sqrt{d_0\log(p)} \|\Delta_I\|_2 - 9C_x\sqrt{\log(p)} \|\Delta_I(S^c)\|_1 \nonumber \\
		\geq & \frac{c_x\sqrt{|I|}}{8} \|\Delta_I\|_2 - 9C_x\sqrt{\log(p)} \|\widehat{\beta}^{\lambda}_J(S^c)\|_1 \geq \frac{c_x\sqrt{|I|}}{8} |\Delta_I\|_2 - 9C_5 C_x d_0 \lambda\log^{1/2}(p), \label{eq-lem16-pf-1}
	\end{align} 
	where the last inequality follows from \Cref{lemma:oracle 2}.  We then have on the event $\mathcal{B}_I$,
	\begin{align*}
	& \sum_{t \in I} (y_t - x_t^{\top} \beta^*_I)^2 - \sum_{t \in I} (y_t - x_t^{\top} \widehat{\beta}^{\lambda}_J)^2  = 2 \sum_{t \in I} \varepsilon_t x_t^{\top}\Delta_I  - \sum_{t \in I} (x_t^{\top}\Delta_I)^2 \\
	\leq & 2\left \|\sum_{t \in I }  x_t \varepsilon_t \right\|_{\infty }  \left( \sqrt {d_0} \| \Delta_I(S) \|_2  + \|\widehat \beta^{\lambda}_J (S^c) \|_1  \right) \\
	& \hspace{3cm} - \frac{c_x^2 |I|}{64}\|\Delta_I\|_2^2 - \frac{81C_5^2 C_x^2 \lambda^2 d_0^2\log(p)}{c_x^4 |I|} + \frac{9C_5C_xd_0\lambda \log^{1/2}(p) \|\Delta_I\|_2}{4}\\  
	\leq & \frac{\lambda}{2}\sqrt{d_0}\|\Delta_I\|_2 + \frac{\lambda^2 d_0 C_5}{2c_x^2 \sqrt{|I|}} - \frac{c_x^2 |I|}{64}\|\Delta_I\|_2^2 + \frac{9C_5C_xd_0\lambda \log^{1/2}(p) \|\Delta_I\|_2}{4} \\
	\leq & \frac{\lambda}{2}\sqrt{d_0}\|\Delta_I\|_2 + \frac{\lambda^2 \sqrt{d_0} C_5}{576c_x\sqrt{\log(n \vee p)}C_x} - 36^2C_x^2d_0\log(n \vee p)\|\Delta_I\|^2_2 + \frac{9C_5C_xd_0\lambda \log^{1/2}(p) \|\Delta_I\|_2}{4} \\
	\leq & \frac{\lambda^2}{16C_x^2} + d_0C_x^2 \|\Delta_I\|^2_2 + \frac{\lambda^2 \sqrt{d_0} C_5}{576c_x\sqrt{\log(n \vee p)}C_x} - 36^2C_x^2d_0\log(n \vee p)\|\Delta_I\|^2_2 \\
	& \hspace{3cm} + d_0\log(p)\|\Delta_I\|_2^2C_x^2 + \frac{81C_5^2d_0\lambda^2}{64} \\
	\leq & C_6\lambda^2 d_0.
	\end{align*}
	where the first inequality follows from \eqref{eq-lem16-pf-1}, the second inequality follows from event $\mathcal{B}_I$ and \Cref{lemma:oracle 2}, the third follows from the \eqref{eq-lem16-i-cond}, the fourth follows from $2ab \leq a^2 + b^2$, first letting
	\[
		a = \lambda/(4C_x) \quad \mbox{and} \quad b = \sqrt{d_0}C_x \|\Delta_I\|_2,
	\]
	then letting
	\[
		a = C_x\sqrt{d_0 \log(p)}\|\Delta_I\|_2 \quad \mbox{and} \quad b = 9C_5\sqrt{d_0}\lambda/8,
	\]
	and the last inequality follows from \Cref{lemma:oracle 2}. 

\vskip 3mm	
\noindent \textbf{Case 2.} If $ |I| \leq 288^2 C^2_x d_0 \log(n \vee p)/c_x^2$, then with probability at least $1 - 2(n \vee p)^{-c}$, 
	\begin{align*}
		& \sum_{t \in I} (y_t - x_t^{\top} \beta^*_I)^2 - \sum_{t \in I} (y_t - x_t^{\top} \widehat \beta^{\lambda}_J)^2 = 2 \sum_{t \in I} \varepsilon_t x_t^{\top}(\widehat{\beta}^{\lambda}_J - \beta^*_I) - \sum_{t \in I} \{x_t^{\top}(\beta^*_I  -\widehat\beta^{\lambda}_J)\}^2 \\
		\leq & \sum_{t \in I} \varepsilon_t^2 \leq \max\{\sqrt{|I|\log(n \vee p)}, \, \log(n \vee p)\} \leq C_6\lambda^2 d_0.
	\end{align*}
\end{proof}

\begin{proof}[Proof of \Cref{prop-2}] 
Denote $S^*_n = \sum_{t=1}^n (y_t - x_t^{\top} \beta^*_t)^2$.  Given any collection $\{t_1, \ldots, t_m\}$, where $t_1 < \cdots < t_m$, and $t_0 = 0$, $t_{m+1} = n$, let 
	\begin{equation}\label{eq-sn-def}
		S_n(t_1, \ldots, t_{m}) = \sum_{k=1}^{m} \sum_{t = t_k +1}^{t_{k+1}} \bigl(y_t -  x_t^{\top}\widehat \beta_{(t_{k}, t_{k+1}]}^\lambda\bigr)^2. 
	\end{equation}
	For any collection of time points, when defining \eqref{eq-sn-def}, the time points are sorted in an increasing order.
	
Let $\{ \widehat \eta_{k}\}_{k=1}^{\widehat K}$ denote the change points induced by $\widehat {\mathcal P}$.  If one can justify that 
	\begin{align}
		S^*_n + K\gamma  \ge  &S_n(\eta_1,\ldots,\eta_K)   + K\gamma - C_3(K+1) d_0 \lambda^2 \label{eq:K consistency step 1} \\ 
		\ge & S_n (\widehat \eta_{1},\ldots, \widehat \eta_{\widehat K } ) +\widehat K \gamma - C_3(K+1) d_0 \lambda^2 \label{eq:K consistency step 2} \\ 
		\ge &   S_n ( \widehat \eta_{1},\ldots, \widehat \eta_{\widehat K } , \eta_1,\ldots,\eta_K ) + \widehat K \gamma  - 2C(K+1)d_0 \lambda^2 - C_3(K+1) d_0 \lambda^2 \label{eq:K consistency step 3}
	\end{align}
	and that 
	\begin{align}\label{eq:K consistency step 4}
		S^*_n  -S_n ( \widehat \eta_{1},\ldots, \widehat \eta_{\widehat K } , \eta_1,\ldots,\eta_K ) \le C (K + \widehat{K} + 2) \lambda^2 d_0,
	\end{align}
	then it must hold that $| \hatp | = K$, as otherwise if $\widehat K \ge K+1 $, then  
	\begin{align*}
		C (K + \widehat{K} + 2) \lambda^2 d_0 & \geq S^*_n  -S_n ( \widehat \eta_{1},\ldots, \widehat \eta_{\widehat K } , \eta_1,\ldots,\eta_K ) \\
		& \geq -3C (K+1) \lambda^2 d_0 + (\widehat K - K)\gamma \geq C_{\gamma} (K+1)\lambda^2 d_0.
	\end{align*} 
	Therefore due to the assumption that $| \hatp|  =\widehat K\le 3K $, it holds that 
	\begin{align} \label{eq:Khat=K} 
		C(5K + 3)\lambda^2d_0  \geq (\widehat K - K)\gamma \geq \gamma,
	\end{align}
	Note that \eqref{eq:Khat=K} contradicts the choice of $\gamma$.

Note that \eqref{eq:K consistency step 1}  is implied by 
	\begin{align}\label{eq:step 1 K consistency}  
		\left| 	S^*_n   -   S_n(\eta_1,\ldots,\eta_K)    \right| \le  C_3(K+1) d_0 \lambda^2,
	\end{align}
	which is  immediate consequence  of \Cref{lemma:RSS lasso}.  Since $\{ \widehat \eta_{k}\}_{k=1}^{\widehat K}$ are the change points induced by $\widehat {\mathcal P}$, \eqref{eq:K consistency step 2} holds because $\hatp$ is a minimiser.

For every $\widehat I =(s,e]\in \hatp$ denote
	\[
		\widehat I  =  (s ,\eta_{p+1}]\cup \ldots \cup  (\eta_{p+q} ,e]  = J_1 \cup \ldots  \cup J_{q+1},
	\]
	where $\{ \eta_{p+l}\}_{l=1}^{q+1}  =\widehat I \ \cap \ \{\eta_k\}_{k=1}^K$. 
	Then \eqref{eq:K consistency step 3} is an immediate consequence of the following inequality
	\begin{align} \label{eq:one change point step 3}
		\sum_{t\in \widehat I }(y_t -x_t^{\top} \widehat \beta^\lambda_{\widehat I} )^2  \ge \sum_{l=1}^{q+1}  \sum_{t \in J_l}(y_t -x_t^{\top}  \widehat \beta^\lambda_{J_l})^2 - C(q+1) \lambda^2 d_0.
	\end{align}
	By \Cref{lemma:RSS lasso}, it holds that
	\begin{align} 
		\sum_{l=1}^{q+1}  \sum_{t \in J_l}(y_t -x_t^{\top}  \widehat \beta^\lambda_{J_l})^2 & \le \sum_{l=1}^{q+1} \sum_{t \in J_l}(y_t -x_t^{\top}    \beta^*_t  )^2  + (q+1) \max \left\{C_3 d_0\lambda^2, \, C_4\lambda \sqrt{\log(n \vee p)} d_0^{3/2} \right\} \nonumber \\
		& = \sum_{t\in \widehat I }(y_t -x_t^{\top}    \beta^*_t  )^2  + (q+1) \max \left\{C_3 d_0\lambda^2, \, C_4\lambda \sqrt{\log(n \vee p)} d_0^{3/2} \right\}.  \label{eq:K consistency step 3 inequality}
	\end{align}
	Then for each $l \in \{1, \ldots, q+1\}$,
	\[
		\sum_{t\in J_l }(y_t -x_t^{\top}  \widehat \beta^\lambda_{\widehat I}   )^2   \ge\sum_{t\in J_l}(y_t -x_t^{\top}\beta^*_t  )^2 - C_6\lambda^2 d_0,
	\]
	where  the  inequality follows from  \Cref{lem-case-5-prop-2-needed}.  Therefore the above inequality implies that 
	\begin{align} \label{eq:K consistency step 3 inequality  3}  \sum_{t\in \widehat I }(y_t -x_t^{\top}  \widehat \beta^\lambda_{\widehat I}   )^2   \ge\sum_{t\in \widehat I }(y_t -x_t^{\top}    \beta^*_t  )^2   -  C_6(q+1)\lambda^2 d_0.
	\end{align}
	Note that \eqref{eq:K consistency step 3 inequality} and  \eqref{eq:K consistency step 3 inequality 3} implies \eqref{eq:one change point step 3}.

Finally, to show \eqref{eq:K consistency step 4}, observe that from \eqref{eq:step 1 K consistency}, it suffices to show that 
	\[
		S_n(\eta_1,\ldots,\eta_K)  -  S_n ( \widehat \eta_{1},\ldots, \widehat \eta_{\widehat K } , \eta_1,\ldots,\eta_K )  \le C (K+\widehat K) \lambda^2,
	\]
	the analysis of which follows from a similar but simpler argument as above.
\end{proof} 
 
\section{Proof of Corollary~\ref{cor-lr-3}}\label{sec-pf-cor-3}

\begin{lemma}
Let $\mathcal{S}$ be any linear subspace in $\mathbb{R}^n$ and $\mathcal{N}_{1/4}$	be a $1/4$-net of $\mathcal{S} \cap B(0, 1)$, where $B(0, 1)$ is the unit ball in $\mathbb{R}^n$.  For any $u \in \mathbb{R}^n$, it holds that
	\[
		\sup_{v \in \mathcal{S} \cap B(0, 1)} \langle v, u \rangle \leq 2 \sup_{v \in \mathcal{N}_{1/4}} \langle v, u \rangle,
	\]
	where $\langle \cdot, \cdot \rangle$ denotes the inner product in $\mathbb{R}^n$.
\end{lemma}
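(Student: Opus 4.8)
The plan is to run the standard covering argument, crucially exploiting that $\mathcal{S}$ is a \emph{linear} subspace (not merely convex), so that differences of points in $\mathcal{S} \cap B(0,1)$ can be rescaled back into $\mathcal{S} \cap B(0,1)$. Write $M = \sup_{v \in \mathcal{S} \cap B(0,1)} \langle v, u \rangle$ and $M_{\mathcal{N}} = \sup_{v \in \mathcal{N}_{1/4}} \langle v, u \rangle$. Since $0 \in \mathcal{S} \cap B(0,1)$, we have $M \geq \langle 0, u \rangle = 0$, a fact that will be used at the very end.

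Next I would fix an arbitrary $v \in \mathcal{S} \cap B(0,1)$ and, by the defining property of the net, pick $w \in \mathcal{N}_{1/4} \subseteq \mathcal{S} \cap B(0,1)$ with $\|v - w\|_2 \leq 1/4$. Decompose $\langle v, u \rangle = \langle w, u \rangle + \langle v - w, u \rangle$. The first term is at most $M_{\mathcal{N}}$ by definition. For the second term, observe that $v - w \in \mathcal{S}$ because $\mathcal{S}$ is a linear subspace, and $\|4(v - w)\|_2 \leq 1$, so $4(v-w) \in \mathcal{S} \cap B(0,1)$; hence $\langle v - w, u \rangle = \tfrac{1}{4} \langle 4(v-w), u \rangle \leq \tfrac{1}{4} M$. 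Combining the two bounds gives $\langle v, u \rangle \leq M_{\mathcal{N}} + \tfrac{1}{4} M$.

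Finally I would take the supremum over $v \in \mathcal{S} \cap B(0,1)$ to obtain $M \leq M_{\mathcal{N}} + \tfrac{1}{4} M$, i.e.\ $\tfrac{3}{4} M \leq M_{\mathcal{N}}$, so that $M \leq \tfrac{4}{3} M_{\mathcal{N}}$. Since $M_{\mathcal{N}} \geq \tfrac{3}{4} M \geq 0$, we have $\tfrac{4}{3} M_{\mathcal{N}} \leq 2 M_{\mathcal{N}}$, which yields the claimed bound $M \leq 2 M_{\mathcal{N}}$.

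There is no substantial obstacle here; the only point that requires a moment's care is the rescaling step, where linearity of $\mathcal{S}$ (as opposed to mere convexity) is what makes $4(v-w)$ an admissible competitor in the supremum defining $M$. If one only knew $\mathcal{S}$ were convex, the difference $v - w$ would in general fail to lie in $\mathcal{S}$ and the argument would break; this is why the statement is phrased for linear subspaces.
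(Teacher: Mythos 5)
Your proof is correct, but it follows a genuinely different (and cleaner) route than the paper's. The paper fixes a net point $v_k$ with $\|v - v_k\|_2 < 1/4$, expands the error $v - v_k$ in an orthonormal pair $\{v, v^{\perp}\}$, bounds $\langle v - v_k, u\rangle \leq \tfrac{1}{4}\langle v, u\rangle + \tfrac{1}{4}\langle v^{\perp}, u\rangle$, then repeats the argument for $v^{\perp}$ with a second net point $v_l$ and solves the resulting pair of coupled inequalities to extract the constant $2$. You instead use the standard absorption argument: since $\mathcal{S}$ is linear and $\|v - w\|_2 \leq 1/4$, the rescaled vector $4(v-w)$ is itself an admissible competitor in $\mathcal{S} \cap B(0,1)$, so $\langle v - w, u\rangle \leq M/4$ and hence $M \leq M_{\mathcal{N}} + M/4$, giving $M \leq \tfrac{4}{3}M_{\mathcal{N}} \leq 2 M_{\mathcal{N}}$ (the last step using $M_{\mathcal{N}} \geq \tfrac{3}{4}M \geq 0$, which you correctly justify via $0 \in \mathcal{S}$). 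Your version buys a sharper constant ($4/3$ rather than $2$), needs only one net point, and sidesteps the sign issues and the two-dimensional decomposition that make the paper's argument delicate; your diagnosis of where linearity of $\mathcal{S}$ enters is exactly right. The only step you leave implicit is that $M < \infty$ (so the inequality $M \leq M_{\mathcal{N}} + M/4$ can be rearranged), which holds because $\mathcal{S} \cap B(0,1)$ is compact and $\langle \cdot, u\rangle$ is continuous; it would be worth one clause to say so.
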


\begin{proof}
Due to the definition of $\mathcal{N}_{1/4}$, it holds that for any $v \in \mathcal{S} \cap B(0, 1)$, there exists a $v_k \in \mathcal{N}_{1/4}$, such that $\|v - v_k\|_2 < 1/4$.  Therefore,
	\begin{align*}
		\langle v, u \rangle = \langle v - v_k + v_k, u \rangle = \langle x_k, u \rangle + \langle v_k, u \rangle \leq \frac{1}{4} \langle v, u \rangle + \frac{1}{4} \langle v^{\perp}, u \rangle + \langle v_k, u \rangle,
	\end{align*}
	where the inequality follows from  $x_k = v - v_k = \langle x_k, v \rangle v + \langle x_k, v^{\perp} \rangle v^{\perp}$.  Then we have
	\[
		\frac{3}{4}\langle v, u \rangle \leq \frac{1}{4} \langle v^{\perp}, u \rangle + \langle v_k, u \rangle.
	\]
	It follows from the same argument that 
	\[
		\frac{3}{4}\langle v^{\perp}, u \rangle \leq \frac{1}{4} \langle v, u \rangle + \langle v_l, u \rangle,
	\]
	where $v_l \in \mathcal{N}_{1/4}$ satisfies $\|v^{\perp} - v_l\|_2 < 1/4$.  Combining the previous two equation displays yields
	\[
		\langle v, u \rangle \leq 2 \sup_{v \in \mathcal{N}_{1/4}} \langle v, u \rangle,
	\]
	and the final claims holds.
\end{proof}

\Cref{lem-wang-lem-3} is an adaptation of Lemma~3 in \cite{wang2019statistically}.

\begin{lemma}\label{lem-wang-lem-3}
	For data generated from \Cref{assume:change point regression model}, for any interval $I = (s, e] \subset \{1, \ldots, n\}$, it holds that for any $\delta > 0$, $i \in \{1, \ldots, p\}$, 
	\[
		\mathbb{P}\left\{\sup_{\substack{v \in \mathbb{R}^{(e-s)}, \, \|v\|_2 = 1\\ \sum_{t = 1}^{e-s-1} \mathbbm{1}\{v_i \neq v_{i+1}\} = m}} \left|\sum_{t = s+1}^e v_t \varepsilon_t x_t(i)\right| > \delta \right\} \leq C(e-s-1)^m 9^{m+1} \exp \left\{-c \min\left\{\frac{\delta^2}{4C_x^2}, \, \frac{\delta}{2C_x \|v\|_{\infty}}\right\}\right\}.
	\]
\end{lemma}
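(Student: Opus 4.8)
The structural observation driving the proof is that a unit vector $v\in\mathbb R^{e-s}$ that is piecewise constant with exactly $m$ jumps lies in the unit ball of one of the $(m+1)$-dimensional subspaces
\[
	\mathcal S_J = \bigl\{v\in\mathbb R^{e-s}:\ v_t = v_{t+1}\ \text{for all}\ t\in\{1,\dots,e-s-1\}\setminus J\bigr\},\qquad J\subset\{1,\dots,e-s-1\},\ |J|=m,
\]
of which there are $\binom{e-s-1}{m}\le(e-s-1)^m$; hence the set over which the supremum is taken is contained in $\bigcup_{|J|=m}\bigl(\mathcal S_J\cap B(0,1)\bigr)$, where $B(0,1)$ is the unit ball of $\mathbb R^{e-s}$. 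The plan is thus to discretize the unit ball of each $\mathcal S_J$ by a $1/4$-net, apply Bernstein's inequality at each net point, and finish with a union bound.

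First I would fix one index set $J$ and apply the discretization lemma proved above with $\mathcal S=\mathcal S_J$ and $u=(\varepsilon_{s+1}x_{s+1}(i),\dots,\varepsilon_{e}x_{e}(i))^\top$: writing $\mathcal N^{(J)}_{1/4}$ for a $1/4$-net of $\mathcal S_J\cap B(0,1)$, and using that $\mathcal S_J\cap B(0,1)$ and the inner product are symmetric under $v\mapsto-v$,
\[
	\sup_{v\in\mathcal S_J\cap B(0,1)}\Bigl|\sum_{t=s+1}^{e}v_t\varepsilon_tx_t(i)\Bigr|\ \le\ 2\sup_{v\in\mathcal N^{(J)}_{1/4}}\Bigl|\sum_{t=s+1}^{e}v_t\varepsilon_tx_t(i)\Bigr|.
\]
Since $\dim\mathcal S_J=m+1$, the standard volumetric estimate gives $|\mathcal N^{(J)}_{1/4}|\le(1+2/(1/4))^{m+1}=9^{m+1}$. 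For a fixed $v$ in the net the summands $v_t\varepsilon_tx_t(i)$ are independent and mean zero, and since $\varepsilon_t$ and $x_t(i)$ are independent sub-Gaussian the products $\varepsilon_tx_t(i)$ are sub-exponential with parameter of order $C_x\sigma_\varepsilon$ (Lemma~2.7.7 in \citep{vershynin2018high}); Bernstein's inequality (Theorem~2.8.1 in \citep{vershynin2018high}) then yields, with $\|v\|_2\le1$,
\[
	\mathbb P\Bigl\{\Bigl|\sum_{t=s+1}^{e}v_t\varepsilon_tx_t(i)\Bigr|>\delta/2\Bigr\}\ \le\ 2\exp\Bigl\{-c\min\Bigl(\frac{\delta^2}{4C_x^2},\ \frac{\delta}{2C_x\|v\|_\infty}\Bigr)\Bigr\},
\]
after absorbing $\sigma_\varepsilon$ and the universal sub-exponential constant into $c$. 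A union bound over the at most $(e-s-1)^m$ choices of $J$ and the at most $9^{m+1}$ net points for each, together with the factor $2$ lost in the net step, then gives the claimed inequality, the remaining numerical constants being absorbed into $C$ and $c$.

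I expect the main obstacle to be bookkeeping rather than any new idea: one must carry the $\ell_\infty$-norm through the discretization so that the linear (sub-exponential) branch of Bernstein's bound appears with $\|v\|_\infty$ in the denominator, which is precisely why the statement keeps $\|v\|_\infty$ rather than the cruder $\|v\|_2=1$, and one must double-check the dimension count $\dim\mathcal S_J=m+1$ together with the net cardinality $9^{m+1}$, since both feed the exponential combinatorial prefactor $(e-s-1)^m 9^{m+1}$. No restricted-eigenvalue estimate or change-point structure is needed here — only the independence and sub-Gaussianity assumed in \Cref{assume:change point regression model}.
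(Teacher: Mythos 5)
Your proposal is correct and follows essentially the same route as the paper's proof: decompose the constraint set over the $\binom{e-s-1}{m}$ choices of jump locations, note each piece lies in an $(m+1)$-dimensional subspace whose unit ball admits a $1/4$-net of size $9^{m+1}$, apply the preceding net lemma (costing the factor $2$, hence $\delta/2$), bound each fixed net point via Bernstein's inequality for the sub-exponential products $\varepsilon_t x_t(i)$, and union-bound. Your write-up is in fact more explicit than the paper's about the subspace structure and the provenance of each combinatorial factor, but the argument is the same.
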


\begin{proof}
For any $v \in \mathbb{R}^{(e-s)}$ satisfying $\sum_{t = 1}^{e-s-1}\mathbbm{1}\{v_i \neq v_{i+1}\} = m$, it is determined by a vector in $\mathbb{R}^{m+1}$ and a choice of $m$ out of $(e-s-1)$ points.  Therefore we have,
	\begin{align*}
		& \mathbb{P}\left\{\sup_{\substack{v \in \mathbb{R}^{(e-s)}, \, \|v\|_2 = 1\\ \sum_{t = 1}^{e-s-1} \mathbbm{1}\{v_i \neq v_{i+1}\} = m}} \left|\sum_{t = s+1}^e v_t \varepsilon_t x_t(i)\right| > \delta \right\} \\
		\leq & {(e-s-1) \choose m} 9^{m+1} \sup_{v \in \mathcal{N}_{1/4}}	\mathbb{P}\left\{\left|\sum_{t = s+1}^e v_t \varepsilon_t x_t(i)\right| > \delta/2 \right\} \\
		\leq & {(e-s-1) \choose m} 9^{m+1} C\exp \left\{-c \min\left\{\frac{\delta^2}{4C_x^2}, \, \frac{\delta}{2C_x \|v\|_{\infty}}\right\}\right\} \\
		\leq & C(e-s-1)^m 9^{m+1} \exp \left\{-c \min\left\{\frac{\delta^2}{4C_x^2}, \, \frac{\delta}{2C_x \|v\|_{\infty}}\right\}\right\}.
	\end{align*}

\end{proof}

\begin{proof}[Proof of \Cref{cor-lr-3}]
For each $k \in \{1, \ldots, K\}$, let
	\[
		\widehat{\beta}_t = \begin{cases}
 				\widehat{\beta}_1, & t \in \{s_k + 1, \ldots, \widehat{\eta}_k\}, \\
 				\widehat{\beta}_2, & t \in \{\widehat{\eta}_k + 1, \ldots, e_k\}.
 			\end{cases}		
	\]
	
Without loss of generality, we assume that $s_k < \eta_k < \widehat{\eta}_k < e_k$.  We proceed the proof discussing two cases.

\vskip 3mm
\noindent \textbf{Case (i).}  If 
	\[
		\widehat{\eta}_k - \eta_k < \max\{288^2 C^2_x d_0 \log(n \vee p)/c_x^2, \, C_{\varepsilon}\log(n \vee p)/\kappa^2\},
	\]	
	then the result holds.

\vskip 3mm
\noindent \textbf{Case (ii).}  If 
	\begin{equation}\label{eq-pf-cor-6-int-con}
		\widehat{\eta}_k - \eta_k \geq \max\{288^2 C^2_x d_0 \log(n \vee p)/c_x^2, \, C_{\varepsilon}\log(n \vee p)/\kappa^2\},
	\end{equation} 
	then we first to prove that with probability at least $1 - C(n \vee p)^{-c}$,
	\[
		\sum_{t = s_k + 1}^{e_k}\|\widehat{\beta}_t - \beta^*_t\|_2^2 \leq C_1d_0 \zeta^2 = \delta.
	\]	

Due to \eqref{eq-g-lasso}, it holds that
	\begin{align}\label{eq-lem19-pf-1}
		& \sum_{t = s_k + 1}^{e_k} \|y_t - x_t^{\top}\widehat{\beta}_t\|_2^2 + \zeta \sum_{i = 1}^p \sqrt{\sum_{t = s_k + 1}^{e_k} \bigl(\widehat{\beta}_t\bigr)_i^2} \leq \sum_{t = s_k + 1}^{e_k} \|y_{t} - x_t^{\top}\beta^*_t \|_2^2 + \zeta \sum_{i = 1}^p \sqrt{\sum_{t = s_k + 1}^{e_k} \bigl(\beta^*_t\bigr)_{i}^2}.
	\end{align}
	Let $\Delta_t = \widehat{\beta}_t - \beta^*_t$.  It holds that
	\[
		\sum_{t = s_k + 1}^{e_k - 1}\mathbbm{1}\left\{\Delta_t \neq \Delta_{t+1}\right\} = 2.
	\]
	Eq.\eqref{eq-lem19-pf-1} implies that
	\begin{align}\label{eq-lem19-pf-2}
		\sum_{t = s_k + 1}^{e_k} \|\Delta_t^{\top} x_t\|_2^2 + \zeta  \sum_{i = 1}^p \sqrt{\sum_{t = s_k + 1}^{e_k} \bigl(\widehat{\beta}_t\bigr)_{i}^2}\leq 2\sum_{t = s_k + 1}^{e_k} (y_t - x_t^{\top}\beta^*_t)\Delta_t^{\top} x_t + \zeta \sum_{i = 1}^p \sqrt{\sum_{t = s_k + 1}^{e_k} \bigl(\beta^*_t\bigr)_{i}^2}.
	\end{align}

Note that
	\begin{align}
		& \sum_{i = 1}^p \sqrt{\sum_{t = s_k + 1}^{e_k} \bigl(\beta^*_t\bigr)_{i}^2} - \sum_{i = 1}^p \sqrt{\sum_{t = s_k + 1}^{e_k} \bigl(\widehat{\beta}_t\bigr)_{i}^2} = \sum_{i \in S} \sqrt{\sum_{t = s_k + 1}^{e_k} \bigl(\beta^*_t\bigr)_{i}^2} - \sum_{i \in S} \sqrt{\sum_{t = s_k + 1}^{e_k} \bigl(\widehat{\beta}_t\bigr)_{i}^2}  - \sum_{i \in S^c} \sqrt{\sum_{t = s_k + 1}^{e_k} \bigl(\widehat{\beta}_t\bigr)_{i}^2} \nonumber \\
		\leq & \sum_{i \in S} \sqrt{\sum_{t = s_k + 1}^{e_k} \bigl(\Delta_t\bigr)_{i}^2}  - \sum_{i \in S^c} \sqrt{\sum_{t = s_k + 1}^{e_k} \bigl(\Delta_t\bigr)_{i}^2}.\label{eq-lem19-pf-3}
	\end{align}

We then examine the cross term, with probability at least $1 - C(n \vee p)^{-c}$, which satisfies the following
	\begin{align}
		& \left|\sum_{t = s_k + 1}^{e_k} (y_t - x_t^{\top}\beta^*_t)\Delta_t^{\top} x_t\right| = \left|\sum_{t = s_k + 1}^{e_k} \varepsilon_{t} \Delta_t^{\top} x_t\right| = \sum_{i = 1}^p \left\{\left|\frac{\sum_{t = s_k + 1}^{e_k} \varepsilon_{t} \Delta_t(i) x_t(i)}{\sqrt{\sum_{t = s_k + 1}^{e_k} \left(\Delta_t(i)\right)^2}}\right| \sqrt{\sum_{t = s_k + 1}^{e_k} \left(\Delta_t(i)\right)^2}\right\} \nonumber \\
		\leq & \sup_{i = 1, \ldots, p} \left|\frac{\sum_{t = s_k + 1}^{e_k} \varepsilon_{t} \Delta_t(i) X_t(i)}{\sqrt{\sum_{t = s_k + 1}^{e_k} \left(\Delta_t(i)\right)^2}}\right| \sum_{i = 1}^p \sqrt{\sum_{t = s_k + 1}^{e_k} \left(\Delta_t(i)\right)^2} \leq (\zeta/4) \sum_{i = 1}^p \sqrt{\sum_{t = s_k + 1}^{e_k} \left(\Delta_t(i)\right)^2}, \label{eq-lem19-pf-4}
	\end{align}
	where the second inequality follows from \Cref{lem-wang-lem-3} and \eqref{eq-pf-cor-6-int-con}.

Combining \eqref{eq-lem19-pf-1}, \eqref{eq-lem19-pf-2}, \eqref{eq-lem19-pf-3} and \eqref{eq-lem19-pf-4} yields  
	\begin{equation}\label{eq-lem19-pf-5}
		\sum_{t = s_k + 1}^{e_k} \|\Delta_t^{\top} x_t\|_2^2	 + \frac{\zeta}{2}\sum_{i \in S^c} \sqrt{\sum_{t = s_k + 1}^{e_k} \bigl(\Delta_t\bigr)_{i}^2} \leq \frac{3\zeta}{2}\sum_{i \in S} \sqrt{\sum_{t = s_k + 1}^{e_k} \bigl(\Delta_t\bigr)_{i}^2}.
	\end{equation}
	
Now we are to explore the restricted eigenvalue inequality.  Let
	\[
		I_1 = (s_k, \eta_k], \quad I_2 = (\eta_k, \widehat{\eta}_k], \quad I_3 = (\widehat{\eta}_k, e_k].
	\]
	We have that with probability at least $1 - C(n \vee p)^{-c}$, on the event $\cap_{i = 1, 3}\mathcal{E}_{I_i}$,
	\begin{align*}
		& \sum_{t = s_k + 1}^{e_k} \|\Delta_t^{\top} x_t\|_2^2 = \sum_{i = 1}^3 \sum_{t \in I_i}\|\Delta_{I_i}^{\top}x_t\|_2^2 \geq \sum_{i = 1, 3}\sum_{t \in I_i}\|\Delta_{I_i}^{\top}x_t\|_2^2 \\
		\geq & \sum_{i = 1, 3} \left(\frac{c_x\sqrt{|I_i|}}{4} \|\Delta_{I_i}\|_2 - 9C_x \sqrt{\log(p)} \|\Delta_{I_i}\|_1\right)^2 \\
		\geq & \sum_{i = 1, 3} \left(\frac{c_x\sqrt{|I_i|}}{8} \|\Delta_{I_i}\|_2 - 9C_x \sqrt{\log(p)} \|\Delta_{I_i}(S^c)\|_1\right)^2,
	\end{align*}
	where the last inequality follows from \eqref{eq-lr-cond-1} and \Cref{assume:high dim coefficient}, that 
	\[
		\min\{|I_1|, \, |I_3|\} > (1/3)\Delta > 288^2 C^2_x d_0 \log(n \vee p)/c_x^2.
	\]
	Since $|I_2| > 288^2 C^2_x d_0 \log(n \vee p)/c_x^2$, we have
	\[
		\sqrt{\sum_{t \in I_2}\|\Delta_{I_2}^{\top}x_t\|_2^2} \geq \frac{c_x\sqrt{|I_2|}}{8} \|\Delta_{I_2}\|_2 - 9C_x \sqrt{\log(p)} \|\Delta_{I_2}(S^c)\|_1.
	\]

Note that
	\begin{align*}
		& \sqrt{\sum_{i = 1}^3 \left(\sum_{j \in S^c} |\Delta_{I_i}(j)|\right)^2 } \leq \sqrt{\sum_{i = 1}^3\left(\sqrt{\frac{|I_i|}{I_0}} \sum_{j \in S^c}|\Delta_{I_i}(j)| \right)^2}	 \\
		\leq & \sum_{j \in S^c} I_0^{-1/2} \sqrt{\sum_{t = s_k + 1}^{e_k} (\Delta_t(i))^2} \leq 3\sum_{j \in S} I_0^{-1/2} \sqrt{\sum_{t = s_k + 1}^{e_k} (\Delta_t(i))^2} \\
		\leq & I_0^{-1/2}3 \sqrt{d_0 \sum_{j \in S} \sum_{t = s_k + 1}^{e_k} (\Delta_t(i))^2} \leq \frac{c_x}{96 C_x \sqrt{\log(n \vee p)}} \sqrt{\sum_{t = s_k + 1}^{e_k} \|\Delta_t\|_2^2}.
	\end{align*}

Therefore,
	\begin{align*}
		& \frac{c_x}{8}\sqrt{\sum_{t = s_k + 1}^{e_k} \|\Delta_t\|_2^2} - \frac{3c_x}{32 C_x \sqrt{\log(n \vee p)}} \sqrt{\sum_{t = s_k + 1}^{e_k} \|\Delta_t\|_2^2} \\
		\leq & \sum_{i = 1}^3 \frac{c_x\sqrt{|I_i|}}{8} \|\Delta_{I_i}\|_2 - \frac{3c_x}{32 C_x \sqrt{\log(n \vee p)}} \sqrt{\sum_{t = s_k + 1}^{e_k} \|\Delta_t\|_2^2} \leq \sqrt{3} \sqrt{ \sum_{t = s_k + 1}^{e_k} \|\Delta_t^{\top} x_t\|_2^2 } \\
		\leq & \frac{3\sqrt{\zeta}}{\sqrt{2}}d_0^{1/4} \left(\sum_{t = s_k + 1}^{e_k} \|\Delta_t\|_2^2\right)^{1/4} \leq \frac{18\zeta d_0^{1/2}}{c_x} + \frac{c_x}{16}\sqrt{\sum_{t = s_k + 1}^{e_k} \|\Delta_t\|_2^2}
	\end{align*}
	where the last inequality follows from \eqref{eq-lem19-pf-5} and which implies
	\[
		\frac{c_x}{32}\sqrt{\sum_{t = s_k + 1}^{e_k} \|\Delta_t\|_2^2} \leq \frac{18\zeta d_0^{1/2}}{c_x}
	\]	
	Therefore,
	\[
		\sum_{t = s_k + 1}^{e_k}\|\widehat{\beta}_t - \beta^*_t\|_2^2 \leq 576^2\zeta^2d_0/c_x^4.
	\]

\vskip 3mm

  Let $\beta^*_1 = \beta^*_{\eta_k}$ and $\beta^*_2 = \beta^*_{\eta_k + 1}$.  We have that
	\[
		\sum_{t = s_k + 1}^{e_k}\|\widehat{\beta}_t - \beta^*_t\|_2^2 = |I_1| \|\beta^*_1 - \widehat{\beta}_1\|_2^2 + |I_2| \|\beta^*_2 - \widehat{\beta}_1\|_2^2 + |I_3| \|\beta^*_2 - \widehat{\beta}_2\|_2^2.
	\]
	
Since
	\begin{align*}
		& \eta_k - s_k = \eta_k - \frac{2}{3}\widetilde{\eta}_k - \frac{1}{3}\widetilde{\eta}_k \\
		= & \frac{2}{3}(\eta_k - \eta_{k-1}) + \frac{2}{3}(\widetilde{\eta}_k - \eta_k) -  \frac{2}{3}(\widetilde{\eta}_{k-1} - \eta_{k-1}) + (\eta_k - \widetilde{\eta}_k) \\
		\geq & \frac{2}{3}\Delta - \frac{1}{3}\Delta = \frac{1}{3}\Delta,
	\end{align*}	
	where the inequality follows from \Cref{assume:high dim coefficient} and \eqref{eq-lr-cond-1}, we have that
	\[
		\Delta\|\beta^*_1 - \widehat{\beta}_1\|_2^2/3 \leq |I_1| \|\beta^*_1 - \widehat{\beta}_1\|_2^2 \leq \delta \leq \frac{C_1 C_{\zeta}^2\Delta \kappa^2} {C_{\mathrm{SNR}}d_0 K \sigma^2_{\epsilon} \log^{\xi}(n \vee p) } \leq c_1 \Delta \kappa^2,
	\]
	where $1/4 > c_1 > 0$ is an arbitrarily small positive constant.  Therefore we have
	\[
		\|\beta^*_1 - \widehat{\beta}_1\|_2^2 \leq c_1 \kappa^2.
	\]
	
In addition we have
	\[
		\|\beta^*_2 - \widehat{\beta}_1\|_2 \geq \|\beta^*_2 - \beta^*_1\|_2 - \|\beta^*_1 - \widehat{\beta}_1\|_2 \geq \kappa/2.
	\]	
	Therefore, it holds that 	
	\[
		\kappa^2 |I_2|/4 \leq |I_2| \|\beta^*_2 - \widehat{\beta}_1\|_2^2 \leq \delta,
	\]
	which implies that 
	\[
		|\widehat{\eta}_k - \eta_k| \leq \frac{4C_1d_0 \zeta^2}{\kappa^2}.
	\]
\end{proof}

\section{Lower bounds}

\begin{proof}[Proof of \Cref{lem-reg-lb-1}]
For any vector $\beta$, if $x \sim \mathcal{N}(0, I_p)$, $\epsilon \sim \mathcal{N}(0, \sigma^2)$ and $y = x^{\top} \beta + \epsilon$, then we denote
	\begin{align*}
		\left(\begin{array}{c} 
			y \\
			x
		\end{array}\right) \sim \mathcal{N}\left(0, \Sigma_\beta\right), \quad \mbox{where}  \quad \Sigma_{\beta} = \left(\begin{array}{cc} 
			\beta^{\top}\beta + \sigma^2 & \beta^{\top} \\
			\beta & I
 		\end{array}\right).
	\end{align*}

Now for a fixed $S \subset \{1, \ldots, p\}$ satisfying $|S| = d$, define
	\[
		\mathcal{S} = \left\{u \in \mathbb{R}^p: \, u_i = 0, i \notin S; u_i = \kappa/\sqrt{d} \mbox{ or } -\kappa/\sqrt{d}, i \in S\right\}.
	\]
	Define 
	\[
		P_0 = \mathcal{N}\left(0, \Sigma_0\right) \quad \mbox{and} \quad P_u = \mathcal{N}\left(0, \Sigma_u\right), \quad \forall u \in \mathcal{S},
	\]
	where 
	\[
		\Sigma_0 = \left(\begin{array}{cc}
			\sigma^2 & 0 \\
			0 & I_p
		 \end{array}\right) \quad \mbox{and} \quad \Sigma_u = \left(\begin{array}{cc}
			\sigma^2 + \kappa^2 & u^{\top} \\
			u & I_p	
		 \end{array}\right).
	\]

\medskip
\noindent \textbf{Step 1.}  Let $P_{0, u}^T$ denote the joint distribution of independent random vectors $\{Z_i = (y_i, x_i^{\top})^{\top}\}_{i = 1}^T \subset \mathbb{R}^{p+1}$ such that
	\[
		Z_1, \ldots, Z_{\Delta} \stackrel{\mbox{iid}}{\sim} \mathcal{N}(0, \Sigma_u) \quad \mbox{and} \quad Z_{\Delta + 1}, \ldots, Z_T \stackrel{\mbox{iid}}{\sim} \mathcal{N}(0, \Sigma_0).
	\] 
	Let $P_{1, u}^T$ denote the joint distribution of independent random vectors $\{Z_i = (y_i, x_i^{\top})^{\top}\}_{i = 1}^T \subset \mathbb{R}^{p+1}$ such that
	\[
		Z_1, \ldots, Z_{T-\Delta} \stackrel{\mbox{iid}}{\sim} \mathcal{N}(0, \Sigma_0) \quad \mbox{and} \quad Z_{T-\Delta + 1}, \ldots, Z_T \stackrel{\mbox{iid}}{\sim} \mathcal{N}(0, \Sigma_u).
	\] 
	For $i \in \{0, 1\}$, let
	\[
		P_i = 2^{-d} \sum_{u \in \mathcal{S}} P_{i, u}^T.
	\]	
	Let $\eta(P)$ denote the change point location of a distribution $P$.  Then since $\eta(P_{0, u}) = \Delta$ and $\eta(P_{1, u}) = T-\Delta$ for any $u \in \mathcal{S}$, we have that
	\[
		|\eta(P_0) - \eta(P_1)| = T - 2\Delta \geq T/2,
	\]
	due to the fact that $\Delta \leq T/4$.  It follows from Le Cam's lemma \citep{yu1997assouad} that
	\[
		\inf_{\widehat{\eta}} \sup_{P \in \mathcal{P}} \mathbb{E}_P(|\widehat{\eta} - \eta|) \geq T/2 (1 - d_{\mathrm{TV}}(P_0, P_1)),
	\]
	where $d_{\mbox{TV}}(P_0, P_1) = \|P_0 - P_1\|_1/2$, with $\|P_0 - P_1\|_1$ denoting the $L_1$ distance between the Lebesgue densities of the distributions $P_0$ and $P_1$.  Then we have that
	\[
		\inf_{\widehat{\eta}} \sup_{P \in \mathcal{P}} \mathbb{E}_P(|\widehat{\eta} - \eta|) \geq T/2 (1 - 2^{-1}\|P_0 - P_1\|_1).
	\]

\medskip
\noindent \textbf{Step 2.}  Let $P_0^{\Delta}$ be the joint distribution of 
	\[
		Z_1, \ldots, Z_{\Delta} \stackrel{\mbox{iid}}{\sim} \mathcal{N}(0, \Sigma_0)
	\]	
	and $P_1^{\Delta} = 2^{-d} \sum_{u \in \mathcal{S}} P_{1, u}^{\Delta}$, where $P^{\Delta}_1$ is the joint distribution of 
	\[
		Z_1, \ldots, Z_{\Delta} \stackrel{\mbox{iid}}{\sim} \mathcal{N}(0, \Sigma_u).  
	\]
	It follows from Step 2 in the proof of Lemma 3.1 in \cite{wang2017optimal} that 
	\[
		\|P_0 - P_1\|_1 \leq 2 \|P_0^{\Delta} - P_1^{\Delta}\|_1,
	\]
	which leads to 
	\[
		\inf_{\widehat{\eta}} \sup_{P \in \mathcal{P}} \mathbb{E}_P(|\widehat{\eta} - \eta|) \geq T/2 (1 - \|P_0^{\Delta} - P_1^{\Delta}\|_1) \geq T/2 (1 - \sqrt{\chi^2(P_1^{\Delta}, P_0^{\Delta})}),
	\]
	where the last inequality follows from \cite{tsybakov2009introduction}.
	
Note that
	\begin{align*}
		\chi^2(P_1^{\Delta}, P_0^{\Delta}) & = \mathbb{E}_{P_p^{\Delta}}\left\{\left(\frac{dP_1^{\Delta}}{dP_0^{\Delta}} - 1\right)^2\right\} = \frac{1}{4^d} \sum_{u, v \in \mathcal{S}} \mathbb{E}_{P_0^{\Delta}} \left(\frac{dP^{\Delta}_u dP^{\Delta}_v}{dP^{\Delta}_0 dP^{\Delta}_0}\right) - 1 \\
		& = \frac{1}{4^d} \sum_{u, v \in \mathcal{S}} \left\{\mathbb{E}_{P_0} \left(\frac{dP_u dP_v}{dP_0 dP_0}\right)\right\}^{\Delta} - 1.
	\end{align*}
\medskip
\noindent \textbf{Step 3.}  For any $u, v \in \mathcal{S}$, we have that
	\begin{align*}
		& \mathbb{E}_{P_0} \left(\frac{dP_u dP_v}{dP_0 dP_0}\right) \\
		= & \frac{|\Sigma_u|^{-1/2} |\Sigma_v|^{-1/2}}{|\Sigma_0|^{-1/2}} (2\pi)^{-\frac{p+1}{2}} \int_{\mathbb{R}^{p+1}} \exp \left\{-\frac{z^{\top}(\Sigma_u^{-1} + \Sigma_v^{-1} - \Sigma_0^{-1})z}{2}\right\}\, dz \\
		= & \frac{|\Sigma_u|^{-1/2} |\Sigma_v|^{-1/2}}{|\Sigma_0|^{-1/2}} |\Sigma_u^{-1} + \Sigma_v^{-1} - \Sigma_0^{-1}|^{-1/2}.
	\end{align*}
	
In addition, we have that
	\[
		|\Sigma_u| = |\Sigma_v| = |\Sigma_0| = \sigma^2, \quad \Sigma^{-1}_0 = \left(\begin{array}{cc} 
			\sigma^{-2} & 0 \\
			0 & I 
		\end{array}\right),
	\]	
	\[
		\Sigma^{-1}_u = \left(\begin{array}{cc} 
			\sigma^{-2} & -\sigma^{-2}u^{\top} \\
			-\sigma^{-2} u & I + \sigma^{-2}u u^{\top}
		\end{array}\right) \quad \mbox{and}  \quad \Sigma^{-1}_v = \left(\begin{array}{cc} 
			\sigma^{-2} & -\sigma^{-2}v^{\top} \\
			-\sigma^{-2} v & I + \sigma^{-2} v v^{\top}
		\end{array}\right).
	\]
	Then
	\[
		\mathbb{E}_{P_0} \left(\frac{dP_u dP_v}{dP_0 dP_0}\right) = \sigma^p \left|\left(\begin{array}{cc}
			1 & -(u+v)^{\top} \\
			-(u+v) & \sigma^2 I_p + uu^{\top} + vv^{\top}				
		\end{array}\right)\right|^{-1/2} = \sigma^p |M|^{-1/2}.
	\]
	
Note that
	\[
		|M| = \left|\left\{1 - (u+v)^{\top} \left(\sigma^2 I_p + uu^{\top} + vv^{\top}\right)^{-1}(u+v)\right\}\right| \left|\sigma^2 I_p + uu^{\top} + vv^{\top}\right|.
	\]	
	As for the matrix $M_1 = \sigma^2 I_p + uu^{\top} + vv^{\top}$, since $u, v \neq 0$, there are two cases.  Let 
	\[
		\rho_{u, v} = \frac{u^{\top} v}{\kappa^2}.
	\]
	\begin{itemize}
	\item The dimension of the linear space spanned by $u$ and $v$ is one, i.e.~$|\rho| = 1$.  In this case, for any $w \perp \mathrm{span}\{u\}$, $\|w\|_2 = 1$, it holds that
		\[
			M_1 w = \sigma^2 w.
		\] 	
		There are $p-1$ such linearly independent $w$.  For any $w \in \mathrm{span}\{u\}$, $\|w\|_2 = 1$, it holds that
		\[
			M_1 w = (\sigma^2 + 2\kappa^2) w.
		\]
		Then $|M_1| = \sigma^{2p-2}(\sigma^2 + 2\kappa^2)$.
		
		If $\rho_{u, v} = -1$, then $|M| = |M_1| = \sigma^{2p-2}(\sigma^2 + 2\kappa^2)$.
		
		If $\rho_{u, v} = 1$, then 
		\begin{align*}
			|M| & = \left|1 - 4u^{\top} \frac{u}{\kappa} \frac{1}{\sigma^2 + 2\kappa^2} \frac{u^{\top}}{\kappa}u\right|\sigma^{2p-2}(\sigma^2 + 2\kappa^2) \\
			& = \frac{|\sigma^2 -  2\kappa^2|}{\sigma^2 + 2\kappa^2} \sigma^{2p-2}(\sigma^2 + 2\kappa^2) = \sigma^{2p-2}|\sigma^2 - 2\kappa^2|.
		\end{align*}
		
		Therefore in this case
		\[
			|M| = \sigma^{2p-2}|\sigma^2 - 2\rho_{u, v}\kappa^2|.
		\]

	\item The dimension of the linear space spanned by $u$ and $v$ is two, i.e.~$|\rho| < 1$.  In this case, for any $w \perp \mathrm{span}\{u\}$, $\|w\|_2 = 1$, it holds that
		\[
			M_1 w = \sigma^2 w.
		\] 	
		There are $p-2$ such linearly independent $w$.  
		
		We also have
		\[
			M_1 \frac{u+v}{\|u+v\|} = (\sigma^2 + \kappa^2 + \rho_{u, v}\kappa^2)\frac{u+v}{\|u+v\|}
		\]
		and
		\[
			M_1 \frac{u-v}{\|u-v\|} = (\sigma^2 + \kappa^2 - \rho_{u, v}\kappa^2)\frac{u-v}{\|u-v\|}.
		\]
		Then
		\[
			|M_1| = \sigma^{2p-4}(\sigma^2 + \kappa^2 + \rho_{u, v}\kappa^2)(\sigma^2 + \kappa^2 - \rho_{u, v}\kappa^2)
		\]
		
		In addition, 
		\begin{align*}
			& (u+v)^{\top} \left(\sigma^2 I_p + uu^{\top} + vv^{\top}\right)^{-1}(u+v) \\
			= & (u+v)^{\top}	\frac{u+v}{\|u+v\|} \frac{1}{\sigma^2 + \kappa^2 + \rho_{u, v}\kappa^2}\left(\frac{u+v}{\|u+v\|}\right)^{\top} (u+v) \\
			&\hspace{1cm} + (u+v)^{\top}	\frac{u-v}{\|u-v\|} \frac{1}{\sigma^2 + \kappa^2 - \rho_{u, v}\kappa^2}\left(\frac{u-v}{\|u-v\|}\right)^{\top} (u+v)  \\
			= & \frac{2\kappa^2 + 2\kappa^2 \rho_{u, v}}{\sigma^2 + \kappa^2 + \rho_{u, v}\kappa^2}.
		\end{align*}
		Then,
		\[
			|M| = \sigma^{2p-4} |\sigma^2 - \kappa^2 - \rho_{u, v}\kappa^2| (\sigma^2 + \kappa^2 - \rho_{u, v}\kappa^2),
		\]
		which is consistent with the case when $|\rho_{u, v}| = 1$.
	\end{itemize}

We then have
	\[
		\mathbb{E}_{P_0} \left(\frac{dP_u dP_v}{dP_0 dP_0}\right) = \left|1 - \frac{\kappa^2}{\sigma^2} - \frac{u^{\top}v}{\sigma^2}\right|^{-1/2} \left|1 + \frac{\kappa^2}{\sigma^2} - \frac{u^{\top}v}{\sigma^2}\right|^{-1/2}.
	\]
	
Due to the fact that $cd/\Delta < 1/4$, we have that
	\begin{align*}
		1 - \frac{\kappa^2}{\sigma^2} - \frac{u^{\top}v}{\sigma^2} \geq 1 - \frac{2\kappa^2}{\sigma^2} \geq 1 - \frac{2cd}{\Delta} > 0,
	\end{align*}
	then
	\begin{align*}
		& \mathbb{E}_{P_0} \left(\frac{dP_u dP_v}{dP_0 dP_0}\right) = \left(1 - \frac{\kappa^2}{\sigma^2} - \frac{u^{\top}v}{\sigma^2}\right)^{-1/2} \left(1 + \frac{\kappa^2}{\sigma^2} - \frac{u^{\top}v}{\sigma^2}\right)^{-1/2} \\
		= & \left(1 - \frac{2u^{\top}v}{\sigma^2} - \frac{\kappa^4}{\sigma^4} + \frac{(u^{\top}v)^2}{\sigma^4}\right)^{-1/2} \leq \left(1 - \frac{2u^{\top}v}{\sigma^2} - \frac{\kappa^4}{\sigma^4}\right)^{-1/2}
	\end{align*}
	
Then we have
	\begin{align*}
		& \chi^2(P_1^{\Delta}, P_0^{\Delta}) \leq \frac{1}{4^d} \sum_{u, v \in \mathcal{S}} \left(1 - \frac{2u^{\top}v}{\sigma^2} - \frac{\kappa^4}{\sigma^4}\right)^{-\Delta/2} - 1 \\
		= & \mathbb{E}_{U, V} \left\{1 - \frac{\kappa^2}{\sigma^2} (U^{\top}V/d)^2 - \frac{\kappa^4}{\sigma^4}\right\}^{-\Delta/2} - 1 = \mathbb{E}_{V} \left\{1 - \frac{\kappa^2}{\sigma^2} (1^{\top}V/d)^2 - \frac{\kappa^4}{\sigma^4}\right\}^{-\Delta/2} - 1 \\
		\leq & \mathbb{E} \left\{\exp\left(\frac{\kappa^2 \Delta}{\sigma^2} \varepsilon_d + \frac{\kappa^4 \Delta}{\sigma^4}\right)\right\} - 1,
	\end{align*}
	where $U$ and $V$ are two independent $d$-dimensional Radamacher random vectors, $\varepsilon_d = (1^{\top}V/d)^2$, and the last inequality follows from $(1-t)^{-\Delta/2} \leq \exp(\Delta t)$, for any $t \leq 1/2$.
	
Due to the Hoeffding inequality, it holds that for any $\lambda > 0$,
	\[
		\mathbb{P}(\varepsilon_d \geq \lambda) \leq 2e^{-2d\lambda}.
	\]	
	Then
	\begin{align*}
		& \mathbb{E} \left\{\exp\left(\frac{\kappa^2 \Delta}{\sigma^2} \varepsilon_d + \frac{\kappa^4 \Delta}{\sigma^4}\right)\right\}	= \int_0^{\infty} \mathbb{P}\left\{\exp\left(\frac{\kappa^2 \Delta}{\sigma^2} \varepsilon_d + \frac{\kappa^4 \Delta}{\sigma^4}\right) \geq u\right\} \, du \\
		\leq & 1 + \int_1^{\infty} \mathbb{P}\left\{\frac{\kappa^2 \Delta}{\sigma^2} \varepsilon_d + \frac{\kappa^4 \Delta}{\sigma^4} \geq \log(u)\right\} \, du = 1 + \int_1^{\infty} \mathbb{P}\left\{\varepsilon_d \geq \frac{\log(u) - \frac{\kappa^4 \Delta}{\sigma^4}}{\frac{\kappa^2 \Delta}{\sigma^2}}\right\} \, du \\
		\leq & 1 + 2\int_1^{\infty} \exp\left\{-\frac{2d\sigma^2}{\kappa^2 \Delta} \log(u) + \frac{2d\kappa^2}{\sigma^2}\right\} \, du \\
		= & 1 + \frac{2\exp(2d\kappa^2\sigma^{-2})}{\frac{2d\sigma^2}{\kappa^2 \Delta} - 1} \leq 1 + \frac{2e}{2/c-1} \leq 5/4,
	\end{align*}
	where the last two inequalities hold due to
	\[
		2cd^2 \leq \Delta \quad \mbox{and} \quad	 c < \frac{2}{8e+1}.
	\]

We then complete the proof.
	
\end{proof}

\begin{proof}[Proof of \Cref{lem-reg-lb-2}]
For any vector $\beta$, if $x \sim \mathcal{N}(0, I_p)$, $\epsilon \sim \mathcal{N}(0, \sigma^2)$ and $y = x^{\top} \beta + \epsilon$, then we denote
	\begin{align*}
		\left(\begin{array}{c} 
			y \\
			x
		\end{array}\right) \sim \mathcal{N}\left(0, \Sigma_\beta\right), \quad \mbox{where}  \quad \Sigma_{\beta} = \left(\begin{array}{cc} 
			\beta^{\top}\beta + \sigma^2 & \beta^{\top} \\
			\beta & I
 		\end{array}\right).
	\end{align*}

Now for a fixed $S \subset \{1, \ldots, p\}$ satisfying $|S| = d$, define
	\[
		\mathcal{S} = \left\{u \in \mathbb{R}^p: \, u_i = 0, i \notin S; u_i = \kappa/\sqrt{d} \mbox{ or } -\kappa/\sqrt{d}, i \in S\right\}.
	\]
	Define 
	\[
		P_0 = \mathcal{N}\left(0, \Sigma_0\right) \quad \mbox{and} \quad P_u = \mathcal{N}\left(0, \Sigma_u\right), \quad \forall u \in \mathcal{S},
	\]
	where 
	\[
		\Sigma_0 = \left(\begin{array}{cc}
			\sigma^2 & 0 \\
			0 & I_p
		 \end{array}\right) \quad \mbox{and} \quad \Sigma_u = \left(\begin{array}{cc}
			\sigma^2 + \kappa^2 & u^{\top} \\
			u & I_p	
		 \end{array}\right).
	\]

\medskip
\noindent \textbf{Step 1.}  Let $P_{0, u}^T$ denote the joint distribution of independent random vectors $\{Z_i = (y_i, x_i^{\top})^{\top}\}_{i = 1}^T \subset \mathbb{R}^{p+1}$ such that
	\[
		Z_1, \ldots, Z_{\Delta} \stackrel{\mbox{iid}}{\sim} \mathcal{N}(0, \Sigma_u) \quad \mbox{and} \quad Z_{\Delta + 1}, \ldots, Z_T \stackrel{\mbox{iid}}{\sim} \mathcal{N}(0, \Sigma_0).
	\] 
	Let $P_{1, u}^T$ denote the joint distribution of independent random vectors $\{Z_i = (y_i, x_i^{\top})^{\top}\}_{i = 1}^T \subset \mathbb{R}^{p+1}$ such that
	\[
		Z_1, \ldots, Z_{\Delta + \delta} \stackrel{\mbox{iid}}{\sim} \mathcal{N}(0, \Sigma_u) \quad \mbox{and} \quad Z_{\Delta + \delta + 1}, \ldots, Z_T \stackrel{\mbox{iid}}{\sim} \mathcal{N}(0, \Sigma_0).
	\] 
	For $i \in \{0, 1\}$, let
	\[
		P_i = 2^{-d} \sum_{u \in \mathcal{S}} P_{i, u}^T.
	\]	
	Then we have that
	\[
		\inf_{\widehat{\eta}} \sup_{P \in \mathcal{P}} \mathbb{E}_P(|\widehat{\eta} - \eta|) \geq \delta (1 - 2^{-1}\|P_0 - P_1\|_1).
	\]

\medskip
\noindent \textbf{Step 2.}  Let $P_0^{\delta}$ be the joint distribution of 
	\[
		Z_1, \ldots, Z_{\delta} \stackrel{\mbox{iid}}{\sim} \mathcal{N}(0, \Sigma_0)
	\]	
	and $P_1^{\delta} = 2^{-d} \sum_{u \in \mathcal{S}} P_{1, u}^{\delta}$, where $P^{\delta}_{1, u}$ is the joint distribution of 
	\[
		Z_1, \ldots, Z_{\delta} \stackrel{\mbox{iid}}{\sim} \mathcal{N}(0, \Sigma_u).  
	\]

It follows from the identical arguments in the proof of \Cref{lem-reg-lb-1} that
	\[
		\inf_{\widehat{\eta}} \sup_{P \in \mathcal{P}} \mathbb{E}_P(|\widehat{\eta} - \eta|) \geq \delta (1 - \|P_0^{\delta} - P_1^{\delta}\|_1) \geq \delta (1 - \sqrt{\chi^2(P_1^{\delta}, P_0^{\delta})})
	\]
	and
	\begin{align*}
		\chi^2(P_1^{\delta}, P_0^{\delta}) = \frac{1}{4^d} \sum_{u, v \in \mathcal{S}} \left\{\mathbb{E}_{P_0} \left(\frac{dP_u dP_v}{dP_0 dP_0}\right)\right\}^{\delta} - 1 \leq \frac{2\exp(2d\kappa^2\sigma^{-2})}{\frac{2d\sigma^2}{\kappa^2 \delta} - 1}.
	\end{align*}

\medskip 
\noindent \textbf{Step 3.}  Let 
	\[
		\delta = \frac{Cd\sigma^2}{\kappa^2}. 
	\]
	We have that
	\[
		\chi^2(P_1^{\delta}, P_0^{\delta}) = 1/4,
	\]
	provided that $d^2\zeta_T\Delta^{-1} < 1$ and with $C = 2/(8e+1)$.  Then we conclude the proof.
\end{proof}

\bibliographystyle{authordate1}
\bibliography{citations}

\begin{thebibliography}{}

\bibitem[\protect\citename{Bickel {\em et~al.\ }\relax,
  }2009]{bickel2009simultaneous}
Bickel, Peter~J, Ritov, Ya'acov, \& Tsybakov, Alexandre~B. 2009.
\newblock Simultaneous analysis of Lasso and Dantzig selector.
\newblock {\em The Annals of Statistics}, {\bf 37}(4), 1705--1732.

\bibitem[\protect\citename{B{\"u}hlmann \& van~de Geer,
  }2011]{buhlmann2011statistics}
B{\"u}hlmann, Peter, \& van~de Geer, Sara. 2011.
\newblock {\em Statistics for high-dimensional data: methods, theory and
  applications}.
\newblock Springer Science \& Business Media.

\bibitem[\protect\citename{Chan \& Walther, }2013]{chan2013}
Chan, Hock~Peng, \& Walther, Guenther. 2013.
\newblock Detection with the scan and the average likelihood ratio.
\newblock {\em Statistica Sinica}, {\bf 1}(23), 409--428.

\bibitem[\protect\citename{Frick {\em et~al.\ }\relax, }2014]{FrickEtal2014}
Frick, Klaus, Munk, Axel, \& Sieling, Hannes. 2014.
\newblock Multiscale change point inference.
\newblock {\em Journal of the Royal Statistical Society: Series B (Statistical
  Methodology)}, {\bf 76}, 495--580.

\bibitem[\protect\citename{Friedrich {\em et~al.\ }\relax,
  }2008]{FriedrichEtal2008}
Friedrich, Felix, Kempe, Angela, Liebscher, Volkmar, \& Winkler, Gerhard. 2008.
\newblock Complexity penalized M-estimation: Fast computation.
\newblock {\em Journal of Computational and Graphical Statistics}, {\bf 17},
  201--204.

\bibitem[\protect\citename{Fryzlewicz, }2014]{fryzlewicz2014wild}
Fryzlewicz, Piotr. 2014.
\newblock Wild binary segmentation for multiple change-point detection.
\newblock {\em The Annals of Statistics}, {\bf 42}(6), 2243--2281.

\bibitem[\protect\citename{Kaul {\em et~al.\ }\relax, }2018]{kaul2018parameter}
Kaul, Abhishek, Jandhyala, Venkata~K, \& Fotopoulos, Stergios~B. 2018.
\newblock Parameter estimation for high dimensional change point regression
  models without grid search.
\newblock {\em arXiv preprint arXiv:1805.03719}.

\bibitem[\protect\citename{Killick {\em et~al.\ }\relax,
  }2012]{KillickEtal2012}
Killick, Rebecca, Fearnhead, Paul, \& Eckley, Idris~A. 2012.
\newblock Optimal detection of changepoints with a linear computational cost.
\newblock {\em Journal of the American Statistical Association}, {\bf
  107}(500), 1590--1598.

\bibitem[\protect\citename{Lee {\em et~al.\ }\relax, }2016]{lee2016lasso}
Lee, Sokbae, Seo, Myung~Hwan, \& Shin, Youngki. 2016.
\newblock The lasso for high dimensional regression with a possible change
  point.
\newblock {\em Journal of the Royal Statistical Society: Series B (Statistical
  Methodology)}, {\bf 78}(1), 193--210.

\bibitem[\protect\citename{Lee {\em et~al.\ }\relax, }2018]{lee2018oracle}
Lee, Sokbae, Liao, Yuan, Seo, Myung~Hwan, \& Shin, Youngki. 2018.
\newblock Oracle Estimation of a Change Point in High-Dimensional Quantile
  Regression.
\newblock {\em Journal of the American Statistical Association}, {\bf
  113}(523), 1184--1194.

\bibitem[\protect\citename{Leonardi \& B{\"u}hlmann,
  }2016]{leonardi2016computationally}
Leonardi, Florencia, \& B{\"u}hlmann, Peter. 2016.
\newblock Computationally efficient change point detection for high-dimensional
  regression.
\newblock {\em arXiv preprint arXiv:1601.03704}.

\bibitem[\protect\citename{Maidstone {\em et~al.\ }\relax,
  }2017]{MaidstoneEtal2017}
Maidstone, R., Hocking, T., Rigaill, G., \& Fearnhead, P. 2017.
\newblock On optimal multiple changepoint algorithms for large data.
\newblock {\em Statistics and Computing}, {\bf 27}, 519--533.

\bibitem[\protect\citename{Raskutti {\em et~al.\ }\relax,
  }2010]{raskutti2010restricted}
Raskutti, Garvesh, Wainwright, Martin~J, \& Yu, Bin. 2010.
\newblock Restricted eigenvalue properties for correlated Gaussian designs.
\newblock {\em Journal of Machine Learning Research}, {\bf 11}(Aug),
  2241--2259.

\bibitem[\protect\citename{Rigaill, }2010]{Rigaill2010}
Rigaill, G. 2010.
\newblock Pruned dynamic programming for optimal multiple change-point
  detection.
\newblock {\em arXiv preprint arXiv:1004.0887}.

\bibitem[\protect\citename{van~de Geer \& B{\"u}hlmann,
  }2009]{van2009conditions}
van~de Geer, Sara~A, \& B{\"u}hlmann, Peter. 2009.
\newblock On the conditions used to prove oracle results for the Lasso.
\newblock {\em Electronic Journal of Statistics}, {\bf 3}, 1360--1392.

\bibitem[\protect\citename{Vershynin, }2018]{vershynin2018high}
Vershynin, Roman. 2018.
\newblock {\em High-dimensional probability: An introduction with applications
  in data science}.
\newblock  Vol. 47.
\newblock Cambridge University Press.

\bibitem[\protect\citename{Wang {\em et~al.\ }\relax, }2017]{wang2017optimal}
Wang, Daren, Yu, Yi, \& Rinaldo, Alessandro. 2017.
\newblock Optimal Covariance Change Point Localization in High Dimension.
\newblock {\em arXiv preprint arXiv:1712.09912}.

\bibitem[\protect\citename{Wang {\em et~al.\ }\relax, }2018a]{wang2018optimal}
Wang, Daren, Yu, Yi, \& Rinaldo, Alessandro. 2018a.
\newblock Optimal change point detection and localization in sparse dynamic
  networks.
\newblock {\em arXiv preprint arXiv:1809.09602}.

\bibitem[\protect\citename{Wang {\em et~al.\ }\relax,
  }2018b]{wang2018univariate}
Wang, Daren, Yu, Yi, \& Rinaldo, Alessandro. 2018b.
\newblock Univariate mean change point detection: Penalization, cusum and
  optimality.
\newblock {\em arXiv preprint arXiv:1810.09498}.

\bibitem[\protect\citename{Wang {\em et~al.\ }\relax,
  }2019]{wang2019statistically}
Wang, Daren, Lin, Kevin, \& Willett, Rebecca. 2019.
\newblock Statistically and Computationally Efficient Change Point Localization
  in Regression Settings.
\newblock {\em arXiv preprint arXiv:1906.11364}.

\bibitem[\protect\citename{Wang \& Samworth, }2018]{wang2018high}
Wang, Tengyao, \& Samworth, Richard~J. 2018.
\newblock High dimensional change point estimation via sparse projection.
\newblock {\em Journal of the Royal Statistical Society: Series B (Statistical
  Methodology)}, {\bf 80}(1), 57--83.

\bibitem[\protect\citename{Wikipedia, }2020a]{wiki}
Wikipedia. 2020a.
\newblock {\em 2015 Pacific typhoon season}.
\newblock \url{https://en.wikipedia.org/wiki/2015_Pacific_typhoon_season}.

\bibitem[\protect\citename{Wikipedia, }2020b]{wikipollution}
Wikipedia. 2020b.
\newblock {\em Air pollution in Taiwan}.
\newblock \url{https://en.wikipedia.org/wiki/Air_pollution_in_Taiwan}.

\bibitem[\protect\citename{Yu, }1997]{yu1997assouad}
Yu, Bin. 1997.
\newblock Assouad, fano, and le cam.
\newblock {\em Pages  423--435 of:} {\em Festschrift for Lucien Le Cam}.
\newblock Springer.

\bibitem[\protect\citename{Zhang {\em et~al.\ }\relax, }2015]{zhang2015change}
Zhang, Bingwen, Geng, Jun, \& Lai, Lifeng. 2015.
\newblock Change-point estimation in high dimensional linear regression models
  via sparse group Lasso.
\newblock {\em Pages  815--821 of:} {\em Communication, Control, and Computing
  (Allerton), 2015 53rd Annual Allerton Conference on}.
\newblock IEEE.

\end{thebibliography}

\end{document}